\definecolor{webgreen}{rgb}{0,0.4,0}
\definecolor{webbrown}{rgb}{0.6,0,0}
\definecolor{purple}{rgb}{0.5,0,0.25}
\definecolor{darkblue}{rgb}{0,0,0.7}
\definecolor{darkred}{rgb}{0.7,0,0}
\newtheorem{lemma}{{\sc Lemma}}
\newtheorem{theorem}{{\sc Theorem}}
\newtheorem{definition}{{\sc Definition}}
\newtheorem{example}{{\sc Example}}
\newtheorem{observation}{{\sc Observation}}
\newtheorem{remark}{{\sc Remark}}
\newenvironment{proof}{\noindent {\bf \sl Proof\/}:\enspace}{\hfill $\square$ \vspace{4pt}}
\begin{document}
	
\title{\vspace{-1em}\protect\textsc{Probabilistic Fixed Ballot Rules\\
and Hybrid Domains}\thanks{We would like to thank the Editor Andr\'{e}s Carvajal, the Associate Editor Olivier Bochet, and two anonymous Referees for their constructive comments and suggestions. We are also grateful to Lars Ehlers and Andrew Mackenzie for their valuable discussions and suggestions. We also would like to thank the participants of the Workshop on Economic Design, New Delhi, the Conference on Economic Design, Budapest, the International Conference on Economic Theory and Applications, Chengdu, SAET Conference, Taipei and the Conference on Mechanism and Institution Design, Durham, for helpful comments.
Shurojit Chatterji acknowledges the support by the Ministry of Education, Singapore under the grant MOE2016-T2-1-168.
Huaxia Zeng acknowledges that his work was supported by the National Natural Science Foundation of China (No.~71803116),
the Program for Professor of Special Appointment (Eastern Scholar) at Shanghai Institutions of Higher Learning (No.~2019140015) and
the Fundamental Research Funds for the Central Universities of China (No.~2018110153).
This paper was previously circulated under the title ``\emph{Restricted Probabilistic Fixed Ballot Rules and Hybrid Domains}''
\href{https://ink.library.smu.edu.sg/soe_research/2342/}{\textcolor[rgb]{0.00,0.00,1.00}{SMU Economics and Statistics Working Paper Series, Paper No.~03-2020.}}}}
\author{{\normalsize Shurojit Chatterji\thanks{School of Economics, Singapore Management University, Singapore.},~
Souvik Roy\thanks{Economic Research Unit, Indian Statistical Institute, Kolkata.},~
Soumyarup Sadhukhan\thanks{Department of Computer Science and Automation, Indian Institute of Science},}\\
{\normalsize Arunava Sen\thanks{Economics and Planning Unit, Indian Statistical Institute, Delhi Center.}~ and
Huaxia Zeng\thanks{School of Economics, Shanghai University of Finance and Economics,
and the Key Laboratory of Mathematical Economics (SUFE), Ministry of Education, Shanghai 200433, China.}} }
\date{\normalsize \today}
\maketitle

\begin{abstract}
\noindent
We study a class of preference domains
that satisfies the familiar properties of minimal richness, diversity and no-restoration.
We show that a specific preference restriction, \emph{hybridness}, has been embedded in these domains so that the preferences are single-peaked at the ``extremes'' and unrestricted in the ``middle''.
We also study the structure of strategy-proof and unanimous Random Social Choice Functions on these domains.
We show them to be special cases of probabilistic fixed ballot rules (introduced by \citet*{EPS2002}).

\medskip
		
\noindent \textit{Keywords}: Hybridness; strategy-proofness; probabilistic fixed ballot rule
		
\noindent \textit{JEL Classification}: D71
\end{abstract}

\newpage
\section{Introduction}
Our goal in this paper is two-fold.
The first is to investigate a class of preference domains by means of some simple
and well-known axioms, without specifying any preference restriction.
We call such domains {\it hybrid domains} since, loosely speaking, the preferences of these domains are revealed to be single-peaked at the ``extremes'' and unrestricted in the ``middle''.
Our second objective is to comprehensively
analyze the class of unanimous and strategy-proof random social choice functions (RSCFs) defined on these domains.
We refer to these RSCFs as \textit{Probabilistic Fixed Ballots Rules}.

Two familiar preference domains in the literature on mechanism design in voting environments are the complete domain and the domain of single-peaked preferences.
The complete domain arises when there are no a priori restrictions on preferences (see \citet{G1973}, \citet{S1975} and \citet{G1977}).
 Single-peaked preferences on the other hand, require more structure on the set of alternatives.
They arise naturally in a variety of situations such as preference aggregation \citep{B1948}, strategic voting \citep{M1980}, public facility allocation \citep{BG2012}, fair division \citep{S1991,BJN1997} and object assignment \citep{B2019}.

In order to discuss hybrid domains, we briefly recall the definition of single-peaked preferences. The set of  alternatives is a finite set $A = \{a_1, a_2, \dots, a_m\}$ which is endowed  with the prior order $a_1 \prec a_2 \prec \dots \prec a_m$.
A preference ordering over $A$ is \emph{single-peaked} if there exists a top-ranked alternative, say $a_k$, such that preferences decline when alternatives move ``farther  away'' from $a_k$.
For instance, if ``$a_r \prec a_s \prec a_k$ or $a_k  \prec a_s \prec a_r$''
, then $a_s$ is strictly preferred to $a_r$.
A preference is \emph{hybrid} if there exist \textit{threshold} alternatives $a_{\underline{k}}$ and $a_{\overline{k}}$ with $a_{\underline{k}} \prec a_{\overline{k}}$ such that preferences over the alternatives in the interval between $a_{\underline{k}}$ and $a_{\overline{k}}$  are ``unrestricted'' relative to each other, while preferences over other alternatives retain features of single-peakedness.
Thus, the set $A$ can be decomposed  into three parts:
left interval $L = \{a_1, \dots, a_{\underline{k}}\}$, right interval $R = \{a_{\overline{k}}, \dots, a_m\}$ and
middle interval $M = \{a_{\underline{k}}, \dots, a_{\overline{k}}\}$.
Formally, a preference is \emph{$(\underline{k},\overline{k})$-hybrid} if the following holds:
(i) for a voter whose best alternative lies in $L$ (respectively in $R$), preferences over alternatives in the set $L \cup R$ are conventionally single-peaked, while  preferences over alternatives in $M$ are  arbitrary  subject to the restriction that the best alternative in $M$ is the left threshold $a_{\underline{k}}$ (respectively, the right threshold  $a_{\overline{k}}$), and
(ii) for a voter whose  peak lies in $M$, preferences restricted to $L \cup R$ are single-peaked but arbitrary over $M$. Observe that if $\underline{k}=1$ and $\overline{k}=m$, then preferences are unrestricted, while the case where  $\overline{k}-\underline{k}=1$ coincides with the case of single-peaked preferences.

%A $(\underline{k},\overline{k})$-hybrid preference is a preference ordering
%which is single-peaked everywhere except over the alternatives in the middle interval between $a_{\underline{k}}$ and $a_{\overline{k}}$ on the prior order $\prec$.
Hybrid preferences can plausibly arise when an inherently multi-dimensional model is embedded in a one-dimensional model.
For instance, to address the structure of preferences over political parties according to different aspects of a party's platform, \citet{R2015} adopts the value-restricted preferences of \citet{S1966} to introduce a domain of \emph{multidimensional (locally) single-peaked preferences}, and
shows that the domain degenerates to a union of single-peaked domains w.r.t.~multiple distinct prior orders.
Indeed, Reffgen's multidimensional (locally) single-peaked preference is a special case of our hybrid preferences.
As a second instance, in Section \ref{sec:Hybrid}, we provide an example to show that a domain of \emph{multidimensional (globally) single-peaked preferences} introduced by \citet{BGS1993} reduces to a domain of our $(\underline{k},\overline{k})$-hybrid preferences in the framework of voting under constraints studied in \citet{BMN1997}.

Hybrid preferences also arise in the public facility allocation environment.
Imagine a transportation system in a city/region which consists of two suburban zones on ``opposite'' sides of a central urban zone. The central urban zone has a well-connected transportation system. It has two hubs on its boundaries, each of which is connected to a proximate suburban zone via a unique railway line. Consider the location of a good public facility in the city. Each citizen naturally prefers the public facility being located closer to his/her own location. Thus, a citizen living in a suburban zone has a single-peaked preference on locations along the railway line towards its hub, prefers the location of the hub to any other location in the central urban zone, and has a single-peaked preference on locations along the railway line towards the other suburban zone. A citizen living in the central urban zone can have arbitrary preferences over all locations in the central zone due to the denseness of the transportation system. She also prefers a location along a suburban railway line that is closer to its hub.
All these preferences can be expressed in terms of $(\underline{k}, \overline{k})$-hybridness according to the prior order $\prec$,
where $a_1$ and $a_m$ are the ``remotest'' suburban locations and
the thresholds $a_{\underline{k}}$ and $a_{\overline{k}}$ are the two hubs.

%Imagine a world where voters care about two aspects of a candidates' platform - her position on ``economic'' issues and her position on ``social issues''. Voters' opinions
%on each issue is single-peaked but they may weight the two issues differently. It is likely that candidates on both the left and right ``extremes'' have similar positions on both
%issues. On the other hand, candidates in the ``center'' may be left on one issue and right in the other. When viewed in a single dimension, left-right spectrum voters
%preferences on candidates will be hybrid with single-peakedness breaking down with respect to candidates in the center.

Observe that the domain of \emph{all} $(1,m)$-hybrid preferences is the unrestricted domain. It follows
that {\it any} domain is a subset of  \emph{the $(\underline{k}, \overline{k})$-hybrid domain} (i.e., the domain of \emph{all} $(\underline{k}, \overline{k})$-hybrid preferences) for some $1 \leq \underline{k}< \overline{k}\leq m$.\footnote{By finiteness of $A$, we can push the gap between $\overline{k}$ and $\underline{k}$ to the minimum, in the sense that the domain is not contained in the $(\underline{k}', \overline{k}')$-hybrid domain for any $\underline{k} \leq \underline{k}'< \overline{k}'\leq \overline{k}$ such that $\overline{k}'-\underline{k}' < \overline{k}-\underline{k}$.}
One of our main results (Theorem \ref{thm:domaincharacterization}) is that every domain that satisfies three familiar axioms must be \emph{a $(\underline{k}, \overline{k})$-hybrid domain} for some $1 \leq \underline{k}< \overline{k}\leq m$, i.e.,
a domain of $(\underline{k}, \overline{k})$-hybrid preferences that \emph{in addition}, satisfies two special properties in terms of the {\it strong connectedness} introduced in \citet{CSS2013}. The details of the two properties are described in Section \ref{sec:Hybrid}. We note here that the domain of all $(\underline{k}, \overline{k})$-hybrid preferences is of course a $(\underline{k}, \overline{k})$-hybrid domain, and
the domain of single-peaked preferences is a $(k,k+1)$-hybrid domain for all $1 \leq k < m$. The multiple single-peaked domain introduced by \citet{R2015} is also included in the class of our hybrid domains, but
the semi-single-peaked domain of \citet{CSS2013}, is not.

The three familiar axioms that we impose are {\it minimal richness}, {\it diversity} and {\it no-restoration}. The first axiom requires every alternative
be first-ranked by some preference in the domain.
The second requires the existence of two completely reversed preferences in the domain.
The third axiom requires that for each given pair of preferences and each given pair of alternatives,
there exists a path from one preference of the given pair to the other in the domain, by a sequence of specific preference switches,
such that the relative ranking of the given pair of alternatives is switched at most once along the path.
The first two axioms focus on the richness of a domain in question, while the third one not only is concerned with the richness of the domain (the existence of a path in the domain that reconciles the difference between two preferences), but also ensures that all preferences of the domain be well organized; for instance, if two given alternatives are identically ranked in two given preferences,
they must be identically ranked in every preference along one path connecting the two given preferences.\footnote{In the cardinal model, the valuation space is usually assumed to be a convex set.
Given a valuation vector $\theta_i \in \mathbb{R}^m$, let $\theta_{ia}$ be the valuation of the alternative $a$.
Given two valuation vectors $\theta_i, \theta_i'\in \mathbb{R}^{m}$,
consider all convex combinations of these two valuation vectors $\{\theta_i^t\equiv (1-t)\theta_i+t\theta_i': 0\leq t \leq 1\}$, which formulate a line in the valuation space to reconcile the difference between $\theta_i$ and $\theta_i'$.
One can observe that given $a, b \in A$,
if $\theta_{ia} \geq \theta_{ib}$ and $\theta_{ia}' \geq \theta_{ib}'$,
then $\theta_{ia}^t \geq \theta_{ib}^t$ for all $0\leq t\leq 1$;
if $\theta_{ia} > \theta_{ib}$ and $\theta_{ia}' < \theta_{ib}'$, then
there exists a unique $0<\hat{t}<1$ such that
$\theta_{ia}^t > \theta_{ib}^t$ for all $0\leq t < \hat{t}$,
$\theta_{ia}^{\hat{t}} = \theta_{ib}^{\hat{t}}$,
and $\theta_{ia}^s < \theta_{ib}^s$ for all $\hat{t} < s \leq 1$. Therefore, the axiom of no-restoration in our ordinal model is analogous to but weaker than the convex-set assumption of the valuation space in the cardinal model.}
No-restoration has been shown to be a key property for the equivalence between local and global strategy-proofness -- see \citet{S2013}, \citet{KRSYZ2021a, KRSYZ2021b} and \citet{HK2020}.
Moreover, no-restoration is also shown to be a sufficient condition that endogenizes the tops-only property in all unanimous and strategy-proof RSCFs -- see \citet{CZ2018} and \citet{KRSYZ2021b}.
Related conditions on domains such as {\it connectedness} have been used extensively in the literature on Condorcet domains \citep[e.g.,][]{M2009,P2018,PS2019}.
Unlike the restricted domains studied in the literature, the axiom of no-restoration does not exogenously specify any explicit preference restriction on the domain; for instance, a no-restoration domain can be as permissive as the unrestricted domain, or be as restrictive as a single-peaked domain or a single-crossing domain of \citet{S2009} or \citet{C2012}.
Therefore, no-restoration creates a unified framework to cover some important domains in the literature, and our Theorem \ref{thm:domaincharacterization} reveals the key common feature, hybridness, satisfied by all these domains.
An important feature of Theorem \ref{thm:domaincharacterization}  is that the condition on the domain does not specify an underlying structure of single-peakedness or threshold alternatives. These are derived endogenously from our hypotheses.

The paper also investigates the structure of unanimous and strategy-proof RSCFs on regular domains, i.e., domains satisfying the three aforementioned familiar axioms. A RSCF associates a lottery over alternatives to each profile of preferences.
Randomization is a way to resolve conflicts of interest by ensuring a measure of ex-ante fairness in the collective decision process.
More importantly, it has recently been shown that randomization significantly enlarges the scope of designing well-behaved mechanisms, e.g., the compromise RSCF of \citet{CSZ2014} and the maximal-lottery mechanism of \citet{BBS2016}.

In order to define the notion of strategy-proofness, we follow the standard approach of \citet{G1977}.   For every voter, truthfully revealing her preference ordering must yield a lottery that stochastically dominates the lottery arising from any unilateral  misrepresentation of preferences according to the sincere preference. Unanimity is a weak efficiency  requirement which says that the alternative that is unanimously best at a preference profile is selected with probability one.
	
According to Theorem \ref{thm:rulecharacterization}, a RSCF defined on a regular domain, which is further revealed to be a $(\underline{k}, \overline{k})$-hybrid domain, is unanimous and strategy-proof if and only if it is a \emph{Probabilistic Fixed Ballot Rule} (or PFBR).
A PFBR is specified by a collection of probability distributions $\beta_S$, where $S$ is a coalition of voters, over the set of alternatives.
We formally call $\beta_S$ a \emph{probabilistic ballot}.
If $\overline{k} - \underline{k}=1$, then a PFBR reduces to a \emph{fixed-probabilistic-ballots rule} that is originally introduced by \citet{EPS2002}.
However, if $\overline{k} - \underline{k}>1$, then a PFBR requires an additional  restriction on the probabilistic ballots:
each voter $i$ has a fixed probability weight $\varepsilon_i \geq 0$ such that the probability of the right interval $R$ according to $\beta_S$ is the total weight $\sum_{i \in S} \varepsilon_i$ of the voters in $S$ and that of the left interval $L$ is the total weight  $\sum_{i \notin S} \varepsilon_i$ of the voters outside $S$.
Recall the equivalence between local and global strategy-proofness and the tops-only property related to  the no-restoration domains.
It is worth mentioning that Theorem \ref{thm:rulecharacterization} decodes that
these two important properties emerge from the class of PFBRs on a no-restoration domain that in addition satisfies minimal richness and diversity, and
and more importantly provides a clear, operable and exhaustive guideline for a social planner's task of designing strategy-proof mechanisms.
	
The paper is organized as follows.
Section \ref{sec:literature} reviews the literature,
while Section \ref{sec:model} sets out the model and no-restoration domains.
Sections \ref{sec:Hybrid} and \ref{sec_charaterization} introduce hybrid preferences, PFBRs and the characterization results,
while Section \ref{sec:conclusion} concludes.

\subsection{Relationship with the Literature}\label{sec:literature}

	The analysis  of strategy-proof deterministic social choice functions on single-peaked domains was initiated by \citet{M1980} and developed further by \citet{BGS1993}, \citet{C1997} and \citet{W2011}. In the deterministic setting, \citet{NP2007}, \citet{CSS2013}, \citet{R2015}, \citet{CM2018}, \citet{AR2020} and \citet{BM2020} analyze the structure of unanimous and strategy-proof social choice functions on domains closely related to single-peakedness.

The structure of unanimous and strategy-proof RSCFs on single-peaked domains was first studied by \citet{EPS2002}. They considered the case where the set of alternatives is an interval in the real line and characterized the unanimous and strategy-proof RSCFs in terms of probabilistic fixed ballot rules.
Recently, in the case of finite alternatives, \citet{RS2019} establish the same characterization result on some proper subdomains of single-peaked preferences.\footnote{Given $m\geq 3$ alternatives, there are in total $2^{m-1}$ preferences in the single-peaked domain. The cardinality of a subdomain investigated in \citet{RS2019} can be reduced to $2(m-1)$.}
	
Recently, \citet{PRS2021} have considered the case where the set of alternatives is endowed with a graph structure.
Single-peakedness is defined w.r.t.~such graphs as in \citet{D1982} and \citet{CSS2013}.
\citet{PRS2021} investigate the structure of unanimous and strategy-proof RSCFs.
Their characterization result (Theorem 5.6 of \citet{PRS2021}) implies our Theorem \ref{thm:rulecharacterization} for a special graph structure.
However, the extension of our result in Theorem \ref{thm:rulecharacterization} is more general than their result since we do not assume a prespecified graph over the set of alternatives.
In particular, our result covers many domains that are excluded by theirs.
Finally, we emphasize that the motivation, formulation, and proof techniques in the two papers are completely different.

	\section{The Basic Model}\label{sec:model}

Let $A = \{a_1, a_2, \dots, a_m\}$ be a finite set of alternatives with $m \geq 3$.
Let $N = \{1, 2, \dots, n\}$ be a finite set of voters with $n \geq 2$. Each voter $i$ has a preference ordering $P_i$ (i.e., a complete, transitive and antisymmetric binary relation) over the alternatives.
We interpret $a_{s}P_{i}a_{t}$ as  ``$a_{s}$ is strictly preferred to $a_{t}$ according to $P_{i}$".
For each $1 \leq k \leq m$, $r_{k}(P_{i})$ denotes the $k$th ranked alternative in $P_{i}$.
We use the following notational convention:  $P_{i}\equiv (a_k\,a_s\,a_t\,\cdots) $ refers to a preference ordering (or a linear order) where  $a_k$ is first-ranked, $a_s$
is second-ranked, and $a_t$ is third-ranked, while the rest of the rankings in $P_{i}$ are arbitrary.
We denote the set of all preference orderings by $\mathbb{P}$, which we call \textbf{the complete domain}.
 A domain $\mathbb{D}$ is a subset of $\mathbb{P}$.
In particular, if $\mathbb{D} \subset \mathbb{P}$, it is called a \emph{restricted domain}.\footnote{Henceforth, let $\subset$ and $\subseteq$ denote the strict and weak inclusion relations respectively.}
A preference profile is an $n$-tuple of preferences, i.e., $P = (P_{1}, P_{2}, \dots, P_n) = (P_{i}, P_{-i}) \in \mathbb{D}^n$.
	
We consider domains that satisfy certain properties.

\begin{definition} \label{def:minrich}
A domain $\mathbb{D}$ is \textbf{minimally rich} if for each $a_k \in A$, there exists a preference $P_i \in \mathbb{D}$ such that $r_1(P_i) = a_k$.
\end{definition}

\begin{definition} \label{def:div}
A domain $\mathbb{D}$ satisfies \textbf{diversity} if there exist preferences $P_i, P_i' \in {\mathbb D}$ that are \textbf{completely  reversed} i.e. for all $a_s, a_t \in A$, we have $[a_s P_i a_t] \Leftrightarrow [a_tP_i'a_s]$.
\end{definition}
	
Throughout the paper, we will \emph{fix} two specific completely reversed preferences $\underline{P}_i$ and  $\overline{P}_i$ such that $a_k \underline{P}_i a_{k+1}$ and $a_{k+1} \overline{P}_i a_k$ for all $k = 1, \dots, m-1$.
Whenever a domain ${\mathbb D}$ is assumed to satisfy diversity, we will assume w.l.o.g.~that $\underline{P}_i, \overline{P}_i \in {\mathbb D}$.

A more subtle property that will be imposed on domains considered in the paper, is that of  \emph{no-restoration}. This condition was introduced by \citet{S2013}. As is well known, this condition is
 satisfied by many important voting domains, e.g., the complete domain, the single-peaked domain and the single-crossing domain.

Two distinct preferences $P_i, P_i'\in \mathbb{D}$ are \textit{adjacent}, denoted $P_i \sim P_i'$,
if there exist $a_s, a_t \in A$ such that (i) $r_k(P_i) = r_{k+1}(P_i') = a_s$ and $r_{k}(P_i') =r_{k+1}(P_i) =  a_t$ for some $1 \leq k \leq m-1$, and
(ii) $r_{l}(P_i) = r_{l}(P_i')$ for all $l \notin \{k, k+1\}$.
In other words, alternatives $a_s$ and $a_t$ are contiguously ranked in  $P_i$ and $P_i'$ and swapped across the two preferences,
while the ordering of all other alternatives is unchanged.
Given distinct preferences $P_i, P_i'\in \mathbb{D}$,
a sequence of non-repeated preferences $(P_i^1, \dots, P_i^v)$ is called a \textit{path} connecting $P_i$ and $P_i'$ if
$P_i^1 = P_i$, $P_i^v = P_i'$,
$P_i^k \in \mathbb{D}$ for all $1< k < v$, and $P_i^k \sim P_i^{k+1}$ for all $k = 1, \dots, v-1$.
%A domain $\mathbb{D}$ is said \emph{connected} if every pair of two distinct preferences is connected by a path in the domain.
Fixing two distinct preferences $P_i, P_i' \in \mathbb{D}$,
a path $\pi = (P_i^1, \dots, P_i^v)$ connecting $P_i$ and $P_i'$, and two distinct alternatives $a_s, a_t \in A$,
we say that $\pi$ has \textit{$\{a_s,a_t\}$-restoration} if there exist $1 \leq \kappa < \ell < \eta \leq v$ such that
either $a_s P_{\kappa} a_t$, $a_tP_{\ell} a_s$ and $a_sP_{\eta}a_t$, or $a_t P_{\kappa}a_s$, $a_sP_{\ell} a_t$ and $a_tP_{\eta}a_s$ occur.
%
%
%\begin{align*}
%[a_sP_i^{k^{\ast}}a_t \;\textrm{and}&\; a_tP_i^{k^{\ast}+1}a_s\; \textrm{for some}\; 1 \leq k^{\ast}< v] \\
%&\Rightarrow [a_sP_i^ka_t \; \textrm{for all}\; k=1, \dots, k^{\ast},\; \textrm{and}\; a_tP_i^{l}a_s\; \textrm{for all}\; l = k^{\ast}+1,\dots, v], \; \textrm{and}\\[0.3em]
%[a_tP_i^{k^{\ast}}a_s \;\textrm{and}&\; a_sP_i^{k^{\ast}+1}a_t\; \textrm{for some}\; 1 \leq k^{\ast}< v] \\
%&\Rightarrow [a_tP_i^ka_s \; \textrm{for all}\; k=1, \dots, k^{\ast},\; \textrm{and}\; a_sP_i^{l}a_t\; \textrm{for all}\; l = k^{\ast}+1,\dots, v].
%\end{align*}
Equivalently, \textbf{no $\{a_s,a_t\}$-restoration} implies that the relative ranking of $a_s$ and $a_t$ is switched  \emph{at most} once on the path $\pi$.
In particular, if $a_s$ and $a_t$ are identically ranked in $P_i$ and $P_i'$, then their relative ranking does not change along the path.

\begin{definition}\label{def:connectedness}
A domain $\mathbb{D}$ is a \textbf{no-restoration domain} if given distinct $P_{i}, P_{i}' \in \mathbb{D}$ and $a_s, a_t \in A$,
there exists a path connecting $P_{i}$ and $P_{i}'$ that has no $\{a_s,a_t\}$-restoration.
%Furthermore, domain $\mathbb{D}$ is a \textbf{non-restorated\textsuperscript{$+$} domain} if given distinct $P_{i}, P_{i}' \in \mathbb{D}$,
%there exists a path $\{P_{i}^{k}\}_{k=1}^{v} \subseteq \mathbb{D}$ connecting $P_{i}$ and $P_{i}'$ that has no $\{a_s,a_t\}$-restoration for all $a_s, a_t \in A$.
\end{definition}

For convenience, we will say that a domain is {\bf regular} if it satisfies minimal richness, diversity and no-restoration.

We now introduce random and deterministic social choice functions and some well-known properties related to them.
Let $\Delta(A)$ denote the space of all lotteries over $A$.
	An element $\lambda \in \Delta(A)$ is a lottery or a probability distribution over $A$,
	where $\lambda(a_k)$ denotes the probability received by alternative $a_{k}$.
	For notational convenience, we let $\bm{e}_{a_k}$ denote the  degenerate lottery where alternative $a_{k}$ receives probability one.
	A \textbf{Random Social Choice Function} (or RSCF) is a map $\varphi: \mathbb{D}^n \rightarrow \Delta(A)$
	which associates each preference profile to a  lottery.
	Let $\varphi_{a_k}(P)$ denote the probability assigned to $a_k$ by $\varphi$ at the preference profile $P$. If a RSCF selects a degenerate lottery at every preference profile, it is called a \textbf{Deterministic Social Choice Function} (or DSCF). More formally, a DSCF is a mapping $f: \mathbb{D}^n \to A$.
	
	In this paper, we impose two axioms on RSCFs: unanimity and strategy-proofness.
	
	\begin{definition} \label{def:unanimity}
	A RSCF $\varphi: \mathbb{D}^{n} \rightarrow \Delta(A)$ is \textbf{unanimous} if for all $P \in \mathbb{D}^{n}$ and $a_k \in A$,
	$[r_1(P_{i}) = a_k\; \textrm{for all}\; i \in N] \Rightarrow [\varphi(P) = \bm{e}_{a_k}]$.
	\end{definition}

	We adopt the first-order stochastic dominance  notion of strategy-proofness proposed by \citet{G1977}.
    It requires that the lottery from truthtelling  stochastically dominate the lottery obtained by any  misrepresentation by any voter at any possible profile of other voters' preferences.

\begin{definition} \label{def:sp}
A RSCF $\varphi: \mathbb{D}^{n} \rightarrow \Delta(A)$ is \textbf{strategy-proof} if for all $i \in N, P_i, P_i' \in \mathbb{D}$ and $P_{-i} \in \mathbb{D}^{n-1}$,
	$\varphi(P_i, P_{-i})$ stochastically dominates $\varphi(P_i', P_{-i})$ according to $P_i$, i.e.,
	$\sum_{t=1}^{k}\varphi_{r_{t}(P_{i})}(P_{i}, P_{-i}) \geq \sum_{t=1}^{k}\varphi_{r_{t}(P_{i})}(P_{i}', P_{-i})$ for all $k= 1, \dots, m$.
\end{definition}	

For future reference, we introduce the  tops-onlyness property of RSCFs. It requires the outcome at each preference profile to depend only on the top-ranked alternatives at that preference profile.

\begin{definition} \label{def:tops}
A  RSCF $\varphi: \mathbb{D}^n \rightarrow \Delta(A)$ satisfies the \textbf{tops-only property}
	if for all $P, P'\in \mathbb{D}^n$, we have $[r_1(P_i) = r_1(P_i')\; \textrm{for all}\; i \in N] \Rightarrow [\varphi(P) = \varphi(P')]$.
\end{definition}

An important class of unanimous and strategy-proof RSCFs are random dictatorships.
Formally, a RSCF $\varphi: \mathbb{D}^n \rightarrow \Delta(A)$ is a \textbf{random dictatorship}
if there exists  $\varepsilon_i \geq 0$ for each $i \in N$ with $\sum_{i \in N}\varepsilon_i = 1$
such that $\varphi(P) = \sum_{i \in N}\varepsilon_i\, \bm{e}_{r_1(P_i)}$ for all $P \in \mathbb{D}^n$.

We now proceed to our results.

	\section{Hybrid Domains and their Salience}
	\label{sec:Hybrid}

Our goal in this section is to show that domains that satisfy the properties of minimal richness, diversity and no-restoration have a specific structure.

An important restricted domain is the domain of  single-peaked preferences introduced by \citet{B1948}.
    We fix the \emph{natural order}~$\prec$ over $A$, i.e., $a_k \prec a_{k+1}$ for all $k = 1, \dots, m-1$.\footnote{For notational convenience, let $a_s \preceq a_t$ denote either $a_s \prec a_t$ or $a_s = a_t$. Henceforth, we write $\prec$ to denote the natural order.}
    Given $a_s, a_t \in A$ with $a_s \preceq a_t$, let $[a_s, a_t] = \{a_k \in A: a_s \preceq a_k\; \textrm{and}\; a_k \preceq a_t\}$ denote the interval of alternatives located between $a_s$ and $a_t$ on~$\prec$.
    A preference $P_i$ is \textbf{single-peaked} w.r.t.~$\prec$ if for all $a_s, a_t \in A$,
	we have $\left[a_s \prec a_t \prec r_1(P_i)\; \textrm{or}\;  r_1(P_i) \prec a_t \prec a_s\right] \Rightarrow [a_tP_ia_s]$.
	Let $\mathbb{D}_{\prec}$ denote \textbf{the single-peaked domain} which contains \emph{all} single-peaked preferences w.r.t.~$\prec$.
    The single-peaked domain is regular: Minimal richness and diversity follow immediately, and no-restoration is shown in Proposition 4.2 of \citet{S2013}.

We show that regular domains may be viewed as a weakening of single-peakedness
where the requirement of single-peakedness is violated over a subset of alternatives that lie in the ``middle'' of the natural order~$\prec$, and holds everywhere else.
We call this preference restriction ``hybridness'' to emphasize the coexistence of such violations with other features of single-peakedness.
	
Recall the natural order $\prec$ over $A$. Fix two alternatives $a_{\underline{k}}$ and $a_{\overline{k}}$ with $a_{\underline{k}} \prec a_{\overline{k}}$, which we refer to as the \emph{left threshold} and the \emph{right threshold}, respectively.
	We define three subsets of $A$ using these two thresholds:
	\textit{left interval} $L= [a_1, a_{\underline{k}}]$, \textit{right interval} $R = [a_{\overline{k}}, a_m]$ and
\textit{middle interval} $M  = [a_{\underline{k}}, a_{\overline{k}}]$.\footnote{Note that $L \cap M = \{a_{\underline{k}}\}$, $R \cap M = \{a_{\overline{k}}\}$ and $L \cap R = \emptyset$.}  In what follows, we present the structure of  a hybrid preference ordering.
	
\begin{figure}[t]
\centering
\includegraphics[width=0.9\textwidth]{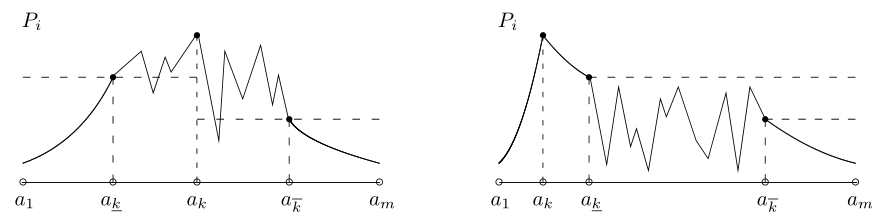}\\[-0.9em]
\caption{A graphic illustration of two hybrid preference orderings}\label{fig:illustration}
\end{figure}

	Consider a preference ordering  whose peak belongs to $M$ (see the first diagram of Figure \ref{fig:illustration}).
	The ranking of the  alternatives  in $M$ is completely arbitrary,
	while the ranking of the alternatives  in $L$ and $R$ follows the single-peakedness restriction w.r.t.~$\prec$.
    In other words, the only restriction  that the  preference ordering satisfies is that preference declines as one moves from $a_{\underline{k}}$ towards $a_1$, or  from $a_{\overline{k}}$ towards $a_m$.
	Note that this allows some alternatives in $L$ or $R$ be ranked above some alternatives in $M$.
	
	Next, consider a preference ordering  whose peak belongs to $L$ (see the second diagram of Figure \ref{fig:illustration}). The ranking of the alternatives in $L$ and $R$ follows single-peakedness w.r.t.~$\prec$. In other words, preference declines as one moves from the peak towards $a_1$ or  $a_{\underline{k}}$, or from $a_{\overline{k}}$ towards $a_m$.
	Furthermore, all alternatives in $M$  are ranked below $a_{\underline{k}}$ in an arbitrary manner.
	Notice that an alternative in $R$ may be ranked above some alternative in $M$, but  can never be ranked above $a_{\underline{k}}$.
	For a preference ordering  with the peak in $R$, the restriction is analogous.\medskip

	The formal definition of a hybrid preference is given below.
	
	\begin{definition}\label{defn:hybriddomain}
		Given the natural order $\prec$ and $1\leq \underline{k}< \overline{k}\leq m$,
   a preference $P_i$ is called a \bm{$(\underline{k},\overline{k})$}\textbf{-hybrid} preference
		if it satisfies the following two conditions:
\begin{itemize}
\item[\rm (i)] for all $a_r,a_s \in L$ or $a_r,a_s \in R$,\\
$[a_r \prec a_s \prec r_1(P_i)\; \textrm{or}\;\, r_1(P_i)\prec a_s\prec a_r] \Rightarrow [a_sP_ia_r]$, and
			
\item[\rm (ii)] $[r_1(P_i) \in L] \Rightarrow \big[a_{\underline{k}} P_i a_l \mbox{ for all } a_l \in M \backslash \{a_{\underline{k}}\}\big]$ and \\
$[r_1(P_i) \in R] \Rightarrow \big[a_{\overline{k}} P_i a_r \mbox{ for all } a_r \in M \backslash \{a_{\overline{k}} \}\big]$.
		\end{itemize}
Let $\mathbb{D}_{\prec} (\underline{k}, \overline{k})$ denote \textbf{the $\bm{(\underline{k}, \overline{k})}$-hybrid domain} containing all $(\underline{k}, \overline{k})$-hybrid preference orderings.
	\end{definition}

We here provide one example to explain how hybrid preferences arise in the model of voting under constraints studied in \citet{BMN1997}.

\begin{figure}[t]
\centering
\includegraphics[width=0.55\textwidth]{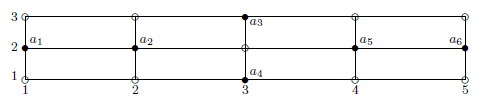}\\[-0.5em]
\caption{The Cartesian product set $X$ and the feasible set $A$}\label{fig:2lines}
\end{figure}

\begin{example}\rm
Given two component sets $X_1 = \{1, 2, 3, 4, 5\}$ and $X_2 = \{1, 2, 3\}$,
let $X = X_1 \times X_2$ denote a Cartesian product set.
Accordingly, we can locate all elements of $X$ on a grid of Figure \ref{fig:2lines}.
A preference $P_i$ over $X$, say $r_1(P_i) = x \equiv (x_1, x_2)$, is \emph{multidimensional single-peaked}  if for all distinct  $y\equiv (y_1, y_2) \in X$ and $z \equiv (z_1, z_2) \in X$,
we have $[z_k \leq y_k \leq x_k\; \textrm{or}\; x_k \leq y_k \leq z_k\; \textrm{for each}\; k \in \{1,2\}] \Rightarrow [yP_iz]$.
Meanwhile, let $A = \{a_1, a_2, a_3, a_4, a_5, a_6\} \subset X$ be the set of \emph{feasible} alternatives,
which are depicted by black nodes in Figure \ref{fig:2lines}.

Note that in a multidimensional single-peaked preference, (i) if $a_1$ is first-ranked,
then $a_2$ must be second-ranked within the feasible set $A$, and $a_5$ is preferred to $a_6$,
(ii) if $a_2$ is first-ranked, then $a_5$ is preferred to $a_6$,
and (iii) if $a_3$ is first-ranked, then $a_2$ is better than $a_1$, and $a_5$ is better than $a_6$.
Analogous preference restrictions over the ranking of feasible alternatives are observed for multidimensional single-peaked preferences with peaks $a_6, a_5$ and $a_4$.
These observations coincide with the preference restriction of $(2,5)$-hybridness
if we rearrange all feasible alternatives according to the natural order $\prec$.
In conclusion, when restrict attention to multidimensional single-peaked preferences whose peaks are feasible,
each induced preference over the feasible alternatives is a $(2,5)$-hybrid preference.
\hfill$\Box$
\end{example}

Indeed, the $(\underline{k}, \overline{k})$-hybrid domain $\mathbb{D}_{\prec} (\underline{k}, \overline{k})$ is a regular domain.\footnote{It is clear that the $(\underline{k}, \overline{k})$-hybrid domain $\mathbb{D}_{\prec} (\underline{k}, \overline{k})$ satisfies minimal richness and diversity.
The verification of no-restoration is put in Clarification 1 of Appendix \ref{app:clarification}.}
Note that $\mathbb{D}_{\prec} (\underline{k}, \overline{k})$ always includes the single-peaked domain $\mathbb{D}_{\prec}$, and reduces to the single-peaked domain $\mathbb{D}_{\prec}$ when $\overline{k}-\underline{k} = 1$.
In the other extreme case, $\overline{k}-\underline{k} = m-1$ (equivalently, $\underline{k} = 1$ and $\overline{k} = m$), the hybrid domain $\mathbb{D}_{\prec} (\underline{k}, \overline{k})$ expands  all the way to the complete domain $\mathbb{P}$.
This, in conjunction with finiteness of $A$, implies that for an arbitrary domain $\mathbb{D}$,
there always exist some $1 \leq \underline{k}< \overline{k} \leq m$ such that $\mathbb{D}\subseteq \mathbb{D}_{\prec} (\underline{k}, \overline{k})$, and $\mathbb{D} \nsubseteq (\underline{k}', \overline{k}')$ for any $\underline{k} \leq \underline{k}'< \overline{k}' \leq \overline{k}$ with
$\overline{k}'-\underline{k}'< \overline{k}-\underline{k}$.

We furthermore identify a class of domains of $(\underline{k}, \overline{k})$-hybrid preferences that satisfy a richness condition in terms of strong connectedness introduced by \citet{CSS2013}.
Given a domain $\mathbb{D}$,
two distinct alternatives $a_s, a_t \in A$ are said \textbf{strongly connected} (according to $\mathbb{D}$), denoted $a_s \approx a_t$, if
there exist $P_i, P_i' \in \mathbb{D}$ such that $r_1(P_i) = r_2(P_i')=a_s$, $r_1(P_i') = r_2(P_i) = a_t$ and $r_k(P_i) = r_k(P_i')$ for all $k = 3, \dots, m$.
One would immediately notice that strong connectedness has a close relation to the notion of adjacency between two preferences, and
it concentrates on the pair of adjacent preferences in the domain that differ in the peaks.
Furthermore, given an arbitrary non-empty subset $B \subseteq A$, we can induce a graph $G_{\approx}^B(\mathbb{D})$ such that
the vertex set is $B$, and two vertices of $B$ form an edge if and only if they are strongly connected.
A sequence of non-repeated vertices $(x_1, \dots, x_p)$ in $G_{\approx}^B(\mathbb{D})$ is called a \textbf{vertex-path} of $G_{\approx}^B(\mathbb{D})$ if
$x_{k} \approx x_{k+1}$ for all $k = 1, \dots, p-1$.
Accordingly, $G_{\approx}^B(\mathbb{D})$ is a \textbf{connected} graph if for all $a_s, a_t \in B$,
there exists a vertex-path $(x_1, \dots, x_p)$ connecting $a_s$ and $a_t$, i.e.,
$x_1 = a_s$, $x_p = a_t$ and $x_{k} \approx x_{k+1}$ for all $k = 1, \dots, p-1$.\footnote{As shown by Lemma \ref{lem:vertex-path} of Appendix \ref{app:domaincharacterization}, a minimally rich and no-restoration domain $\mathbb{D}$ induces a connected graph $G_{\approx}^A(\mathbb{D})$.
\citet{CSS2013} call a domain $\mathbb{D}$ a \textit{strongly path-connected} domain if $G_{\approx}^A(\mathbb{D})$ is a connected graph.}
Last, a vertex $a_s \in B$ is called a \textbf{leaf} in $G_{\approx}^B(\mathbb{D})$ if there exists a \emph{unique} $a_t \in B$ such that $a_s \approx a_t$.

\begin{definition}\label{def:hybrid*}
Given $1 \leq \underline{k}<\overline{k} \leq m$,
a domain $\mathbb{D} \subseteq \mathbb{D}_{\prec} (\underline{k}, \overline{k})$ is called \textbf{a \bm{$(\underline{k}, \overline{k})$}-hybrid domain},
if it satisfies the following two conditions:
\begin{itemize}
\item[\rm (i)] \textbf{Path-inclusion condition}:
for each $1< k < m$, $a_k$ is contained in a vertex-path connecting $a_1$ and $a_m$,
and

\item[\rm (ii)] \textbf{No-leaf condition}:
if $\overline{k}-\underline{k}>1$,
the subgraph $G_{\approx}^{M}(\mathbb{D})$ has no leaf.
\end{itemize}
Henceforth, a domain $\mathbb{D}$ is simply called \textbf{a hybrid domain}
if there exist $1 \leq \underline{k}< \overline{k} \leq m$ such that $\mathbb{D} \subseteq \mathbb{D}_{\prec} (\underline{k}, \overline{k})$, and $\mathbb{D}$ satisfies the path-inclusion and no-leaf conditions.
\end{definition}

We provide one example of a hybrid domain.

\begin{table}[t]
\begin{center}
\begin{tabular}{cccccccccccc}
  $P_1$ & $P_2$ & $P_3$ & $P_4$ & $P_5$ & $P_6$ & $P_7$ & $P_8$ & $P_9$ & $P_{10}$ & $P_{11}$ & $P_{12}$ \\
  $a_1$ & $a_2$ & $a_2$ & $a_2$ & $a_3$ & $a_3$ & $a_4$ & $a_4$ & $a_5$ & $a_5$    & $a_5$    & $a_6$    \\[-0.3em]
  $a_2$ & $a_1$ & $a_3$ & $a_4$ & $a_2$ & $a_5$ & $a_2$ & $a_5$ & $a_3$ & $a_4$    & $a_6$    & $a_5$    \\[-0.3em]
  $a_3$ & $a_3$ & $a_1$ & $a_1$ & $a_1$ & $a_2$ & $a_1$ & $a_2$ & $a_2$ & $a_2$    & $a_4$    & $a_4$    \\[-0.3em]
  $a_4$ & $a_4$ & $a_4$ & $a_3$ & $a_4$ & $a_1$ & $a_3$ & $a_1$ & $a_1$ & $a_1$    & $a_3$    & $a_3$    \\[-0.3em]
  $a_5$ & $a_5$ & $a_5$ & $a_5$ & $a_5$ & $a_4$ & $a_5$ & $a_3$ & $a_4$ & $a_3$    & $a_2$    & $a_2$    \\[-0.3em]
  $a_6$ & $a_6$ & $a_6$ & $a_6$ & $a_6$ & $a_6$ & $a_6$ & $a_6$ & $a_6$ & $a_6$    & $a_1$    & $a_1$
\end{tabular}
\caption{Domain $\mathbb{D}$}\label{tab:hybrid*}
\end{center}
\end{table}

\begin{figure}[t]
\centering
\includegraphics[width=0.42\textwidth]{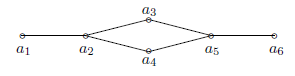}\\[-1em]
\caption{Graph $G_{\approx}^A(\mathbb{D})$}\label{fig:hybrid*}
\end{figure}

\begin{example}\label{exm:hybrid*}\rm
Fix $A = \{a_1, a_2, a_3, a_4, a_5, a_6\}$.
Let domain $\mathbb{D}$ consist of the 12 preferences in Table \ref{tab:hybrid*} and
graph $G_{\approx}^A(\mathbb{D})$ be specified in Figure \ref{fig:hybrid*}.
According to Table \ref{tab:hybrid*} and Figure \ref{fig:hybrid*},
one can easily tell that $\mathbb{D} \subseteq \mathbb{D}_{\prec}(2,5)$, and both the path-inclusion and no-leaf conditions are satisfied.
Therefore, $\mathbb{D}$ is a $(2, 5)$-hybrid domain.
Also, note that $\mathbb{D}$ is minimally rich and contains a pair of completely reversed preferences, $P_1$ and $P_{12}$. However, $\mathbb{D}$ is \emph{not} a regular domain, since it violates the no-restoration condition, for instance, there exists no path in $\mathbb{D}$ connecting $P_1$ and $P_{12}$.
\hfill$\Box$
\end{example}

\begin{remark}\rm
Fixing a $(\underline{k}, \overline{k})$-hybrid domain $\mathbb{D}$,
if $\overline{k}-\underline{k} =1$, condition (ii) of Definition \ref{def:hybrid*} is vacuously satisfied and $\mathbb{D}$ degenerates to a domain of single-peaked preferences w.r.t.~$\prec$;
if $\overline{k}-\underline{k} >1$, condition (ii) of Definition \ref{def:hybrid*} implies $\mathbb{D} \nsubseteq \mathbb{D}_{\prec}(k, k')$ for any $\underline{k} \leq k < k'< \overline{k}$ or $\underline{k} < k < k' \leq \overline{k}$.
\hfill$\Box$
\end{remark}

\begin{remark}\rm
The $(\underline{k}, \overline{k})$-hybrid domain $\mathbb{D}_{\prec}(\underline{k}, \overline{k})$ is naturally a $(\underline{k}, \overline{k})$-hybrid domain, since
either $\overline{k}-\underline{k}=1$ and $G_{\approx}^A\big(\mathbb{D}_{\prec}(\underline{k}, \overline{k})\big)$ is a line where all alternatives are located in the order of~$\prec$,
or $\overline{k}-\underline{k}>1$ and
$G_{\approx}^A\big(\mathbb{D}_{\prec}(\underline{k}, \overline{k})\big)$ is simply a combination
of the line $(a_1, \dots, a_{\underline{k}})$,
the subgraph $G_{\approx}^{M}\big(\mathbb{D}_{\prec}(\underline{k}, \overline{k})\big)$
where every pair of two alternatives of $M$ forms an edge, and the line $(a_{\overline{k}}, \dots, a_m)$.
In most cases, a $(\underline{k},\overline{k})$-hybrid domain is strictly contained in the $(\underline{k}, \overline{k})$-hybrid domain $\mathbb{D}_{\prec}(\underline{k},\overline{k})$.
\hfill$\Box$
\end{remark}

Our main result of this section is the following theorem,
which says that every regular domain is a hybrid domain.

\begin{theorem}\label{thm:domaincharacterization}
Let $\mathbb{D}$ be a regular domain, that is, a domain satisfying minimal richness, diversity and no-restoration. Then, $\mathbb{D}$ is a hybrid domain.
\end{theorem}

The proof of Theorem \ref{thm:domaincharacterization} is contained in Appendix \ref{app:domaincharacterization}.
Here, we provide a heuristic description of the proof.

An immediate implication of no-restoration, in conjunction with minimal richness, is that $G_{\approx}^A(\mathbb{D})$ is a connected graph.
According to the connected graph $G_{\approx}^A(\mathbb{D})$, we highlight two other important implications of no-restoration:
\begin{enumerate}
\item[\rm (a)] if an alternative is contained in a vertex-path connecting $a_1$ and $a_m$, then any other alternative that is strongly connected to it must be also contained in some vertex-path connecting $a_1$ and $a_m$,\footnote{This implication also relies on diversity.} and

\item[\rm (b)]
    given distinct $a_p, a_t, a_s \in A$, if $a_t$ is contained in \emph{every} vertex-path connecting $a_p$ and $a_s$, then $a_t$ is ranked above $a_s$ in \emph{every} preference with the peak $a_p$.
\end{enumerate}
The violation of the path-inclusion condition implies the violation of implication (a),
while the violation of the no-leaf condition implies the violation of implication (b),
which both indicate the failure of no-restoration.

First, let a domain $\mathbb{D}$ satisfies minimal richness and diversity, but violates the path-inclusion condition.
It is clear that either $G_{\approx}^A(\mathbb{D})$ is not a connected graph which directly implies the failure of no-restoration on $\mathbb{D}$, or $G_{\approx}^A(\mathbb{D})$ is a connected graph.
So let $G_{\approx}^A(\mathbb{D})$ be a connected graph.
By the violation of the path-inclusion condition, we can identify some $a_k$, where $1< k < m$, such that no vertex-path connecting $a_1$ and $a_m$, contains $a_k$.
Pick an arbitrary vertex-path connecting $a_1$ and $a_m$ and an arbitrary alternative $a_s$ contained in the vertex-path.
According to $G_{\approx}^A(\mathbb{D})$, we have a vertex-path $(y_1, \dots, y_q)$ connecting $a_s$ and $a_k$.
Since $a_s$ is contained in some vertex-path connecting $a_1$ and $a_m$, and
$a_k$ is not contained in any vertex-path connecting $a_1$ and $a_m$,
searching on the vertex-path $(y_1, \dots, y_q)$ from $y_1 = a_s$ to $y_q = a_k$,
we can identify some $1 \leq t < q$ such that
$y_t$ is contained in some vertex-path connecting $a_1$ and $a_m$,
while $y_{t+1}$ is not contained in any vertex-path connecting $a_1$ and $a_m$.
This contradicts implication (a).
Therefore, $\mathbb{D}$ is not a no-restoration domain.

Next, let a domain $\mathbb{D}$ satisfy minimal richness, diversity and the path-inclusion condition.
We first know that there exist $1\leq \underline{k}< \overline{k}\leq m$ such that
$\mathbb{D} \subseteq \mathbb{D}_{\prec}(\underline{k},\overline{k})$ and
$\mathbb{D} \nsubseteq \mathbb{D}_{\prec}(\underline{k}',\overline{k}')$ for any $\underline{k}\leq \underline{k}'< \overline{k}'\leq \overline{k}$ with $\overline{k}'-\underline{k}'< \overline{k}-\underline{k}$.\footnote{The proof of Theorem \ref{thm:domaincharacterization} provides the detail on identifying $a_{\underline{k}}$ and $a_{\overline{k}}$, using all vertex-path(s) connecting $a_1$ and $a_m$ in the graph $G_{\approx}^A(\mathbb{D})$.}
Suppose that $\mathbb{D}$ violates the no-leaf condition.
Then, it must be the case that $\overline{k}-\underline{k}>1$, and $G_{\approx}^M(\mathbb{D})$ has a leaf, where $M = [a_{\underline{k}}, a_{\overline{k}}]$.
It is clear that the restriction of $(\underline{k}, \overline{k})$-hybridness implies that $G_{\approx}^A(\mathbb{D})$ is simply a combination of the vertex-path $(a_1, \dots, a_{\underline{k}})$,
the connected graph $G_{\approx}^M(\mathbb{D})$, and the vertex-path $(a_{\overline{k}}, \dots, a_m)$.
Moreover, the path-inclusion condition implies that $a_k$ is never a leaf in $G_{\approx}^M(\mathbb{D})$ for all $\underline{k}< k < \overline{k}$. Thus, either $a_{\underline{k}}$ or $a_{\overline{k}}$ is a leaf in $G_{\approx}^M(\mathbb{D})$.
Assume w.l.o.g. that $a_{\underline{k}}$ is a leaf in $G_{\approx}^M(\mathbb{D})$.
Thus, we have a unique $a_k \in M\backslash \{a_{\underline{k}}\}$ such that $a_{\underline{k}} \approx a_k$.
Note that either $k> \underline{k}+1$ or $k = \underline{k}+1$ holds.
If $k > \underline{k}+1$, we have $a_{\underline{k}+1}$ contained in every vertex-path connecting $a_1$ and $a_k$, but $a_{\underline{k}+1}\underline{P}_ia_k$, which contradicts implication (b).
If $k = \underline{k}+1$,
according to $G_{\approx}^A(\mathbb{D})$ and the hypothesis $\mathbb{D} \subseteq \mathbb{D}_{\prec}(\underline{k},\overline{k})$ and $\mathbb{D} \nsubseteq \mathbb{D}_{\prec}(\underline{k}+1,\overline{k})$,
we can identify either alternatives $a_l \in [a_1, a_{\underline{k}}]$, $a_r \in [a_{\underline{k}+2}, a_{\overline{k}}]$ and a preference $P_i\in \mathbb{D}$ with $r_1(P_i) = a_l$ such that
$a_{\underline{k}+1}$ is of course contained in every vertex-path connecting $a_l$ and each $a_r$, but $a_rP_ia_{\underline{k}+1}$,
or an alternative $a_r \in [a_{\underline{k}+2}, a_{\overline{k}}]$ and a preference $P_i\in \mathbb{D}$ with $r_1(P_i) =a_r$ such that $a_{\underline{k}+1}$ is of course contained in every vertex-path connecting $a_r$ and each $a_{\underline{k}}$, but $a_{\underline{k}}P_ia_{\underline{k}+1}$.
This contradicts implication (b).
Therefore, domain $\mathbb{D}$ is not a no-restoration domain.

\begin{remark}\rm
We motivate the analysis provided in this section in the following manner.
The idea of single-peakedness has proved fundamental to the resolution of incentive problems.
It however takes as given an underlying order on alternatives.
A central question is ``where does the underlying order come from?''.
One approach to address this question is via investigating well-behaveness axioms on strategy-proof DSCFs or RSCFs acting on suitably rich preference domains -- see \citet{CSS2013,CSZ2016}.
Our alternative approach takes as its primitive the notion of connectedness (introduced by \citet{G1978} and \citet{M2009}).
 This idea has proved influential, with versions of it being utilized extensively in the recent literature on incentives and aggregation.
We pursue the idea that connectedness (no-restoration is a version of connectedness) by itself can provide insights into the structure of domains.
Our Theorem \ref{thm:domaincharacterization} directly investigates the effects of connectedness axioms on domains -- we show that
if a domain has nice connectivity properties and is therefore nicely structured,
a geometric structure over alternatives emerges. This structure is more permissive than the natural order $\prec$ in the middle part, and coincides with the natural order $\prec$ at the extremes.
\hfill$\Box$
\end{remark}

Recent literature has shown that the full power of single-peakedness is not needed to resolve problems of incentives. \citet{CSS2013} has introduced one weakening of single-peakedness, \emph{semi-single-peakedness},
which retains single-peakedness restriction structure on middle part and relaxes it on the extremes.
The preference restriction of hybridness introduced in this paper does the opposite, and can be viewed as a complementary approach, which for instance covers the multiple single-peaked domains of \citet{R2015} as special cases.\medskip

\begin{figure}[t]
\centering
\includegraphics[width=0.5\textwidth]{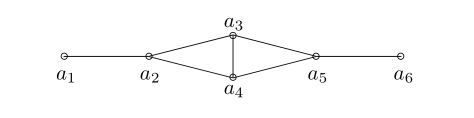}\\[-1em]
\caption{Graph $G_{\approx}^A(\mathbb{D}_{\Omega})$}\medskip\label{fig:multipleSP}
\end{figure}

\noindent \textbf{Multiple single-peaked domains:}
Every multiple single-peaked domain is a hybrid domain.
The detailed verification is relegated to Clarification 2 of Appendix \ref{app:clarification}.
We here introduce an example to illustrate.

Let $A = \{a_1, a_2, a_3, a_4, a_5, a_6\}$ and
$\Omega =  \{\prec_1, \prec_2\}$ be a collection of the  linear orders $\prec_1 = (a_1\,a_2\,a_3\,a_4\,a_5\,a_6)$ and $\prec_2 = (a_1\,a_2\,a_4\,a_3\,a_5\,a_6)$.
Note that $\prec_1$ and $\prec_2$ share \emph{the left maximum common part} $L_{\Omega} = (a_1 \,a_2)$ and
\emph{the right maximum common part} $R_{\Omega} = (a_5\,a_6)$.
A \emph{multiple single-peaked domain} $\mathbb{D}_{\Omega}$ is a union of the two single-peaked domains $\mathbb{D}_{\prec_1}$ and $\mathbb{D}_{\prec_2}$, i.e., $\mathbb{D}_{\Omega} = \mathbb{D}_{\prec_1}\cup \mathbb{D}_{\prec_2}$.\footnote{Note that the multiple single-peaked domain $\mathbb{D}_{\Omega}$ here is also a no-restoration domain.}

According to the natural order $\prec$ over $A$, selecting $a_2$ and $a_5$ as two thresholds, we have the $(2,5)$-hybrid domain $\mathbb{D}_{\prec}(2,5)$.
It is easy to verify that $\mathbb{D}_{\Omega} \subseteq \mathbb{D}_{\prec}(2,5)$.
The graph $G_{\approx}^A(\mathbb{D}_{\Omega})$ is specified in Figure \ref{fig:multipleSP},
which clearly indicates the path-inclusion condition and the no-leaf condition (i.e., $G_{\approx}^{[a_2, a_5]}(\mathbb{D}_{\Omega})$ has no leaf).
Therefore, $\mathbb{D}_{\Omega}$ is a $(2,5)$-hybrid domain.
It is worth mentioning that $\mathbb{D}_{\Omega}$ is ``strictly" contained in the $(2,5)$-hybrid domain $\mathbb{D}_{\prec}(2,5)$, e.g., the following two $(2,5)$-hybrid preferences are excluded by both $\mathbb{D}_{\prec_1}$ and $\mathbb{D}_{\prec_2}$: $P_i = (a_2\,a_5\,\cdots)$ and
$P_i' = (a_1\,a_2\,a_3\,a_5\,a_4\,a_6)$.

\medskip
	\noindent \textbf{Semi-single-peaked domains:}
	Consider the natural order  $\prec$ and fix \emph{one} threshold alternative. The semi-single-peakedness restriction on a preference requires that
	(i) the single-peakedness restriction prevail in the interval between the peak and the threshold, and
	(ii) each  alternative located beyond the threshold be ranked below the threshold.
	
	One can extend the semi-single-peakedness notion by adding more thresholds and requiring preferences to be semi-single-peaked w.r.t.~each threshold alternative.
	In particular, suppose that there are two distinct thresholds $a_{\underline{k}}$ and $a_{\overline{k}}$ with $a_{\underline{k}} \prec a_{\overline{k}}$.
	Consider a preference $P_i$ with $a_{\underline{k}} \prec r_1(P_i) \prec a_{\overline{k}}$.
	If $P_i$ is $(\underline{k}, \overline{k})$-hybrid,
	then the usual single-peakedness restriction prevails on the left and right intervals, and
	no restriction is imposed on the ranking of the alternatives in the middle interval (see the first diagram of Figure \ref{fig:ESPvsSSP}).
	On the contrary, if $P_i$ is semi-single-peaked w.r.t.~both $a_{\underline{k}}$ and $a_{\overline{k}}$,
	then the single-peakedness restriction prevails on the middle interval but fails on the left and right intervals (see the second diagram of Figure \ref{fig:ESPvsSSP}). Thus, the notions of  hybridness and  semi-single-peakedness are not entirely compatible with each other.
In fact, the semi-single-peaked domain is never a hybrid domain,\footnote{Let $\mathbb{D}_{\prec}(a_{s})$ be the
semi-single-peaked domain which contains all semi-single-peaked preferences on~$\prec$ w.r.t.~the threshold $a_s$.
Indeed, the graph $G_{\approx}^A\big(\mathbb{D}_{\prec}(a_s)\big)$ is a line over $A$ where all alternatives are located in the order of~$\prec$.
This meets the path-inclusion condition.
First, it is clear that $\mathbb{D}_{\prec}(a_{s})$ is never a domain of single-peaked preferences w.r.t.~$\prec$.
Then, consider arbitrary $1 \leq \underline{k}< \overline{k} \leq m$ such that $\overline{k}-\underline{k}>1$.
However, both $a_{\underline{k}}$ and $a_{\overline{k}}$ are leaves of
the subgraph $G_{\approx}^{[a_{\underline{k}}, a_{\overline{k}}]}\big(\mathbb{D}_{\prec}(a_s)\big)$.
This contradicts the no-leaf condition according to $a_{\underline{k}}$ and $a_{\overline{k}}$.}
which further implies by Theorem \ref{thm:domaincharacterization} that it is never a no-restoration domain.

\begin{figure}[t]
\centering
\includegraphics[width=0.9\textwidth]{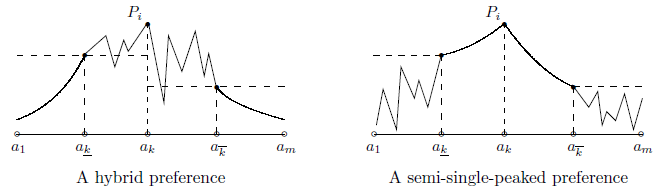}\\[-0.5em]
\caption{A hybrid preference  \emph{v.s.} a semi-single-peaked preference}\label{fig:ESPvsSSP}
\end{figure}

\section{Probabilistic Fixed Ballot Rules and\\ A Characterization}\label{sec_charaterization}
			
Our goal in this section is to provide a characterization of RSCFs defined over regular domains that satisfy unanimity and strategy-proofness.
For this purpose, we introduce the class of Probabilistic Fixed Ballot Rules (or PFBRs).

A PFBR is established on the natural order $\prec$ and is based on a collection of pre-specified parameters $(\beta_S)_{S \subseteq N}$, called \textbf{probabilistic ballots}.
Each probabilistic ballot $\beta_S$, which is associated to the coalition $S\subseteq N$, is a probability distribution over $A$, i.e., $\beta_S \in \Delta(A)$ for all $S \subseteq N$.
Given a coalition $S \subseteq N$,
for a subset $B \subseteq A$, let $\beta_S(B) = \sum_{a_k \in B}\beta_S(a_k)$ denote the probability of $B$ according to $\beta_S$.
Given a preference profile $P$, for each $1 \leq k \leq m$, let $S(k,P) = \{i\in N: a_k \preceq r_1(P_i)\}$ be the set of voters whose peaks are \emph{not} to the left of $a_k$.

\begin{definition}\label{def:PFBR}
A RSCF $\varphi: \mathbb{D}^n \rightarrow \Delta(A)$ is called a \textbf{Probabilistic Fixed Ballot Rule} (or \textbf{PFBR})
if there exists a collection of probabilistic ballots $(\beta_S)_{S \subseteq N}$ satisfying the following two properties:
\begin{itemize}
\item \textbf{Ballot unanimity}:  $\beta_N = \bm{e}_{a_m}$ and $\beta_{\emptyset} = \bm{e}_{a_1}$, and

\item \textbf{Monotonicity}: for all $a_k \in A$, $[S \subset T \subseteq N] \Rightarrow \big[\beta_S([a_k,a_m]) \leq \beta_T([a_k,a_m])\big]$,\footnote{Monotonicity implies that the bigger the coalition is, the higher probability the probabilistic ballot imposes on an interval towards $a_m$ in the natural order $\prec$.}

\end{itemize}
such that for all $P \in \mathbb{D}^{n}$ and $a_k \in A$, we have
\begin{align*}
\varphi_{a_k}(P) = \beta_{S(k, P)}([a_k, a_m]) - \beta_{S({k+1}, P)}([a_{k+1}, a_m]),
\end{align*}
where $S({m+1}, P) \equiv \emptyset$, $[a_{m+1}, a_m] \equiv \emptyset$ and hence $\beta_{S({m+1}, P)}([a_{m+1}, a_m])= 0$.
\end{definition}

\begin{remark}\rm
A PFBR $\varphi: \mathbb{D}^n \rightarrow \Delta(A)$ is a well defined RSCF:
Fix a preference profile $P \in \mathbb{D}^n$.
First, for all $a_k \in A$, since $S(k+1, P) \subseteq S(k,P)$ and $[a_{k+1}, a_m] \subset [a_k, a_m]$, monotonicity ensures
$\varphi_{a_k}(P) = \beta_{S(k,P)}([a_k,a_m]) - \beta_{{S}(k+1,P)}([a_{k+1},a_m]) \geq 0$.
Next, it is clear that $\sum_{k = 1}^m\varphi_{a_k}(P) = \sum_{k=1}^m
\big[\beta_{S(k,P)}([a_k,a_m]) - \beta_{{S}(k+1,P)}([a_{k+1},a_m]) \big]= \beta_{S(1, P)}([a_1, a_m]) = 1$.
Moreover, if $r_1(P_1) = \dots =r_1(P_n) \equiv a_k$, we have $S(k,P) = N$ and $S(k+1, P) = \emptyset$. Then, ballot unanimity implies $\varphi_{a_k}(P) = \beta_{S(k,P)}([a_k, a_m]) - \beta_{S(k+1,P)}([a_{k+1}, a_m]) = 1$.
Therefore, $\varphi$ satisfies unanimity.
Last, the construction of a PFBR clearly indicates that it satisfies the tops-only property.
\hfill$\Box$
\end{remark}

\begin{remark}\rm
\citet{EPS2002} introduced the class of PFBRs in the case of continuum of alternatives (for instance, the interval $[0,1]$ or the real line $\mathbb{R}$) and showed that a RSCF is unanimous and strategy-proof on the single-peaked domain if and only if it is a PFBR.
The deterministic version of PFBRs can be obtained by additionally requiring the probabilistic ballots be degenerate, i.e., $\beta_S(a_k) \in \{0,1\}$ for all $S \subseteq N$ and $a_k \in A$. We refer to these DSCFs as  \textbf{Fixed Ballot Rules} (or \textbf{FBRs}).\footnote{\citet{M1980} called these Augmented Median Voter Rules, while \citet{BGS1993} called these Generalized Median Voter Schemes. For an FBR, the subtraction form in Definition \ref{def:PFBR} can be reduced to a max-min form \citep[see Definition 10.3 in][]{NRTV2007book}. \citet{M1980} originally defined an augmented median voter rule in the min-max form,
which can be equivalently translated to a max-min form.}	
\citet{M1980} also considered the case of continuum of alternatives and showed that on the single-peaked domain,
a DSCF is unanimous, tops-only and strategy-proof  if and only if it is an FBR.
%In the finite setting, it can be easily verified that
%a mixture of FBRs is a unanimous and strategy-proof RSCF on the single-peaked domain $\mathbb{D}_{\prec}$, and is a PFBR.
%Furthermore, \citet{PRSS2014} and \citet{PU2015} prove that the converse is also true.
\hfill$\Box$
\end{remark}

We provide an example to illustrate a PFBR.

\begin{table}[t]
\centering
\begin{tabular}{|p{0.8cm}<{\centering}|p{0.8cm}<{\centering}|p{0.8cm}<{\centering}|p{0.8cm}<{\centering}|p{0.8cm}<{\centering}|}
					\hline
					& $\beta_{\emptyset}$ & $\beta_{\{1\}}$ & $\beta_{\{2\}}$  & $\beta_{N}$\\[0.2em]\hline
					$a_1$ & $1$           & $0.5$           & $0.4$            & $0$  \\\hline
					$a_2$ & $0$           & $0.2$           & $0.3$            & $0$  \\\hline
					$a_3$ & $0$           & $0.1$           & $0.2$            & $0$  \\\hline
					$a_4$ & $0$           & $0.2$           & $0.1$            & $1$  \\\hline
\end{tabular}
\caption{The probabilistic ballots $(\beta_S)_{S\subseteq N}$}\label{tab:0}
\end{table}	

\begin{example}\label{exm:PFBR}\rm
Let $N = \{1,2\}$ and $A = \{a_1,a_2,a_3,a_4\}$.
We fix a choice of probabilistic ballots in Table \ref{tab:0}.
It is easy to verify that both ballot unanimity and monotonicity are satisfied.
The PFBR $\varphi$ w.r.t.~probabilistic ballots of Table \ref{tab:0} works as follows.
Consider an arbitrary preference profile $P = (P_1, P_2)$ and an arbitrary alternative $a_k$.
We first identify the sets $S(k,P)$ and $S(k+1, P)$, and the probabilities $\beta_{S(k,P)}([a_k,a_4])$ and $\beta_{{S}(k+1,P)}([a_{k+1},a_4])$.
Then, the probability of the alternative $a_k$ selected at $P$ is defined as the difference between these two probabilities.
For instance, let $r_1(P_1)=a_2$ and $r_1(P_2)=a_4$. Then, we calculate
\begin{align*}
& \varphi_{a_4}(P) = \beta_{S(4, P)}([a_4,a_4]) - 0 = \beta_{\{2\}}([a_4,a_4])=0.1,\\
&\varphi_{a_3}(P) = \beta_{S(3, P)}([a_3,a_4]) - \beta_{S(4, P)}([a_4,a_4]) = \beta_{\{2\}}([a_3,a_4]) - \beta_{\{2\}}([a_4,a_4]) =0.2, \\
& \varphi_{a_2}(P) = \beta_{S(2, P)}([a_2,a_4])  - \beta_{S(3,P)}([a_3,a_4]) =\beta_{N}([a_2,a_4])  - \beta_{\{2\}}([a_3,a_4]) =0.7, \;\textrm{and}\\
& \varphi_{a_1}(P) =\beta_{S(1,P)}([a_1,a_4])  - \beta_{S(2, P)}([a_2,a_4]) = \beta_{N}([a_1,a_4])  - \beta_{N}([a_2,a_4]) =0.
~~~~~~~\Box
\end{align*}
\end{example}

%It is worth mentioning that the probabilistic ballot $\beta_S$ for a coalition $S\subseteq N$ represents the outcome of $\varphi$ at the ``boundary profile''
%where voters in $S$ have the preference $\overline{P}_i=(a_m\, \cdots\, a_k\,a_{k-1}\,\cdots\,  a_1)$, while the others have the preference $\underline{P}_i=(a_1\, \cdots\, a_{k-1}\,a_k\, \cdots\,  a_m)$.
%For ease of presentation, we call such a preference profile a \emph{$S$-boundary profile}.\footnote{Note that for every $S \subseteq N$, there is a unique $S$-boundary profile.} If a PFBR $\varphi$ is unanimous, then it follows that $\beta_{\emptyset}$ assigns probability $1$ to $a_1$  and $\beta_{N}$ assigns probability $1$ to $a_m$, which in turn implies ballot unanimity.
%In what follows, we argue that if $\varphi$ is strategy-proof, then $(\beta_S)_{S \subseteq N}$ must be monotonic.
%Consider a proper subset $S \subset N$ and $i \in N \setminus S$. Let $P$ and $P'$ be the $S$-boundary and  $S \cup \{i\}$-boundary profiles, respectively.
%In other words, only voter $i$ changes her preference $\overline{P}_i$ in the $S\cup \{i\}$-boundary profile to $\underline{P}_i$. Strategy-proofness of $\varphi$ implies that the probability of each upper contour set of $\overline{P}_i$ is weakly increased from $\varphi(P)$ to $\varphi(P')$.
%Since the interval $[a_k,a_m]$ coincides with the upper contour set of $a_k$ at $\overline{P}_i$, it follows that $\beta_{S}([a_k,a_m]) \leq \beta_{S \cup \{i\}}([a_k,a_m])$. Monotonicity of $(\beta_S)_{S \subseteq N}$ follows from the repeated application of this argument.

According to the PFBR specified in Example \ref{exm:PFBR}, we observe that it becomes manipulable on the hybrid domain $\mathbb{D}_{\prec}(2,4)$.
For instance, given two $(2,4)$-hybrid preferences $P_1 = (a_2\,a_4\,a_3\,a_1)$ and $P_2 = (a_4\,a_2\,a_3\,a_1)$,
we calculate $\varphi(P_1, P_2) = 0.7\bm{e}_{a_2}+0.2\bm{e}_{a_3}+0.1\bm{e}_{a_4}$, where that the alternative $a_3$ that is not the top of $P_1$ or $P_2$ receives some strictly positive probability.
However, voter 2 can manipulate via $P_2' = (a_2\,\cdots)$ and obtain $\varphi(P_1, P_2') = \bm{e}_{a_2}$ which stochastically dominates $\varphi(P_1, P_2)$ according to $P_2$.
This indicates that some additional condition must be imposed on the probabilistic ballots to restore strategy-proofness of PFBR on the hybrid domain.

\begin{definition}\label{defn:PFBR}
Given $1\leq \underline{k}<\overline{k}\leq m$, a PFBR $\varphi$ w.r.t.~probabilistic ballots $(\beta_S)_{S \subseteq N}$
is called a \textbf{$\bm{(\underline{k}, \overline{k})}$-PFBR} if all probabilistic ballots in addition satisfy the \textbf{constrained random-dictatorship condition} w.r.t.~$a_{\underline{k}}$ and $a_{\overline{k}}$, i.e.,
there is a ``conditional dictatorial coefficient'' $\varepsilon_i \geq 0$ for each $i \in N$ with $\sum_{i \in N} \varepsilon_i =1$ such that for all $S \subseteq N$, $\beta_S([a_{\overline{k}},a_m]) = \sum_{ i \in S} \varepsilon_i$ and $\beta_S([a_1,a_{\underline{k}}\,]) = \sum_{ i \in N \setminus S} \varepsilon_i$.
\end{definition}

Note that the constrained random-dictatorship condition implies that \emph{no} probabilistic ballot assigns positive probability to an alternative that lies  (strictly) between $a_{\underline{k}}$ and $a_{\overline{k}}$, i.e., $\beta_S(a_k)=0$ for all $S \subseteq N$ and $a_k \in [a_{\underline{k}+1}, a_{\overline{k}-1}]$.
We call it the constrained random-dictatorship condition to emphasize that it ensures
a PFBR behave like a random dictatorship on the middle interval $[ a_{\underline{k}}, a_{\overline{k}}]$ when
all voters' preference tops are located in $[a_{\underline{k}}, a_{\overline{k}}]$, i.e.,
$\varphi(P)=\sum_{i\in N}\varepsilon_i\, \bm{e}_{r_1(P_i)}$ for each preference profile $P$ such that $r_1(P_i) \in [a_{\underline{k}},a_{\overline{k}}]$ for all $i \in N$. Consequently, when all voters' preference tops are located in $[a_{\underline{k}}, a_{\overline{k}}]$, no alternative other than voters' preference tops can receive any positive probability.
In the extreme case where $\underline{k}=1$ and $\overline{k}=m$, a PFBR satisfying the constrained random-dictatorship condition reduces to a random dictatorship.

In what follows, we present an example of a PFBR that satisfies the constrained random-dictatorship condition.

\begin{table}[t]
\centering				 \begin{tabular}{|p{0.8cm}<{\centering}|p{0.8cm}<{\centering}|p{0.8cm}<{\centering}|p{0.8cm}<{\centering}|p{0.8cm}<{\centering}|p{0.8cm}<{\centering}|p{0.8cm}<{\centering}|p{0.8cm}<{\centering}|p{0.8cm}<{\centering}|}
					\hline
					& $\beta_{\emptyset}$ & $\beta_{\{1\}}$ & $\beta_{\{2\}}$ & $\beta_{\{3\}}$ & $\beta_{\{1,2\}}$ & $\beta_{\{1,3\}}$ & $\beta_{\{2,3\}}$ & $\beta_{N}$\\[0.2em]\hline
					$a_1$ & $1$                 & $1/3$           & $1/3$           & $1/3$           & $1/3$             & $1/3$           & $1/3$            & $0$\\\hline
					$a_2$ & $0$                 & $1/3$           & $1/3$           & $1/3$           & $0$               & $0$             & $0$             & $0$\\\hline
					$a_3$ & $0$                 & $0$             & $0$             & $0$             & $0$               & $0$             & $0$               & $0$\\\hline
					$a_4$ & $0$                 & $0$             & $0$             & $0$             & $1/3$             & $1/3$           & $1/3$               & $0$\\\hline
					$a_5$ & $0$                 & $1/3$           & $1/3$           & $1/3$           & $1/3$             & $1/3$           & $1/3$            & $1$\\\hline
				\end{tabular}
\caption{The probabilistic ballots $(\beta_S)_{S\subseteq N}$}\label{tab:1}
\end{table}

\begin{example}\label{exm:CRD}\rm
Let $N= \{1,2,3\}$ and $A = \{a_1,a_2,a_3,a_4,a_5\}$.
Consider the 8 probabilistic ballots specified in Table \ref{tab:1}.
Clearly, both ballot unanimity and monotonicity are satisfied.
Given all voters' conditional dictatorial coefficients $\varepsilon_1=\varepsilon_2=\varepsilon_3=\frac{1}{3}$,
it is easy to verify the constrained random-dictatorship condition w.r.t.~$a_2$ and $a_4$, i.e.,
$\beta_S([a_4,a_5]) = \sum_{i \in S} \varepsilon_i$ and $\beta_S([a_1,a_2]) = \sum_{i \in N\setminus S} \varepsilon_i$ for all $S \subseteq N$.
Therefore, the corresponding PFBR is a $(2,4)$-PFBR.
\hfill$\Box$
\end{example}

We now present our characterization result for unanimous and strategy-proof RSCFs defined on regular domains.
	
\begin{theorem}\label{thm:rulecharacterization}
Let $\mathbb{D}$ be a regular domain. The following two statements hold:
\begin{itemize}
\item[\rm (i)] Domain $\mathbb{D}$ is a $(\underline{k},\overline{k})$-hybrid domain for some $1 \leq \underline{k} < \overline{k} \leq m$.

\item[\rm (ii)] A RSCF $\varphi: \mathbb{D}^n \rightarrow \Delta(A)$ is unanimous and strategy-proof if and only if it is a PFBR and moreover, a $(\underline{k}, \overline{k})$-PFBR when $\overline{k}-\underline{k}>1$.
\end{itemize}
\end{theorem}

We present a formal proof of Theorem \ref{thm:rulecharacterization} in Appendix \ref{app:rulecharacterization}.
Here, we provide some intuitive explanations.

First, statement (i) of Theorem \ref{thm:rulecharacterization} follows exactly from Theorem \ref{thm:domaincharacterization}, and reveals the natural order $\prec$, the thresholds $a_{\underline{k}}$ and $a_{\overline{k}}$, and the preference restriction of $(\underline{k}, \overline{k})$-hybridness that are instruments for the characterization of unanimous and strategy-proof RSCFs in statement (ii).

Second, if statement (i) of Theorem \ref{thm:rulecharacterization} holds at some $\overline{k}-\underline{k} =1$, we know that $\mathbb{D}$ reduces to a single-peaked domain w.r.t.~$\prec$, and all alternatives of the graph $G_{\approx}^A(\mathbb{D})$ are located in the order of $\prec$.
Then, statement (ii) follows from Theorem 1 of \citet{RS2019}.

Third, the proof of Theorem \ref{thm:rulecharacterization} focuses on the verification of statement (ii) in the case that $\overline{k}-\underline{k}>1$.
To show the sufficiency part, we first refer to Theorem 1 of \citet{KRSYZ2021b}, and note that on domain $\mathbb{D}$, as a no-restoration domain, every unanimous and locally strategy-proof RSCF is strategy-proof.\footnote{Formally, a RSCF $\psi: \mathbb{D}^{n} \rightarrow \Delta(A)$ is \textbf{locally strategy-proof} if for all $i \in N$, $P_i, P_i' \in \mathbb{D}$ with $P_i \sim P_i'$ and $P_{-i} \in \mathbb{D}^{n-1}$, $\psi(P_i, P_{-i})$ stochastically dominates $\psi(P_i', P_{-i})$ according to $P_i$.\label{footnote:localSP}}
Therefore,
the verification of strategy-proofness of a $(\underline{k}, \overline{k})$-PFBR $\varphi$ on $\mathbb{D}$ is enormously simplified to a verification of local strategy-proofness.
Furthermore, on domain $\mathbb{D}$, as a $(\underline{k}, \overline{k})$-hybrid domain,
local strategy-proofness of $\varphi$ intuitively follows from the two observations below:
\begin{itemize}
\item[\rm (i)] at each preference profile, $\varphi$ assigns an alternative $a_k$ located strictly between $a_{\underline{k}}$ and $a_{\overline{k}}$, the probability that equals the sum of the conditional dictatorial coefficients of the voters who prefer $a_k$ the most, and
\item[\rm (ii)] $\varphi$ behaves like a PFBR over the two intervals $[a_1,a_{\underline{k}}]$ and $[a_{\overline{k}},a_m]$
where single-peakedness continues to prevail.
\end{itemize}
The proof of the necessity part is more comprehensive. It consists of the following three steps.
\begin{description}
\item[\bf Step 1.] We refer to the $(\underline{k}, \overline{k})$-hybrid domain $\mathbb{D}_{\prec}(\underline{k}, \overline{k})$, and show independently that every unanimous and locally strategy-proof RSCF on $\mathbb{D}_{\prec}(\underline{k}, \overline{k})$ is a $(\underline{k}, \overline{k})$-PFBR.

\item[\bf Step 2.] We turn back to the regular domain $\mathbb{D}$ of Theorem \ref{thm:rulecharacterization}. By Corollary 1 of \citet{KRSYZ2021b}, we first know that every unanimous and strategy-proof rule on $\mathbb{D}$, as a no-restoration domain, satisfies the tops-only property.
    Furthermore, we reveal an important property on every unanimous and strategy-proof RSCF defined on domain $\mathbb{D}$, as a $(\underline{k},\overline{k})$-hybrid domain: for all $i \in N$, $P_i, P_i' \in \mathbb{D}$ and $P_{-i} \in \mathbb{D}^{n-1}$,
    if $r_1(P_i), r_1(P_i') \in M = [a_{\underline{k}},a_{\overline{k}}]$,
    then the social lotteries at profiles $(P_i, P_{-i})$ and $(P_i', P_{-i})$ coincide to each other at every alternative other than the peaks of $P_i$ and $P_i'$.

\item[\bf Step 3.] We fix an arbitrary unanimous and strategy-proof RSCF $\varphi$ on the regular domain $\mathbb{D}$, and construct a RSCF $\psi$ on the hybrid domain $\mathbb{D}_{\prec}(\underline{k}, \overline{k})$:
    given $(P_1, \dots, P_n) \in \big[\mathbb{D}_{\prec}(\underline{k}, \overline{k})\big]^n$,
    $\psi(P_1, \dots, P_n) = \varphi(\bar{P}_1, \dots, \bar{P}_n)$  for all $(\bar{P}_1, \dots, \bar{P}_n) \in \mathbb{D}^n$ such that $r_1(\bar{P}_1) = r_1(P_1)$, ..., $r_1(\bar{P}_n) = r_1(P_n)$.
    Note that $\psi$ is well defined since both $\mathbb{D}$ and $\mathbb{D}_{\prec}(\underline{k}, \overline{k})$ are minimally rich, and $\varphi$ is shown to satisfy the tops-only property in \textbf{Step 2}.
    By construction, $\psi$ inherits unanimity and the tops-only property of $\varphi$.
    More importantly, according to the construction of $\psi$, we can adopt strategy-proofness of $\varphi$ and the property revealed in \textbf{Step 2} to show that $\psi$ is locally strategy-proof on $\mathbb{D}_{\prec}(\underline{k}, \overline{k})$.
    Then, the characterization result established in \textbf{Step 1} implies that $\psi$ is a $(\underline{k}, \overline{k})$-PFBR.
    Consequently, by construction, we know that $\varphi$ inherits the probabilistic ballots of $\psi$ and turns out to be a $(\underline{k}, \overline{k})$-PFBR as well.
\end{description}

%According to Theorem \ref{thm:domaincharacterization}, we know that the regular domain investigated in Theorem \ref{thm:rulecharacterization} is a $(\underline{k},\overline{k})$-hybrid domain.
%In particular, if $\overline{k}-\underline{k} = 1$, $\mathbb{D}$ reduces to a domain of single-peaked preferences,
%and then Theorem \ref{thm:rulecharacterization} follows from Theorem 1 of \citet{RS2019}.
%
%
%For the ``only-if part'', the proof consists of two steps.
%In the first step, we consider the hybrid domain $\mathbb{D}_{\prec}(\underline{k},\overline{k})$ which strictly contains the single-peaked domain $\mathbb{D}_{\prec}$, and then characterize all unanimous and strategy-proof RSCFs as $(\underline{k},\overline{k})$-PFBRs.
%In the second step, we move back to the domain $\mathbb{D}$,
%and complete the proof by showing that the unanimous and strategy-proof RSCF $\varphi: \mathbb{D}^n \rightarrow \Delta(A)$ satisfies the tops-only property, and
%hence is able to be extended to a unanimous RSCF on the hybrid domain $\mathbb{D}_{\prec}(\underline{k}, \overline{k})$, and
%last the extension which nests $\varphi$ is shown to be strategy-proof.\footnote{Since $\mathbb{D}$ is minimally rich, we extend the tops-only RSCF $\varphi$ to a tops-only RSCF $\hat{\varphi}: \big[\mathbb{D}_{\prec}(\underline{k}, \overline{k})\big]^n \rightarrow \Delta(A)$ where
%for each profile $\hat{P} \in \big[\mathbb{D}_{\prec}(\underline{k}, \overline{k})\big]^n$,
%$\hat{\varphi}(\hat{P}) = \varphi(P)$ for all $P \in \mathbb{D}^n$ such that $r_1(P_i) = r_1(\hat{P}_i)$ for all $i \in N$.}

\section{Conclusion}\label{sec:conclusion}

We study a class of preference domains that satisfies minimal richness, diversity and no-restoration.
We first show that such a domain is indeed a hybrid domain, i.e., (i) two thresholds $a_{\underline{k}}, a_{\overline{k}} \in A$, where $1 \leq \underline{k}< \overline{k} \leq m$, are identified on the natural order $\prec$, so that the preferences are revealed to be single-peaked on the intervals $[a_1, a_{\underline{k}}]$ and $[a_{\overline{k}},a_m]$ respectively, and unrestricted in the middle interval $[a_{\underline{k}}, a_{\overline{k}}]$, and (ii) the domain contains sufficiently many preferences so that it satisfies the path-inclusion condition and the no-leaf condition.
We next characterize all unanimous and strategy-proof RSCFs on such a domain to be PFBRs.

In the Supplementary Material, we further investigate decomposability of strategy-proof RSCFs on these domains: Can every unanimous and strategy-proof RSCF be decomposed as a mixture of finitely many deterministic counterparts?\footnote{See the following link: \href{https://drive.google.com/file/d/1L0VlXASqjiNueYemRchs_yKepYK6aDn4/view}
{\scriptsize{\textcolor[rgb]{0.00,0.00,1.00}{https://drive.google.com/file/d/1L0VlXASqjiNueYemRchs\_yKepYK6aDn4/view}}}.}
Indeed, decomposability holds on the complete domain and the single-peaked domain.
Thus, decomposability holds for the cases when $\overline{k}-\underline{k}=1$ or $\overline{k} - \underline{k}=m-1$.
Surprisingly, it does \textit{not} hold for any intermediate values of $\overline{k}$ and $\underline{k}$.
We identify a necessary and sufficient condition for decomposability under an additional assumption of anonymity, which requires the RSCF be non-sensitive to the identities of voters.

\setlength{\bibsep}{0.3ex}

{\small	
	\newpage
	\appendix

	\section*{Appendix}

\section{Proof of Theorem \ref{thm:domaincharacterization}}\label{app:domaincharacterization}

We fix a regular domain $\mathbb{D}$, i.e., minimal richness, diversity and no-restoration.
We are going to identify two thresholds $a_{\underline{k}}$ and $a_{\overline{k}}$,
where $1\leq \underline{k}< \overline{k} \leq m$, and show that $\mathbb{D}$ is a $(\underline{k}, \overline{k})$-hybrid domain.

We first introduce a sketch of the proof, which consists of the following four steps.
\begin{description}
\item[\textbf{Step 1.}]
We first provide three implications of no-restoration, in conjunction with minimal richness and diversity (see Lemmas \ref{lem:vertex-path}, \ref{lem:onestep} and \ref{lem:preferencerestriction}).
First two implications are related to the graph $G_{\approx}^A(\mathbb{D})$, while the third one reveals an important preference restriction embedded in the domain $\mathbb{D}$ based on the graph $G_{\approx}^A(\mathbb{D})$.

\item[\textbf{Step 2.}]
Using the first two implications in \textbf{Step 1}, we show that $\mathbb{D}$ satisfies the path-inclusion condition (see Lemma \ref{lem:degree}).

\item[\textbf{Step 3.}]
According to $G_{\approx}^A(\mathbb{D})$,
we know that either there exists a unique vertex-path connecting $a_1$ and $a_m$, or there are multiple vertex-paths connecting $a_1$ and $a_m$.
We consider the first case, and show via the preference restriction revealed in \textbf{Step 1}, that
domain $\mathbb{D}$ is single-peaked w.r.t.~$\prec$, and hence is a hybrid domain (see Lemma \ref{lem:sp}).

\item[\textbf{Step 4.}] We last consider the case that there are multiple vertex-paths connecting $a_1$ and $a_m$. Let each of these vertex-path start at $a_1$ and end at $a_m$.
    We then identify their left maximum common part and right maximum common part, and show by the preference restriction revealed in \textbf{Step 1} and the completely reversed preferences $\underline{P}_i$ and $\overline{P}_i$, that the left maximum common part must be $(a_1, \dots, a_{\underline{k}})$ and the right maximum common part must be $(a_{\overline{k}}, \dots, a_m)$, where $\overline{k}-\underline{k}>1$ (see statement (i) of Lemma \ref{lem:coincidence}).
    We further show that the graph $G_{\approx}^A(\mathbb{D})$ is simply a combination of the vertex-path
    $(a_1, \dots, a_{\underline{k}})$, a connected subgraph $G_{\approx}^{[a_{\underline{k}}, a_{\overline{k}}]}(\mathbb{D})$ and the vertex-path
    $(a_{\overline{k}}, \dots, a_m)$ (see statement (ii) of Lemma \ref{lem:coincidence}), which immediately by the preference restriction revealed in \textbf{Step 1}, implies $\mathbb{D} \subseteq \mathbb{D}_{\prec}(\underline{k}, \overline{k})$. Finally, we prove that $\mathbb{D}$ is a $(\underline{k}, \overline{k})$-hybrid domain
    by showing the no-leaf condition (see Lemma \ref{lem:hybrid*}).
\end{description}

Now, we proceed to the proof.

\begin{lemma}\label{lem:vertex-path}
The graph $G_{\approx}^A(\mathbb{D})$ is a connected graph.
\end{lemma}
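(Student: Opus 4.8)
The plan is to connect arbitrary alternatives by translating a path of \emph{adjacent} preferences into a walk in $G_{\approx}^A(\mathbb{D})$. Fix distinct $a_s, a_t \in A$. First I would invoke minimal richness to pick $P, P' \in \mathbb{D}$ with $r_1(P) = a_s$ and $r_1(P') = a_t$; since $a_s \neq a_t$, these preferences are distinct. Then, using the no-restoration property of $\mathbb{D}$ (Definition \ref{def:connectedness})---which in particular guarantees that \emph{some} path exists---I obtain a path $\pi = \{P^k\}_{k=1}^{v} \subseteq \mathbb{D}$ connecting $P$ and $P'$, so that $P^1 = P$, $P^v = P'$, and $P^k \sim P^{k+1}$ for each $k = 1, \dots, v-1$.

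The key observation is that a \emph{top-changing} step of this path is exactly a strong-connectedness witness. Suppose $P^k \sim P^{k+1}$ with $r_1(P^k) \neq r_1(P^{k+1})$. By the definition of adjacency, $P^k$ and $P^{k+1}$ differ by transposing two contiguously ranked alternatives at positions $\ell$ and $\ell+1$ while fixing the ranking of every other alternative; since the first-ranked alternatives differ, necessarily $\ell = 1$. Hence $r_1(P^k) = r_2(P^{k+1})$, $r_1(P^{k+1}) = r_2(P^k)$, and $r_l(P^k) = r_l(P^{k+1})$ for all $l \geq 3$, which is precisely the statement $r_1(P^k) \approx r_1(P^{k+1})$.

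Consequently, reading off the top alternatives $r_1(P^1), r_1(P^2), \dots, r_1(P^v)$ along $\pi$ and deleting consecutive repetitions yields a finite sequence $a_s = x_0, x_1, \dots, x_q = a_t$ with $x_j \approx x_{j+1}$ for every $j$ by the previous paragraph; deleting any cycles from this walk produces a (non-repeated) vertex-path in $G_{\approx}^A(\mathbb{D})$ connecting $a_s$ and $a_t$. Since $a_s$ and $a_t$ were arbitrary, $G_{\approx}^A(\mathbb{D})$ is connected.

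I do not expect a serious obstacle here: the entire content is the observation that an adjacency altering the peak is a transposition of the top two alternatives, so that the peaks visited along any path of adjacent preferences trace out a walk in $G_{\approx}^A(\mathbb{D})$. The only care required is the routine bookkeeping of extracting a non-repeated vertex-path from that walk, together with noting that diversity is not used in this particular lemma---only minimal richness and the existence of adjacency-paths guaranteed by the no-restoration property are invoked.
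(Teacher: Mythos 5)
Your proof is correct and follows essentially the same route as the paper's: invoke minimal richness to fix preferences with peaks $a_s$ and $a_t$, use no-restoration to obtain a path of adjacent preferences, observe that each peak-changing adjacency is a transposition of the top two alternatives and hence a witness of strong connectedness, and then prune repetitions from the resulting walk of peaks to extract a vertex-path. Your explicit justification that a peak-changing swap must occur at positions $1$ and $2$ is a detail the paper leaves implicit, but the argument is the same.
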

\begin{proof}
Given arbitrary distinct $a_s, a_t \in A$,
by minimal richness, we have $P_i \in \mathbb{D}$ with $r_1(P_i) = a_s$ and $P_i'\in \mathbb{D}$ with $r_1(P_i') = a_t$.
Moreover, since $\mathbb{D}$ is a no-restoration domain,
there exists a path $(P_{i}^1, \dots, P_i^t)$ connecting $P_{i}$ and $P_{i}'$.
We partition $(P_{i}^1, \dots, P_i^t)$ according to the peaks of preferences (without rearranging preferences in the path), and
elicit all preference peaks:
\begin{displaymath}
\left(\frac{P_i^1, \dots, P_i^{k_1}}{\textrm{the same peak}\; x_1}, \frac{P_i^{k_1+1}, \dots, P_i^{k_2}}{\textrm{the same peak}\; x_2}, \dots, \frac{P_i^{k_{q-1}+1}, \dots, P_i^{t}}{\textrm{the same peak}\; x_q}\right)
\xlongrightarrow{\textrm{Elicit peaks}} (x_1, x_2, \dots, x_q),
\end{displaymath}
where $x_k \neq x_{k+1}$ for all $k = 1, \dots, q-1$.
For each $v = 1, \dots, q-1$, according to $P_i^{k_v}$ and $P_i^{k_v+1}$, we know that $x_v \approx x_{v+1}$.
Note that $(x_1, x_2, \dots, x_q)$ may contain repetitions.
Whenever a repetition appears,
we remove all alternatives strictly between the repetition and one alternative of the repetition.
For instance, if $x_k = x_l$ where $1\leq k< l\leq q$, we remove $x_k, x_{k+1}, \dots, x_{l-1}$, and refine the sequence to
$(x_1, \dots, x_{k-1}, x_l, \dots, x_q)$.
By repeatedly eliminating repetitions,
we finally elicit a vertex-path connecting $a_s$ and $a_t$.
\end{proof}

Given $a_s, a_t \in A$,
let $\Pi(a_s, a_t)$ denote the set of \emph{all} vertex-paths connecting $a_s$ and $a_t$ in $G_{\approx}^A(\mathbb{D})$.\footnote{In particular, if $a_s = a_t$, then $\Pi(a_s, a_t) = \big\{(a_s)\big\}$ is a singleton set of a null vertex-path.}
Clearly, Lemma \ref{lem:vertex-path} implies $\Pi(a_s, a_t) \neq \emptyset$.
Given a vertex-path $\mathcal{P} \in \Pi(a_s, a_t)$ and $a_p, a_q \in \mathcal{P}$,
let $\langle a_p, a_q\rangle^{\mathcal{P}}$ denote the interval between $a_p$ and $a_q$ on $\mathcal{P}$.

\begin{lemma}\label{lem:onestep}
Given $a_s\in A\backslash \{a_1, a_m\}$ and $a_t \in A$, let $a_s \approx a_t$.
If one vertex-path of $\Pi(a_1, a_m)$ contains $a_t$,
there exists a vertex-path of $\Pi(a_1, a_m)$ that includes $a_s$.
\end{lemma}

\begin{proof}
Let $(x_1, \dots, x_p) \in \Pi(a_1, a_m)$ and $a_t = x_{\eta}$ for some $1\leq \eta \leq p$.
If $a_s \in \{x_1, \dots, x_p\}$, the lemma holds evidently.
Henceforth, assume $a_s \notin \{x_1, \dots, x_p\}$.
Thus, we have two vertex-paths $(a_1= x_1, x_2, \dots, x_{\eta}= a_t, a_s) \in \Pi(a_1, a_s)$,
and $(a_s, a_t= x_{\eta}, \dots, x_{p-1}, x_p = a_m) \in \Pi(a_s, a_m)$.
	
Since $\underline{P}_i$ and $\overline{P}_i$ are completely reversed, either $a_s\underline{P}_ia_t$ or $a_s\overline{P}_ia_t$ holds.
Assume w.l.o.g.~that $a_s\underline{P}_ia_t$.
Thus, $x_{\eta} = a_t \neq a_1 = x_1$.
Pick $P_{i} \in \mathbb{D}$ with $r_1(P_i) = a_s$ by minimal richness.
Clearly, $\underline{P}_i$ and $P_i$ are distinct.
By no-restoration, we have a path $(P_{i}^1, \dots, P_i^{\nu})$ connecting $\underline{P}_i$ and $P_{i}$
such that $a_s P_{i}^{k}a_t$ for all $k = 1, \dots, \nu$.
Thus, $r_1(P_i^k) \neq a_t$ for all $k = 1, \dots, \nu$.
According to the path $(P_{i}^1, \dots, P_i^{\nu})$, we elicit a vertex-path $(y_1, \dots, y_q) \in  \Pi(a_1,a_s)$.
Clearly $a_t \notin \{y_1, \dots, y_q\}$.

It is evident that $\{a_1\}\subseteq \{y_1, \dots, y_q\}\cap \{x_1, \dots, x_p\}$.
If $\{y_1, \dots, y_q\}\cap \{x_1, \dots, x_p\}= \{a_1\}$,
the concatenated vertex-path $(a_1= y_1, \dots, y_q = a_s, a_t= x_{\eta}, \dots, x_p = a_m)$ includes $a_s$.
Next, we assume $\{a_1\}\subset \{y_1, \dots, y_q\}\cap \{x_1, \dots, x_p\}$.
We identify the alternative in the vertex-path $(y_1, \dots, y_q)$ that has the maximum index and is contained in the vertex-path $(x_1, \dots, x_p)$, i.e.,
$y_{\hat{k}} = x_{k^{\ast}}$ for some $1< \hat{k}< q$ and $1<k^{\ast}\leq p$ and
$\{y_{\hat{k}+1}, \dots, y_q\} \cap \{x_1, \dots, x_p\} = \emptyset$.
Recall that $a_t = x_{\eta}$, $1<\eta \leq p$ and $a_t \notin \{y_1, \dots, y_q\}$.
Therefore, either $1< k^{\ast}< \eta$ or $\eta< k^{\ast}\leq p$ holds.
If $1< k^{\ast}< \eta$, the concatenated vertex-path
$(a_1= x_1,\dots, x_{k^{\ast}} = y_{\hat{k}}, y_{\hat{k}+1}, \dots, y_q = a_s, a_t= x_{\eta}, \dots, x_p = a_m)$
includes $a_s$.
If $\eta< k^{\ast}\leq p$,
the concatenated vertex-path $(a_1= x_1, \dots, x_{\eta}= a_t, a_s =y_{q}, \dots, y_{\hat{k}+1}, y_{\hat{k}} = x_{k^{\ast}}, \dots, x_p = a_m)$
	includes $a_s$.
\end{proof}

\begin{lemma}\label{lem:preferencerestriction}
	Given distinct $a_p, a_t, a_s \in A$,
	let $a_t$ be included in every vertex-path of $\Pi(a_p, a_s)$.
	Given a preference $P_i \in \mathbb{D}$, we have $[r_1(P_i) = a_p] \Rightarrow [a_tP_ia_s]$ and
$[r_1(P_i) = a_s] \Rightarrow [a_tP_ia_p]$.
\end{lemma}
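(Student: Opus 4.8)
The plan is to argue by contradiction, using the no-restoration property together with the peak-extraction procedure from the proof of Lemma \ref{lem:vertex-path}. I will establish the first implication; the second then follows by interchanging the roles of $a_p$ and $a_s$, since $\approx$ is symmetric (so reversing a vertex-path of $\Pi(a_s,a_p)$ yields one of $\Pi(a_p,a_s)$, and hence the hypothesis ``$a_t$ lies on every vertex-path of $\Pi(a_p,a_s)$'' is symmetric in $a_p$ and $a_s$).

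Suppose $r_1(P_i)=a_p$ but $a_sP_ia_t$. By minimal richness pick $P_i'\in\mathbb{D}$ with $r_1(P_i')=a_s$; then trivially $a_sP_i'a_t$. Thus $a_s$ is ranked above $a_t$ in both $P_i$ and $P_i'$. Since $\mathbb{D}$ is a no-restoration domain, there is a path $\{P_i^k\}_{k=1}^{v}\subseteq\mathbb{D}$ connecting $P_i$ and $P_i'$ with no $\{a_s,a_t\}$-restoration. Because the relative ranking of $a_s$ and $a_t$ agrees at the two endpoints and switches at most once along the path, it is in fact constant, i.e.\ $a_sP_i^ka_t$ for every $k=1,\dots,v$.

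Now I run the construction in the proof of Lemma \ref{lem:vertex-path} on this path: partitioning $\{P_i^k\}_{k=1}^{v}$ by peaks yields a sequence of peaks $x_1,\dots,x_q$ with $x_1=a_p$, $x_q=a_s$, and $x_j\approx x_{j+1}$ for all $j$; deleting repetitions produces a vertex-path $\mathcal{P}\in\Pi(a_p,a_s)$ all of whose vertices occur among $x_1,\dots,x_q$, hence each vertex of $\mathcal{P}$ equals $r_1(P_i^k)$ for some $k$. By hypothesis $a_t\in\mathcal{P}$, so $r_1(P_i^k)=a_t$ for some $k$, and therefore $a_tP_i^ka_s$ (as $a_t\neq a_s$), contradicting the previous paragraph. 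This proves $[r_1(P_i)=a_p]\Rightarrow[a_tP_ia_s]$, and the symmetric argument gives $[r_1(P_i)=a_s]\Rightarrow[a_tP_ia_p]$.

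The only step that needs a moment's care is the claim that every vertex of the extracted vertex-path $\mathcal{P}$ is realized as the top of some preference along the path $\{P_i^k\}$; this is immediate from the repetition-deletion construction in Lemma \ref{lem:vertex-path}, since that step only discards elements of the peak sequence $x_1,\dots,x_q$ and never introduces new ones. Everything else is a direct application of no-restoration and minimal richness.
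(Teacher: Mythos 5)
Your proof is correct and follows essentially the same route as the paper's: both use minimal richness to obtain $P_i'$ with $r_1(P_i')=a_s$, invoke no-restoration to get a path along which $a_s$ stays above $a_t$ (hence no preference on the path has peak $a_t$), and then extract a vertex-path of $\Pi(a_p,a_s)$ excluding $a_t$ via the construction of Lemma \ref{lem:vertex-path}, contradicting the hypothesis. The extra care you take in justifying that the ranking is constant along the path and that every vertex of the extracted vertex-path is a peak of some preference on the path is exactly what the paper leaves implicit.
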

\begin{proof}
	Let $r_1(P_i) = a_p$. Suppose $a_s P_i a_t$.
	Pick arbitrary $P_{i}' \in \mathbb{D}$ with $r_1(P_i') = a_s$ by minimal richness.
	Since $\mathbb{D}$ is a no-restoration domain, we have a path $(P_{i}^1, \dots, P_i^l)$ connecting $P_{i}$ and $P_{i}'$ such that $a_{s}P_{i}^{k}a_t$ for all $k = 1, \dots, l$.
	Thus, $r_1(P_i^k) \neq a_t$ for all $k = 1, \dots, l$.
Consequently, according to path $(P_{i}^1, \dots, P_i^l)$, we elicit a vertex-path that connects $a_p$ and $a_s$, and excludes $a_t$.
	This contradicts the hypothesis of the lemma.
	Therefore, $a_tP_ia_s$.
	Symmetrically, if $r_1(P_i) = a_s$, then $a_tP_ia_p$.
\end{proof}

This completes the verification of the first step.

\begin{lemma}\label{lem:degree}
Given $a_s \in A\backslash \{a_1, a_m\}$,
there exists a vertex-path of $\Pi(a_1, a_m)$ that includes $a_s$.
\end{lemma}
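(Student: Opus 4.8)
The plan is to prove the stronger statement that \emph{every} alternative of $A$ lies on some vertex-path of $\Pi(a_1,a_m)$; write $\mathcal{B}$ for the set of such alternatives, so the claim amounts to $A\setminus\{a_1,a_m\}\subseteq\mathcal{B}$. By Lemma \ref{lem:vertex-path} we have $\Pi(a_1,a_m)\neq\emptyset$, hence $a_1,a_m\in\mathcal{B}$. The engine of the argument is a \emph{propagation} property: if $a_t\in\mathcal{B}$ and $a_s\approx a_t$ with $a_s\in A\setminus\{a_1,a_m\}$, then $a_s\in\mathcal{B}$. Granting it, fix an interior $a_s$; since $G_{\approx}^A(\mathbb{D})$ is connected (Lemma \ref{lem:vertex-path}), choose a vertex-path $a_1=p_0\approx p_1\approx\cdots\approx p_k=a_s$. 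An induction on $i$ gives $p_i\in\mathcal{B}$ for all $i$: we have $p_0=a_1\in\mathcal{B}$, and for $i\ge 1$ either $p_i=a_m\in\mathcal{B}$, or $p_i$ is interior, in which case $p_i\approx p_{i-1}\in\mathcal{B}$ and propagation yields $p_i\in\mathcal{B}$. In particular $a_s=p_k\in\mathcal{B}$.

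It remains to establish propagation. If $a_t$ is interior, this is exactly Lemma \ref{lem:onestep}. The substantive case is $a_t\in\{a_1,a_m\}$; I treat $a_t=a_1$ (the case $a_t=a_m$ is analogous, with $\underline{P}_i$ and the pair $\{a_s,a_m\}$ replacing $\overline{P}_i$ and $\{a_s,a_1\}$, and an appended rather than prepended edge). From $a_s\approx a_1$ pick $Q_i\in\mathbb{D}$ with $r_1(Q_i)=a_s$ and $r_2(Q_i)=a_1$. Apply the no-restoration property to the pair $\{a_s,a_1\}$ and the preferences $Q_i,\overline{P}_i\in\mathbb{D}$: since $a_s\neq a_1$, alternative $a_s$ is ranked above $a_1$ both in $Q_i$ and in $\overline{P}_i$ (where $a_1$ is last), so on a path from $Q_i$ to $\overline{P}_i$ with no $\{a_s,a_1\}$-restoration the relative ranking of $a_s$ and $a_1$ never switches; in particular $a_1$ is never top-ranked along the path. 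Eliciting the successive peaks along the path, exactly as in the proof of Lemma \ref{lem:vertex-path}, therefore produces a vertex-path of $G_{\approx}^A(\mathbb{D})$ from $a_s$ to $a_m$ that does not pass through $a_1$. Prepending the edge $a_1\approx a_s$ to it yields a non-repeating vertex-path from $a_1$ to $a_m$ containing $a_s$, so $a_s\in\mathcal{B}$.

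The main obstacle is precisely this endpoint case $a_t\in\{a_1,a_m\}$, which is not covered by Lemma \ref{lem:onestep}. The key point is to run no-restoration on the pair $\{a_s,a_1\}$ rather than on $\{a_1,a_m\}$: this pins $a_1$ strictly below the peak throughout, which is exactly what guarantees that the elicited $a_s$-to-$a_m$ vertex-path avoids $a_1$ and hence can be concatenated with the edge $\{a_1,a_s\}$ without repeating a vertex. Everything else — connectedness of $G_{\approx}^A(\mathbb{D})$, the one-step extension Lemma \ref{lem:onestep}, and the availability of $\overline{P}_i,\underline{P}_i\in\mathbb{D}$ guaranteed by diversity — is routine.
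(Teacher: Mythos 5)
Your proof is correct, and it reaches the conclusion by a noticeably different route from the paper's. The paper first builds two one-sided vertex-paths by eliciting peaks along no-restoration paths --- one in $\Pi(a_1,a_s)$ avoiding $a_m$ (from $\underline{P}_i$ to a preference peaked at $a_s$, run on the pair $\{a_s,a_m\}$) and one in $\Pi(a_s,a_m)$ avoiding $a_1$ --- and then either concatenates them at $a_s$ or glues them at their ``last common vertex'' $a_t$ to obtain a member of $\Pi(a_1,a_m)$ through $a_t$; only then does it propagate from $a_t$ to $a_s$ by iterating Lemma \ref{lem:onestep} along the interior segment $\{x_{\hat{k}},\dots,x_p\}$. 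You instead strengthen the propagation step so that it also applies across an edge incident to $a_1$ or $a_m$ (which Lemma \ref{lem:onestep} excludes), by running no-restoration on the pair $\{a_s,a_1\}$ (resp.\ $\{a_s,a_m\}$) to elicit an $a_s$-to-$a_m$ vertex-path avoiding $a_1$ and prepending the edge $a_1\approx a_s$; with that in hand you can induct from $a_1$ along an arbitrary connecting vertex-path. Your endpoint construction is essentially the same peak-elicitation device the paper uses for its one-sided paths, just deployed at a different point, and it lets you bypass the slightly fiddly disjoint-tails gluing argument in this lemma (the paper still needs that gluing technique in Lemmas \ref{lem:onestep} and \ref{lem:coincidence}, so nothing is saved globally). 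The trade-off is symmetric: the paper avoids proving the endpoint case of propagation, you avoid the concatenation-at-the-last-common-vertex case analysis; both arguments are of comparable length and rest on the same ingredients (connectedness of $G_{\approx}^A(\mathbb{D})$, diversity supplying $\underline{P}_i,\overline{P}_i$, and no-restoration).
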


\begin{proof}
Pick arbitrary $P_{i} \in \mathbb{D}$ with $r_1(P_i) = a_s$ by minimal richness.
Since $a_s\underline{P}_ia_m$ and $a_s P_ia_m$,
we by no-restoration have a path $(P_{i}^1, \dots, P_i^l)$ connecting $\underline{P}_{i}$ and $P_{i}$ such that
$a_{s}P_{i}^{k}a_m$ for all $k = 1, \dots, l$.
Thus, $r_1(P_{i}^{k}) \neq a_m$ for all $k = 1, \dots, l$.
According to the path $(P_{i}^1, \dots, P_i^l)$, we elicit a vertex-path $(x_1, \dots, x_p) \in \Pi(a_1, a_s)$ that excludes $a_m$.
Symmetrically, we have a vertex-path $(y_1, \dots, y_q) \in \Pi(a_s, a_m)$ that excludes $a_1$.
Thus, $\{a_s\} \subseteq \{x_1, \dots, x_p\}\cap \{y_1, \dots, y_q\}$.
If $\{x_1, \dots, x_p\}\cap \{y_1, \dots, y_q\} = \{a_s\}$, the concatenated vertex-path
$(a_1= x_{1},\dots, x_{p} = a_s = y_1, \dots, y_q =a_m)$ includes $a_s$.
If $\{a_s\} \subset \{x_1, \dots, x_p\}\cap \{y_1, \dots, y_q\}$,
we identify the alternative $a_t$ included in both vertex-paths $(x_1, \dots, x_p)$ and $(y_1, \dots, y_q)$ with the maximum index in $(x_1, \dots, x_p)$ and the minimum index in $(y_1, \dots, y_q)$, i.e., $a_t = x_{\hat{k}}=y_{k^{\ast}}$ for some $1 < \hat{k} < p$ and $1< k^{\ast} < q$ such that
$\{x_{1}, \dots, x_{\hat{k}-1}\}\cap\{y_{k^{\ast} +1}, \dots, y_q\} = \emptyset$.
Thus, the concatenated vertex-path $(a_1=x_1, \dots, x_{\hat{k}-1}, x_{\hat{k}} = a_t =  y_{k^{\ast}}, y_{k^{\ast}+1}, \dots, y_q = a_m)$
includes $a_t$ and excludes $a_s$.
Last, along the vertex-path $(a_t= x_{\hat{k}},  \dots,  x_{p}= a_s)$, by repeatedly applying Lemma \ref{lem:onestep} step by step from $a_t$ to $a_s$,
we eventually find a vertex-path of $\Pi(a_1, a_m)$ that includes $a_s$.
\end{proof}

This completes the verification of the second step, and shows the path-inclusion condition.

\begin{lemma}\label{lem:sp}
	If $\Pi(a_1, a_m)$ is a singleton set, then $\mathbb{D}$ is a single-peaked domain w.r.t.~$\prec$, and hence a hybrid domain.
\end{lemma}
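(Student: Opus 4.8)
The plan is to extract the promised single-peaked order directly from the graph $G_{\approx}^A(\mathbb{D})$. Write $\Pi(a_1,a_m)=\{\mathcal{P}\}$ for the unique vertex-path connecting $a_1$ and $a_m$. By Lemma~\ref{lem:degree}, every $a_s\in A\setminus\{a_1,a_m\}$ lies on \emph{some} vertex-path of $\Pi(a_1,a_m)$, hence on $\mathcal{P}$ itself; so $\mathcal{P}$ visits all of $A$. Write $\mathcal{P}=(x_1,\dots,x_m)$ with $x_1=a_1$ and $x_m=a_m$, and call the induced linear order on $A$ the $\mathcal{P}$-order (so $x_j$ precedes $x_{j+1}$ for every $j$). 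The goal is to show that $\mathbb{D}$ is single-peaked w.r.t.~the $\mathcal{P}$-order and that this order is in fact $\prec$.

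The crux is to show that the $\mathcal{P}$-order propagates to all sub-intervals: for all $a_p,a_q\in A$, \emph{every} vertex-path in $\Pi(a_p,a_q)$ contains every vertex of $\langle a_p,a_q\rangle^{\mathcal{P}}$. Suppose not, and pick $\mathcal{Q}\in\Pi(a_p,a_q)$ and an alternative $a_r$ strictly between $a_p$ and $a_q$ on $\mathcal{P}$ (w.l.o.g.~$a_p$ precedes $a_q$) with $a_r\notin\mathcal{Q}$. Concatenating the initial segment of $\mathcal{P}$ from $a_1$ to $a_p$, the path $\mathcal{Q}$, and the terminal segment of $\mathcal{P}$ from $a_q$ to $a_m$ gives a walk from $a_1$ to $a_m$ none of whose vertices is $a_r$; deleting cycles from it yields a genuine vertex-path in $\Pi(a_1,a_m)$ that still omits $a_r$. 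Since $a_r\in\mathcal{P}$, this is a second element of $\Pi(a_1,a_m)$, contradicting $|\Pi(a_1,a_m)|=1$. I expect this splicing-and-cycle-deletion step to be the only one needing genuine care: one must verify that the extracted simple path indeed avoids $a_r$ — true because each of its vertices lies on $\mathcal{P}$ up to $a_p$, on $\mathcal{Q}$, or on $\mathcal{P}$ from $a_q$ onward, none of which contains $a_r$ — and is therefore distinct from $\mathcal{P}$.

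Granting this, single-peakedness w.r.t.~the $\mathcal{P}$-order follows from Lemma~\ref{lem:preferencerestriction}: if $a_t$ lies strictly between $r_1(P_i)$ and $a_s$ on $\mathcal{P}$, then by the previous step $a_t$ belongs to every vertex-path of $\Pi\big(r_1(P_i),a_s\big)$, so Lemma~\ref{lem:preferencerestriction} gives $a_tP_ia_s$; this is exactly the single-peakedness inequality. To finish I would identify the $\mathcal{P}$-order with $\prec$ using diversity. The preference $\underline{P}_i\in\mathbb{D}$ has peak $a_1=x_1$, the $\mathcal{P}$-minimal alternative, and is single-peaked w.r.t.~the $\mathcal{P}$-order; hence it must rank the alternatives exactly as $x_1,x_2,\dots,x_m$, i.e.~$r_k(\underline{P}_i)=x_k$ for all $k$. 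But $r_k(\underline{P}_i)=a_k$ by the choice of $\underline{P}_i$, so $x_k=a_k$ for every $k$ and the $\mathcal{P}$-order coincides with $\prec$.

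Therefore $\mathbb{D}\subseteq\mathbb{D}_{\prec}$, i.e.~$\mathbb{D}$ is a single-peaked domain w.r.t.~$\prec$. Finally, taking $\underline{k}=1$ and $\overline{k}=2$, conditions (i)--(iii) of Definition~\ref{def:hybrid*} all hold: (i) holds since $\mathbb{D}_{\prec}(1,2)=\mathbb{D}_{\prec}$; (ii) holds since each $a_k$ with $1<k<m$ lies on $\mathcal{P}\in\Pi(a_1,a_m)$; and (iii) is vacuous because $\overline{k}-\underline{k}=1$. Hence $\mathbb{D}$ is a hybrid\textsuperscript{$\ast$} domain.
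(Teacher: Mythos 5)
Your proof is correct and follows essentially the same route as the paper's: Lemma \ref{lem:degree} puts every alternative on the unique path $\mathcal{P}$, Lemma \ref{lem:preferencerestriction} then yields single-peakedness, the labeling of $\underline{P}_i$ pins the order down to $\prec$, and conditions (i)--(iii) of Definition \ref{def:hybrid*} follow. The only difference is that you explicitly justify, via the splice-and-delete-cycles argument, why every vertex-path in $\Pi(a_p,a_q)$ must contain all of $\langle a_p,a_q\rangle^{\mathcal{P}}$ --- a step the paper compresses into the assertion that $G_{\approx}^A(\mathbb{D})$ is the line of $\mathcal{P}$ --- so your write-up is if anything more careful at that point.
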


\begin{proof}
Let $\mathcal{P}$ be the unique vertex-path of $\Pi(a_1, a_m)$.
Then, Lemma \ref{lem:degree} implies that all alternatives are included in $\mathcal{P}$.
Thus, the path-inclusion condition is satisfied.
Furthermore, by uniqueness of $\mathcal{P}$, we can infer that $G_{\approx}^A(\mathbb{D})$ is the line of $\mathcal{P}$.
Then Lemma \ref{lem:preferencerestriction} implies that $\mathbb{D}$ is a single-peaked domain on $\mathcal{P}$, i.e.,
for all $P_i \in \mathbb{D}$ and distinct $a_s, a_t \in A$, $\big[a_s \in \langle r_1(P_i), a_t\rangle^{\mathcal{P}}\big] \Rightarrow [a_sP_ia_t]$.
Furthermore, by the labeling of alternatives in the preferences $\underline{P}_i$ and $\overline{P}_i$,
it must be the case that all alternatives of $\mathcal{P}$ are located in the order of $\prec$, i.e.,
$\mathcal{P} = (a_1, \dots, a_k,a_{k+1}, \dots, a_m)$.
Therefore, $\mathbb{D}$ is a single-peaked domain w.r.t.~$\prec$, and hence
$\mathbb{D} \subseteq \mathbb{D}_{\prec}(k, k+1)$ for all $k= 1, \dots, m-1$,
which indicates that the no-leaf condition is vacuously satisfied.
In conclusion, $\mathbb{D}$ is a hybrid domain.
\end{proof}

This completes the verification of the third step.\medskip

In the rest of proof, we assume that $\Pi(a_1, a_m)$ is not a singleton set.
Note that $\Pi(a_1, a_m)$ is a finite nonempty set.
Hence, we label $\Pi(a_1, a_m) = \{\mathcal{P}_1, \dots, \mathcal{P}_{\omega}\}$, $\omega\geq 2$,
and make sure that each vertex-path of $\Pi(a_1, a_m)$ starts from $a_1$ (at left) and ends at $a_m$ (at right).
Thus, we can identify the left maximum common part and the right maximum common part of all vertex-paths of $\Pi(a_1, a_m)$, i.e.,
there exist two alternatives $a_{\underline{k}}, a_{\overline{k}} \in A$ (either $\underline{k}\leq \overline{k}$ or $\underline{k} \geq \overline{k}$, so far)
such that the following three conditions are satisfied:
\begin{itemize}
\item[\bf (1)] $a_{\underline{k}}, a_{\overline{k}} \in \mathcal{P}_l$ for all $\mathcal{P}_l \in \Pi(a_1, a_m)$,

\item[\bf (2)] $\langle a_1, a_{\underline{k}}\rangle^{\mathcal{P}_l} = \langle a_1, a_{\underline{k}}\rangle^{\mathcal{P}_{\nu}}$
and $\langle a_{\overline{k}}, a_m\rangle^{\mathcal{P}_l} = \langle a_{\overline{k}}, a_m\rangle^{\mathcal{P}_{\nu}}$ for all $\mathcal{P}_l, \mathcal{P}_{\nu} \in \Pi(a_1, a_m)$, and

\item[\bf (3)] there exist two vertex-paths $\mathcal{P}, \mathcal{P}' \in \Pi(a_1, a_m)$ and
two alternatives $a_s \in \langle a_{\underline{k}}, a_m\rangle^{\mathcal{P}}$ and $a_t \in \langle a_{\underline{k}}, a_m\rangle^{\mathcal{P}'}$ such that $a_s \neq a_t$, $a_s \approx a_{\underline{k}}$ and $a_t \approx a_{\underline{k}}$; and
there exist two vertex-paths $\overline{\mathcal{P}}, \overline{\mathcal{P}}' \in \Pi(a_1, a_m)$ and
two alternatives $a_p \in \langle a_1, a_{\overline{k}}\rangle^{\overline{\mathcal{P}}}$ and
$a_q \in \langle a_1, a_{\overline{k}}\rangle^{\overline{\mathcal{P}}'}$ such that $a_p \neq a_q$, $a_p \approx a_{\overline{k}}$ and $a_q \approx a_{\overline{k}}$.
\end{itemize}

According to condition {\bf (2)}, to avoid that $\Pi(a_1, a_m)$ degenerates to a singleton set,
it must be the case that $a_{\underline{k}}\neq a_{\overline{k}}$.

Fix arbitrary $\mathcal{P}_l \in \Pi(a_1, a_m)$.
We first claim $\langle a_1, a_{\underline{k}}\rangle^{\mathcal{P}_l} \cap  \langle a_{\overline{k}}, a_m\rangle^{\mathcal{P}_l} = \emptyset$.
Suppose not, i.e., there exists $a_s \in \langle a_1, a_{\underline{k}}\rangle^{\mathcal{P}_l} \cap  \langle a_{\overline{k}}, a_m\rangle^{\mathcal{P}_l}$ such that
$\langle a_1, a_{s}\rangle^{\mathcal{P}_l} \cap  \langle a_{s}, a_m\rangle^{\mathcal{P}_l} = \{a_s\}$.
Since $a_{\underline{k}}\neq a_{\overline{k}}$, we know either $a_s \neq a_{\underline{k}}$ or $a_s \neq a_{\overline{k}}$.
Consequently, the concatenation of $\langle a_1, a_{s}\rangle^{\mathcal{P}_l}$ and $\langle a_{s}, a_m\rangle^{\mathcal{P}_l}$ forms a vertex-path
that connects $a_1$ and $a_m$, but excludes either $a_{\underline{k}}$ or $a_{\overline{k}}$. This contradicts condition {\bf (1)}.
Therefore, $\langle a_1, a_{\underline{k}}\rangle^{\mathcal{P}_l} \cap  \langle a_{\overline{k}}, a_m\rangle^{\mathcal{P}_l} = \emptyset$.
This implies $1\leq \underline{k}<m$ and $1< \overline{k}\leq m$.
Furthermore, we claim $\langle a_1, a_{\underline{k}}\rangle^{\mathcal{P}_l} \cup \langle a_{\overline{k}}, a_m\rangle^{\mathcal{P}_l} \neq A$.
Otherwise, condition {\bf (2)} implies $\langle a_1, a_{\underline{k}}\rangle^{\mathcal{P}_{\nu}} \cup  \langle a_{\overline{k}}, a_m\rangle^{\mathcal{P}_{\nu}} = A$ for all $\mathcal{P}_{\nu} \in \Pi(a_1, a_m)$, and $\Pi(a_1, a_m)$ degenerates to a singleton set. Contradiction.

\begin{lemma}\label{lem:coincidence}
The following two statements hold:
\begin{itemize}
\item[\rm (i)] Set $\Pi(a_1, a_{\underline{k}})$ is a singleton set of the vertex-path $(a_1, \dots, a_k, a_{k+1},\dots, a_{\underline{k}})$,
set $\Pi(a_{\overline{k}}, a_m)$ is a singleton set of the vertex-path $(a_{\overline{k}}, \dots,a_{k}, a_{k+1},\dots, a_m)$, and $\overline{k}-\underline{k} > 1$.

\item[\rm (ii)]
Given $M = \{a_{\underline{k}}, \dots, a_k, a_{k+1}, \dots, a_{\overline{k}}\}$,
graph $G_{\approx}^A(\mathbb{D})$ is a combination of the vertex-path
$(a_1, \dots, a_k, a_{k+1},\dots, a_{\underline{k}})$, the subgraph $G_{\approx}^M(\mathbb{D})$ and the vertex-path $(a_{\overline{k}}, \dots,a_{k}, a_{k+1},\dots, a_m)$, i.e.,
each pair $(a_s, a_t)$ such that $a_s \approx a_t$ is an edge in $(a_1, \dots, a_k, a_{k+1},\dots, a_{\underline{k}})$, $G_{\approx}^M(\mathbb{D})$ or $(a_{\overline{k}}, \dots,a_{k}, a_{k+1},\dots, a_m)$.
\end{itemize}
\end{lemma}
\begin{proof}
By symmetry, we focus on showing that $\Pi(a_1, a_{\underline{k}})$ is a singleton set of the vertex-path $(a_1, \dots, a_k, a_{k+1},\dots, a_{\underline{k}})$.

We first show $\Pi(a_1, a_{\underline{k}})$ is a singleton set.
If $a_1 = a_{\underline{k}}$, it is evident that $\Pi(a_1, a_{\underline{k}})$ is a singleton set.
We further assume $a_1 \neq a_{\underline{k}}$.
Suppose that $\Pi(a_1, a_{\underline{k}})$ is not a singleton set.
Pick arbitrary $\mathcal{P}_l \equiv (x_1, \dots, x_q) \in \Pi(a_1, a_m)$.
By condition {\bf (1)}, we know $a_{\underline{k}} = x_p$ for some $1<p<q$.
Thus, $\langle a_1, a_{\underline{k}}\rangle^{\mathcal{P}_l} = (x_1, \dots, x_p) \in \Pi(a_1, a_{\underline{k}})$.
Since $\Pi(a_1, a_{\underline{k}})$ is not a singleton set,
we have another vertex-path $(y_1, \dots, y_u) \in \Pi(a_1, a_{\underline{k}})$.
Clearly, we can identify $1< \hat{k} < \min(p, u)-1$ such that $x_k = y_k$ for all $k = 1, \dots, \hat{k}$ and $x_{\hat{k}+1} \neq y_{\hat{k}+1}$.
According to vertex-path $(a_1 =y_1, \dots, y_u = a_{\underline{k}})$ and $(a_{\underline{k}}=x_p, \dots, x_q=a_m)$,
we know either $\{a_{\underline{k}}\}=\{y_1, \dots, y_u\}\cap \{x_p, \dots, x_q\}$ or
$\{a_{\underline{k}}\} \subset \{y_1, \dots, y_u\}\cap \{x_p, \dots, x_q\}$.
If $\{a_{\underline{k}}\}=\{y_1, \dots, y_u\}\cap \{x_p, \dots, x_q\}$,
we have the concatenated vertex-path $\mathcal{P}_{\nu} \equiv (a_1 =y_1, \dots, y_u = a_{\underline{k}}=x_p, \dots, x_q=a_m) \in \Pi(a_1, a_m)$.
However, $\langle a_1, a_{\underline{k}}\rangle^{\mathcal{P}_l} = (x_1, \dots, x_p) \neq (y_1, \dots, y_u)
= \langle a_1, a_{\underline{k}}\rangle^{\mathcal{P}_{\nu}}$ contradicts condition {\bf (2)}.
If $\{a_{\underline{k}}\} \subset \{y_1, \dots, y_u\}\cap \{x_p, \dots, x_q\}$,
we can identify $1\leq s<u$ and $p < t \leq q$ such that $y_s = x_t$ and $\{y_1, \dots, y_{s-1}\}\cap \{x_{t+1}, \dots, x_q\} = \emptyset$.
Then, we have the concatenated vertex-path $(y_1, \dots, y_{s-1}, y_s = x_t, x_{t+1}, \dots, x_q) \in \Pi(a_1, a_m)$ which excludes $a_{\underline{k}}$.
This contradicts condition {\bf (1)}.
In conclusion, $\Pi(a_1, a_{\underline{k}})$ is a singleton set.

Now, let $\Pi(a_1, a_{\underline{k}}) = \{\mathcal{P}\}$.
We next show that $\mathcal{P}$ contains and only contains vertices of $a_1, \dots, a_k, a_{k+1},\dots, a_{\underline{k}}$.
It is equivalent to show $[a_s \notin \mathcal{P}] \Leftrightarrow [s >\underline{k}]$.
Suppose by contradiction that there exists $\underline{k}< s \leq m$ such that $a_s \in \mathcal{P}$.
Clearly, $\underline{k}>1$, and $a_s$ is located between $a_1$ and $a_{\underline{k}}$ in every vertex-path of $\Pi(a_1, a_{\underline{k}})$.
Consequently, Lemma \ref{lem:preferencerestriction} implies $a_sP_ia_{\underline{k}}$ for all $P_i \in \mathbb{D}$ with $r_1(P_i) = a_1$.
However, by the labeling of alternatives in $\underline{P}_i$ who has the peak $a_1$, we have $a_{\underline{k}}\underline{P}_ia_s$. Contradiction.
Therefore, we have $[a_s \notin \mathcal{P}] \Leftarrow [s >\underline{k}]$.
Next, we show $[a_s \notin \mathcal{P}] \Rightarrow [s >\underline{k}]$.
Given $a_s \notin \mathcal{P}$, we claim that $a_{\underline{k}}$ is included in every vertex-path of $\Pi(a_1, a_s)$.
Suppose not, i.e., we have a path $(x_1, \dots, x_p) \in \Pi(a_1, a_s)$ that excludes $a_{\underline{k}}$.
By Lemma \ref{lem:degree}, we also have a path $\mathcal{P}_v \equiv (y_1, \dots, y_q) \in \Pi(a_1, a_m)$ that includes $a_s$.
Let $a_s = y_{\ell}$ for some $1\leq \ell \leq q$.
By condition {\bf (1)} and {\bf (2)}, we know $a_{\underline{k}} = y_{\eta}$ for some $1 \leq \eta < q$.
Moreover, since $(y_1, \dots, y_{\eta}) \in \Pi(a_1, a_{\underline{k}}) = \{\mathcal{P}\}$, it is true that $\mathcal{P} = (y_1, \dots, y_{\eta})$.
Therefore, $a_s \notin \mathcal{P}$ implies $\eta < \ell \leq q$, and hence we have a vertex-path $(y_{\ell}, \dots, y_{q}) \in \Pi(a_s, a_m)$ that excludes $a_{\underline{k}}$.
Thus, according to the vertex-paths $(a_1 = x_1, \dots, x_p = a_s)$ and
$(a_s = y_{\ell}, \dots, y_{q} = a_m)$,
we can construct a vertex-path that connects $a_1$ and $a_m$, and excludes $a_{\underline{k}}$.
This contradicts condition {\bf (1)}.
Therefore, $a_{\underline{k}}$ is included in every vertex-path of $\Pi(a_1, a_s)$.
Then, Lemma \ref{lem:preferencerestriction} implies $a_{\underline{k}}\underline{P}_ia_s$.
Moreover, according to the labeling of alternatives in $\underline{P}_i$,
$a_{\underline{k}}\underline{P}_ia_s$ implies $s >\underline{k}$.
Therefore, we have $[a_s \notin \mathcal{P}] \Rightarrow [s >\underline{k}]$.
In conclusion, $[a_s \notin \mathcal{P}] \Leftrightarrow [s >\underline{k}]$.

Now, since $\mathcal{P}$ contains and only contains vertices of $a_1, \dots, a_k, a_{k+1},\dots, a_{\underline{k}}$,
according to the labeling of alternatives in $\underline{P}_i$,
we know that all alternatives of $\mathcal{P}$ must be located in the order of the natural order $\prec$.

Recall that $\langle a_1, a_{\underline{k}}\rangle^{\mathcal{P}_l}\cap \langle a_{\overline{k}}, a_m\rangle^{\mathcal{P}_l} = \emptyset$ and $\langle a_1, a_{\underline{k}}\rangle^{\mathcal{P}_l}\cup \langle a_{\overline{k}}, a_m\rangle^{\mathcal{P}_l} \neq A$ for all $\mathcal{P}_l \in \Pi(a_1, a_m)$.
Therefore, according to statement (i), it must be the case that $\overline{k}-\underline{k}>1$.
This completes the verification of statement (i).\medskip

To prove statement (ii),
it suffices to show that
(a) no alternative of $\{a_1, \dots, a_{\underline{k}-1}\}$ is strongly connected to any alternative of $\{a_{\underline{k}+1}, \dots, a_m\}$, and
(b) no alternative of $\{a_{\overline{k}+1}, \dots, a_m\}$ is strongly connected to any alternative of $\{a_1, \dots, a_{\overline{k}-1}\}$.
Indeed, items (a) and (b) are symmetric.
Hence, we focus on the verification of item (a).
Suppose by contradiction that there exists a pair $(a_s, a_t)$ such that $1\leq s< \underline{k}$, $\underline{k}< t \leq m$ and $a_s \approx a_t$.
Thus, we have the vertex-path $(a_1, \dots, a_s, a_t) \in \Pi(a_1, a_t)$ that excludes $a_{\underline{k}}$.
Meanwhile, by Lemma \ref{lem:degree}, we have a vertex-path $\mathcal{P}_v \equiv (y_1, \dots, y_q) \in \Pi(a_1, a_m)$ that includes $a_t$.
By conditions {\bf (1)} and {\bf (2)}, we know $a_{\underline{k}} = y_{\eta}$ for some $1 \leq \eta < q$.
Thus, $(a_1, \dots, a_k,a_{k+1}, \dots, a_{\underline{k}}) = (y_1, \dots, y_{\eta})$, and moreover
$\underline{k} < t\leq m$ implies $a_t = y_{\ell}$ for some $\eta< \ell \leq q$.
Hence, we have a vertex-path $(y_{\ell}, \dots, y_{q}) \in \Pi(a_t, a_m)$ that excludes $a_{\underline{k}}$.
Now, according to the vertex-paths $(a_1, \dots, a_s, a_t)$ and $(a_t = y_{\ell}, \dots, y_{q} = a_m)$,
we can construct a vertex-path that connects $a_1$ and $a_m$, and excludes $a_{\underline{k}}$.
This contradicts condition {\bf (1)}.
In conclusion, $G_{\approx}^A(\mathbb{D})$ is a combination of
$\langle a_1, a_{\underline{k}}\rangle$, $G_{\approx}^M(\mathbb{D})$ and $\langle a_{\overline{k}}, a_m\rangle$.
This completes the verification of statement (ii) and hence proves the lemma.
\end{proof}

Henceforth, let $L = \{a_1, \dots, a_k, a_{k+1},\dots, a_{\underline{k}}\}$ and
$R = \{a_{\overline{k}}, \dots,a_{k}, a_{k+1},\dots, a_m\}$.
According to the natural order $\prec$, we know
$L = [a_1, a_{\underline{k}}]$, $M = [a_{\underline{k}}, a_{\overline{k}}]$ and $R = [a_{\overline{k}}, a_m]$.

Now, we are ready to show that $\mathbb{D}$ is a $(\underline{k}, \overline{k})$-hybrid domain.

\begin{lemma}\label{lem:hybrid*}
Domain $\mathbb{D}$ is a $(\underline{k}, \overline{k})$-hybrid domain.
\end{lemma}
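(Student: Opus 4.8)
The plan is to verify, for the pair $(\underline{k},\overline{k})$ pinned down above, the three conditions of Definition \ref{def:hybrid*}. Two of them are already in hand: condition (ii) is exactly Lemma \ref{lem:degree}, and condition (iii)---recall we are in the case $\overline{k}-\underline{k}>1$---is the ``no leaf'' assertion of Lemma \ref{lem:combination}(ii), since $[a_{\underline{k}},a_{\overline{k}}]=M$. So the substance is condition (i), i.e., that every $P_i\in\mathbb{D}$ is $(\underline{k},\overline{k})$-hybrid in the sense of Definition \ref{defn:hybriddomain}; that is where the bulk of the work lies.

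First I would make the shape of $G_{\approx}^A(\mathbb{D})$ explicit. By Lemma \ref{lem:coincidence}, $\langle a_1,a_{\underline{k}}\rangle$ and $\langle a_{\overline{k}},a_m\rangle$ are the natural-order lines $a_1-a_2-\cdots-a_{\underline{k}}$ and $a_{\overline{k}}-\cdots-a_m$, and by Lemma \ref{lem:combination}(i) every edge of $G_{\approx}^A(\mathbb{D})$ lies in $\langle a_1,a_{\underline{k}}\rangle$, in $G_{\approx}^M(\mathbb{D})$, or in $\langle a_{\overline{k}},a_m\rangle$. Hence for any $\ell$ with $2\le\ell\le\underline{k}$, the only edge of $G_{\approx}^A(\mathbb{D})$ with one endpoint in $\{a_1,\dots,a_{\ell-1}\}$ and the other outside this set is the consecutive pair $\{a_{\ell-1},a_\ell\}$: a $G_{\approx}^M(\mathbb{D})$- or $\langle a_{\overline{k}},a_m\rangle$-edge has both endpoints in $M$ or in $R$ (disjoint from $\{a_1,\dots,a_{\underline{k}-1}\}$), and among the edges of $\langle a_1,a_{\underline{k}}\rangle$ only $\{a_{\ell-1},a_\ell\}$ straddles $\{a_1,\dots,a_{\ell-1}\}$. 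Therefore $a_\ell$ is a cut vertex, and every vertex-path from a vertex in $\{a_1,\dots,a_{\ell-1}\}$ to a vertex in $\{a_{\ell+1},\dots,a_m\}$ contains $a_\ell$. Symmetrically, for each $\ell$ with $\overline{k}\le\ell<m$, every vertex-path from $\{a_1,\dots,a_{\ell-1}\}$ to $\{a_{\ell+1},\dots,a_m\}$ contains $a_\ell$.

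Then I would feed these cut-vertex facts into Lemma \ref{lem:preferencerestriction}. Fix $P_i\in\mathbb{D}$. For condition (i) of Definition \ref{defn:hybriddomain}, let $a_r,a_s\in L$ with $a_r\prec a_s\prec r_1(P_i)$ or $r_1(P_i)\prec a_s\prec a_r$ (the case $a_r,a_s\in R$ is symmetric). In either case $a_s$ is strictly $\prec$-between $a_r$ and $r_1(P_i)$; since $a_s\in L$ is not $\prec$-minimal among these, we may write $a_s=a_\ell$ with $2\le\ell\le\underline{k}$, one of $a_r,r_1(P_i)$ lies in $\{a_1,\dots,a_{\ell-1}\}$ and the other in $\{a_{\ell+1},\dots,a_m\}$, so by the previous step $a_s$ lies on every vertex-path of $\Pi(r_1(P_i),a_r)$. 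As $r_1(P_i)$ is one of the two endpoints of these paths, Lemma \ref{lem:preferencerestriction} gives $a_sP_ia_r$. For condition (ii) of Definition \ref{defn:hybriddomain}, suppose $r_1(P_i)\in L$ and let $a_l\in M\setminus\{a_{\underline{k}}\}$; if $r_1(P_i)=a_{\underline{k}}$ then $a_{\underline{k}}P_ia_l$ holds trivially, and otherwise $r_1(P_i)\in\{a_1,\dots,a_{\underline{k}-1}\}$ while $a_l\in\{a_{\underline{k}+1},\dots,a_m\}$, so $a_{\underline{k}}$ lies on every vertex-path of $\Pi(r_1(P_i),a_l)$ and Lemma \ref{lem:preferencerestriction} gives $a_{\underline{k}}P_ia_l$; the $r_1(P_i)\in R$ statement is symmetric. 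This proves $\mathbb{D}\subseteq\mathbb{D}_{\prec}(\underline{k},\overline{k})$, which is condition (i) of Definition \ref{def:hybrid*}, completing the proof.

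The main obstacle is the structural first step: extracting from Lemmas \ref{lem:coincidence} and \ref{lem:combination} the clean statement that $a_\ell$ is a cut vertex of $G_{\approx}^A(\mathbb{D})$ for every $\ell\le\underline{k}$ and every $\ell\ge\overline{k}$---the key point being that there are no ``chord'' edges inside $L$ or inside $R$, which is where the singleton conclusions of Lemma \ref{lem:coincidence} enter. Once that picture is fixed, everything else is bookkeeping: the alternative that must sit on every connecting vertex-path is always the one that is $\prec$-between the peak and the target, hence exactly a cut vertex of the required type, so Lemma \ref{lem:preferencerestriction} applies directly.
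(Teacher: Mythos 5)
Your proposal is correct and follows essentially the same route as the paper's proof: conditions (ii) and (iii) of Definition \ref{def:hybrid*} are read off from Lemma \ref{lem:degree} and Lemma \ref{lem:combination}(ii), and condition (i) is obtained by showing the relevant ``between'' alternative lies on every connecting vertex-path and then invoking Lemma \ref{lem:preferencerestriction}. The only difference is that you spell out explicitly (via the cut-vertex observation drawn from Lemmas \ref{lem:coincidence} and \ref{lem:combination}(i), and the trivial case $r_1(P_i)=a_{\underline{k}}$) what the paper dismisses with ``we can easily infer,'' which is a welcome but not substantively different elaboration.
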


\begin{proof}
By Lemma \ref{lem:degree}, we know $\mathbb{D}$ satisfies the path-inclusion condition.
Therefore, to prove this lemma, we show that $\mathbb{D} \subseteq \mathbb{D}_{\prec}(\underline{k}, \overline{k})$, and $\mathbb{D}$ satisfies the no-leaf condition.

Since $G_{\approx}^A(\mathbb{D})$ is a connected graph by Lemma \ref{lem:vertex-path},
statement (ii) of Lemma \ref{lem:coincidence} implies that $G_{\approx}^M(\mathbb{D})$ is a connected subgraph.
Thus, $G_{\approx}^A(\mathbb{D})$ is simply a combination of the vertex-path $(a_1, \dots, a_k, a_{k+1},\dots, a_{\underline{k}})$, the connected subgraph $G_{\approx}^M(\mathbb{D})$ and the vertex-path $(a_{\overline{k}}, \dots,a_{k}, a_{k+1},\dots, a_m)$.
Thus, by applying Lemma \ref{lem:preferencerestriction}, one can easily show that every preference of $\mathbb{D}$ is a $(\underline{k}, \overline{k})$-hybrid preference, as required, and moreover,
by Lemma \ref{lem:degree} and statement (ii) of Lemma \ref{lem:coincidence},
we can easily infer that for each $\underline{k}< k < \overline{k}$,
there exists a vertex-path in $G_{\approx}^M(\mathbb{D})$ that connects $a_{\underline{k}}$ and $a_{\overline{k}}$, and includes $a_k$.
Hence, for all $\underline{k}< k < \overline{k}$, $a_k$ is not a leaf in $G_{\approx}^M(\mathbb{D})$.
Moreover, by condition {\bf (3)} above and statement (i) of Lemma \ref{lem:coincidence}, we clearly know that neither $a_{\underline{k}}$ nor $a_{\overline{k}}$ is a leaf in $G_{\approx}^M(\mathbb{D})$.
Therefore, $G_{\approx}^M(\mathbb{D})$ has no leaf, as required.
This proves the lemma, completes the verification of the fourth step, and hence proves Theorem \ref{thm:domaincharacterization}.
\end{proof}

\section{Proof of Theorem \ref{thm:rulecharacterization}}\label{app:rulecharacterization}
Let $\mathbb{D}$ be a regular domain, i.e., minimal richness, diversity and no-restoration.
By Theorem \ref{thm:domaincharacterization}, we know that $\mathbb{D}$ is a $(\underline{k},\overline{k})$-hybrid domain for some $1 \leq \underline{k}< \overline{k} \leq m$.
This proves statement (i) of Theorem \ref{thm:rulecharacterization}.

Now, we know that either $\overline{k}-\underline{k} = 1$ or $\overline{k}-\underline{k} > 1$.
If $\overline{k}-\underline{k} = 1$, then $\mathbb{D}$ is a single-peaked domain.
Furthermore, by Theorem 1 of \citet{RS2019}, we know that a RSCF $\varphi: \mathbb{D}^n \rightarrow \Delta(A)$ is unanimous and strategy-proof if and only if it is a PFBR.\footnote{Lemma \ref{lem:vertex-path} first shows that $G_{\approx}^A$ is a connected graph.
Next, since $\mathbb{D}$ is a single-peaked domain w.r.t.~$\prec$, it must be the case that $G_{\approx}^A$ is a line such that all alternatives are located in the order of $\prec$. Consequently, $\mathbb{D}$ satisfies the hypothesis of Theorem 1 of \citet{RS2019}.}
This proves statement (ii) of Theorem \ref{thm:rulecharacterization} in the case that $\overline{k}-\underline{k} = 1$.

In the rest of proof, we assume $\overline{k}-\underline{k} > 1$, and show statement (ii) of Theorem \ref{thm:rulecharacterization}.
We start from the sufficiency part.

\medskip
\noindent(\textbf{Sufficiency})~
We fix a $(\underline{k}, \overline{k})$-PFBR
$\varphi: \mathbb{D}^{n} \rightarrow \Delta(A)$
w.r.t.~the probabilistic ballots $(\beta_S)_{S\subseteq N}$.
It is clear that ballot unanimity implies that $\varphi$ is unanimous.
We henceforth focus on verifying strategy-proofness of $\varphi$.

Since $\mathbb{D}$ is a $(\underline{k}, \overline{k})$-hybrid domain by statement (i), we know that all preferences of $\mathbb{D}$ are $(\underline{k}, \overline{k})$-hybrid preferences (Definition \ref{defn:hybriddomain}).
More importantly,
since $\mathbb{D}$ is a no-restoration domain,
by Theorem 1 of \citet{KRSYZ2021b}, we know that every unanimous and locally strategy-proof RSCF on $\mathbb{D}$ is strategy-proof.\footnote{Recall the formal definition of local strategy-proofness in Footnote \ref{footnote:localSP}}
Therefore, to complete the verification, it suffices to show that the $(\underline{k}, \overline{k})$-PFBR
$\varphi$ is locally strategy-proof on $\mathbb{D}$.
Also, note that the $(\underline{k}, \overline{k})$-PFBR $\varphi$ by definition satisfies the tops-only property.

%First, we introduce the notion of \textit{local strategy-proofness},
%which is weaker than strategy-proofness as it only requires a RSCF be immune to the misrepresentation of preferences that are adjacent to the sincere one.
%Formally, a RSCF $\phi: \mathbb{D}^{n} \rightarrow \Delta(A)$ is \textbf{locally strategy-proof} if for all $i \in N$, $P_i, P_i' \in \mathbb{D}$ with $P_i \sim P_i'$ and $P_{-i} \in \mathbb{D}^{n-1}$, $\phi(P_i, P_{-i})$ stochastically dominates $\phi(P_i', P_{-i})$ according to $P_i$.
%In Fact 1 of the Supplementary Material, the hybrid domain $\mathbb{D}_{\prec}(\underline{k}, \overline{k})$ is shown to be a no-restoration domain.
%Immediately, by Theorem 3 of \citet{KRSYZ2021b}, we know that every unanimous and locally strategy-proof RSCF on $\mathbb{D}_{\prec}(\underline{k}, \overline{k})$ is strategy-proof.
%Therefore, the verification of strategy-proofness of $\varphi$ is simplified to a verification of local strategy-proofness.	

Fixing $i \in N$, $P_i, P_i' \in \mathbb{D}$ with $P_i \sim P_i'$ and $P_{-i} \in \mathbb{D}^{n-1}$,
we show that $\varphi(P_i, P_{-i})$ stochastically dominates $\varphi(P_i', P_{-i})$ according to $P_i$.
Let $r_1(P_i) = a_s$ and $r_1(P_i') = a_t$.
Evidently, if $a_s = a_t$, the tops-only property implies $\varphi(P_i, P_{-i}) = \varphi(P_i', P_{-i})$.
Henceforth, assume $a_s \neq a_t$.
Then, $P_i \sim P_i'$ implies $r_1(P_i) = r_{2}(P_i') = a_s$, $r_1(P_i') = r_2(P_i) = a_t$ and $r_{k}(P_i) = r_{k}(P_i')$ for all $k = 3, \dots, m$.
Thus, to show local strategy-proofness, it suffices to show the following condition:
		\begin{eqnarray*}
\qquad \qquad\quad
\begin{array}{l}
		\varphi_{a_{s}}(P_i, P_{-i}) \geq \varphi_{a_{s}}(P_i', P_{-i})\; \textrm{or}\;
\varphi_{a_{t}}(P_i, P_{-i}) \leq \varphi_{a_{t}}(P_i', P_{-i}),\;\textrm{and}\\
        \varphi_{a_{k}}(P_i, P_{-i}) = \varphi_{a_{k}}(P_i', P_{-i})\;
		\textrm{for all}\; a_k \notin \{a_s, a_t\}.
\end{array}\qquad\;\,   \;\,\;\,\;\,\;\,\;\,\;\,(\#)
		\end{eqnarray*}
Moreover, by $(\underline{k}, \overline{k})$-hybridness,
		$P_i\sim P_i'$ implies one of the following three cases: (i) $a_s, a_t \in L$ and $|s-t| =1$,
		(ii) $a_s, a_t \in R$ and $|s-t|=1$, or (iii) $a_s, a_t \in M$.
		Note that the first two cases are symmetric. Therefore, we focus on cases (i) and (iii).
		\medskip
		
		\noindent
		\textsc{Claim 1}:
		In case (i), condition (\#) holds.
		
		\medskip
		If $s-t=1$ (equivalently, $a_t = a_{s-1}$), we know
		$S(s, (P_i, P_{-i})) \supset S(s, (P_i', P_{-i}))$ and
		$S(k, (P_i, P_{-i})) = S(k, (P_i', P_{-i}))$ for all $a_{k} \in A\backslash \{a_s\}$, and
		derive
		\begin{align*}
		\varphi_{a_{s}}(P_i, P_{-i})
		&= \beta_{S(s, (P_i, P_{-i}))}([a_{s}, a_m]) - \beta_{S(s+1, (P_i, P_{-i}))}([a_{s+1}, a_m])\\
		&\geq \beta_{S(s, (P_i', P_{-i}))}([a_{s}, a_m]) - \beta_{S(s+1, (P_i', P_{-i}))}([a_{s+1}, a_m])\quad \textrm{by monotonicity}\\
		&= \varphi_{a_{s}}(P_i', P_{-i}),
		\end{align*}
		and for all $a_k \notin \{a_{s-1}, a_s\}$,
		\begin{align*}
		\varphi_{a_{k}}(P_i, P_{-i}) &= \beta_{S(k, (P_i, P_{-i}))}([a_{k}, a_m]) - \beta_{S(k+1, (P_i, P_{-i}))}([a_{k+1}, a_m])\\
		&= \beta_{S(k, (P_i', P_{-i}))}([a_{k}, a_m]) - \beta_{S(k+1, (P_i', P_{-i}))}([a_{k+1}, a_m])\\
		&= \varphi_{a_{k}}(P_i', P_{-i}).
		\end{align*}
		
		If $s-t=-1$ (equivalently, $a_t = a_{s+1}$), we know $S(s+1, (P_i, P_{-i})) \subset S(s+1, (P_i', P_{-i}))$ and
		$S(k, (P_i, P_{-i})) = S(k, (P_i', P_{-i}))$ for all $a_{k} \in A\backslash \{a_{s+1}\}$, and derive
		\begin{align*}
		\varphi_{a_{s+1}}(P_i, P_{-i})
		&= \beta_{S(s+1, (P_i, P_{-i}))}([a_{s+1}, a_m]) - \beta_{S(s+2, (P_i, P_{-i}))}([a_{s+2}, a_m])\\
		&\leq \beta_{S(s+1, (P_i', P_{-i}))}([a_{s+1}, a_m]) - \beta_{S(s+2, (P_i', P_{-i}))}([a_{s+2}, a_m])\quad \textrm{by monotonicity}\\
		&= \varphi_{a_{s+1}}(P_i', P_{-i}).
		\end{align*}
		and for all $a_k \notin \{a_s, a_{s+1}\}$,
		\begin{align*}
		\varphi_{a_{k}}(P_i, P_{-i})
		&= \beta_{S(k, (P_i, P_{-i}))}([a_{k}, a_m]) - \beta_{S(k+1, (P_i, P_{-i}))}([a_{k+1}, a_m])\\
		&= \beta_{S(k, (P_i', P_{-i}))}([a_{k}, a_m]) - \beta_{S(k+1, (P_i', P_{-i}))}([a_{k+1}, a_m])\\
		&= \varphi_{a_{k}}(P_i', P_{-i}).
		\end{align*}
		This completes the verification of the claim.
		\medskip

		\noindent
		\textsc{Claim 2}:
		In case (iii), condition (\#) holds.
		
		\medskip
		We assume $a_t \prec a_s$. The verification related to the situation $a_s \prec a_t$ is symmetric, and we hence omit it.
		First, note that for all $a_k \in A$ such that $a_{k} \preceq a_t$ or $a_s \prec a_k$, $S(a_{k}, (P_i, P_{-i})) = S(a_{k}, (P_i', P_{-i}))$,
and hence we have
		\begin{align*}
		\varphi_{a_{k}}(P_i, P_{-i})
		&= \beta_{S(k, (P_i, P_{-i}))}([a_k, a_m]) - \beta_{S(k+1, (P_i, P_{-i}))}([a_{k+1}, a_m])\\
		&= \beta_{S(k, (P_i', P_{-i}))}([a_k, a_m]) - \beta_{S(k+1, (P_i', P_{-i}))}([a_{k+1}, a_m])
		= \varphi_{a_{k}}(P_i', P_{-i}).
		\end{align*}
		
		Next, given $a_t \prec a_k \prec a_s$,
		it is clear that $a_{\underline{k}} \prec a_k \prec a_{\overline{k}}$ and $a_{\underline{k}} \prec a_{k+1} \preceq a_{\overline{k}}$.
		Since the constrained random-dictatorship condition implies $\beta_S(a_v) = 0$ for all $S \subseteq N$ and $a_v \in [a_{\underline{k}+1}, a_{\overline{k}-1}]$, we have $\beta_{S}([a_k, a_m]) = \sum_{j \in S}\varepsilon_j = \beta_{S}([a_{k+1}, a_m])$.
		Moreover, note that $S(k, (P_i, P_{-i}))\backslash S(k+1, (P_i, P_{-i})) = \{j \in N \backslash \{i\}: r_1(P_j) = a_k\}
		= S(k, (P_i', P_{-i}))\backslash S(k+1, (P_i', P_{-i}))$.
		Hence, we have
		\begin{align*}
		\varphi_{a_k}(P_i, P_{-i})
		&=\beta_{S(k, (P_i, P_{-i}))}([a_{k}, a_m])- \beta_{S(k+1, (P_i, P_{-i}))}([a_{k+1}, a_m])\\
		&=\sum\nolimits_{j \in S(k, (P_i, P_{-i}))\backslash S(k+1, (P_i, P_{-i}))}\,\varepsilon_{j}\\
		&=\sum\nolimits_{j \in S(k, (P_i', P_{-i}))\backslash S(k+1, (P_i', P_{-i}))}\,\varepsilon_{j}\\
		&= \beta_{S(k, (P_i', P_{-i}))}([a_{k}, a_m]) - \beta_{S(k+1, (P_i', P_{-i}))}([a_{k+1}, a_m])\\
		&= \varphi_{a_k}(P_i', P_{-i}).
		\end{align*}
		Overall, $\varphi_{a_k}(P_i, P_{-i}) = \varphi_{a_k}(P_i', P_{-i})$ for all $a_k \notin \{a_s, a_t\}$.

		Last, since $a_t \prec a_s$, we know $S(s, (P_i, P_{-i}))\supset S(s, (P_i', P_{-i}))$ and
		$S(a_{s+1}, (P_i, P_{-i})) = S(a_{s+1}, (P_i', P_{-i}))$, and derive
		\begin{align*}
		\varphi_{a_{s}}(P_i, P_{-i})
		&= \beta_{S(s, (P_i, P_{-i}))}([a_{s}, a_m]) - \beta_{S(s+1, (P_i, P_{-i}))}([a_{s+1}, a_m])\\
		&\geq \beta_{S(s, (P_i', P_{-i}))}([a_{s}, a_m]) - \beta_{S(s+1, (P_i', P_{-i}))}([a_{s+1}, a_m])\quad
		\textrm{by monotonicity}\\
		&= \varphi_{a_{s}}(P_i', P_{-i}).
		\end{align*}
		This completes the verification of the claim.
		\medskip
		
		In conclusion, $\varphi$ is locally strategy-proof, as required.
		This completes the verification of the sufficiency part.
		\medskip

\noindent
(\textbf{Necessity}) The proof of the necessity part consists of three steps which are mentioned right below Theorem \ref{thm:rulecharacterization}.
\textbf{Step 1} focuses on the hybrid domain $\mathbb{D}_{\prec}(\underline{k}, \overline{k})$, and consists of the following lemma.

%In the first step, we restrict attention to the hybrid domain $\mathbb{D}_{\prec}(\underline{k}, \overline{k})$ and show that every unanimous and strategy-proof RSCF defined on $\mathbb{D}_{\prec}(\underline{k}, \overline{k})$ is a $(\underline{k}, \overline{k})$-PFBR.
%In the second step, we return to the regular domain $\mathbb{D}$ and partially characterize unanimous and strategy-proof RSCFs defined on $\mathbb{D}$.
%In the last step, we fix a unanimous and strategy-proof RSCF $\varphi: \mathbb{D}^n \rightarrow \Delta(A)$,
%transfer it to a \emph{random voting scheme} $\varphi: A^n \rightarrow \Delta(A)$ by the result established in the second step and
%show that the random voting scheme is unanimous and strategy-proof on the hybrid domain $\mathbb{D}_{\prec}(\underline{k}, \overline{k})$ and therefore is a $(\underline{k}, \overline{k})$-PFBR by the characterization result established in the first step.

\begin{lemma}\label{lem:PFBR}	
Given a unanimous and locally strategy-proof RSCF $\psi: \left[\mathbb{D}_{\prec}(\underline{k}, \overline{k})\right]^{n} \rightarrow \Delta(A)$, it is a $(\underline{k}, \overline{k})$-PFBR.
\end{lemma}
\begin{proof}
First, recall that $\mathbb{D}_{\prec}(\underline{k}, \overline{k})$ is shown to be a no-restoration domain in Clarification 1 of Appendix \ref{app:clarification}.
        Then, Theorem 1 and Corollary 1 of \citet{KRSYZ2021b} imply that $\psi$ satisfies strategy-proofness and the tops-only property respectively.
Next, since $\mathbb{D}_{\prec} \subseteq \mathbb{D}_{\prec}(\underline{k}, \overline{k})$,
		we can elicit a unanimous and strategy-proof RSCF $\phi: [\mathbb{D}_{\prec}]^{n} \rightarrow \Delta(A)$ such that $\phi(P) = \psi(P)$ for all  $P \in [\mathbb{D}_{\prec}]^{n}$.
		Then, by the characterization theorem of \citet{EPS2002}, we know that $\phi$ is a PFBR.
        Let $(\beta_S)_{S \subseteq N}$ be the corresponding probabilistic ballots.
        Evidently, $(\beta_S)_{S \subseteq N}$ satisfies ballot unanimity and monotonicity.
		Last, since both $\mathbb{D}_{\prec}$ and $\mathbb{D}_{\prec}(\underline{k}, \overline{k})$ are minimally rich,
		by the tops-only property of $\psi$ and the construction of $\phi$, it is true that $\psi$ is also a PFBR and inherits the probabilistic ballots $(\beta_S)_{S \subseteq N}$.
		Therefore, for all $P \in \left[\mathbb{D}_{\prec}(\underline{k}, \overline{k})\right]^{n}$ and $a_k \in A$, we have
		$\psi_{a_k}(P) = \beta_{S(k, P)}([a_k, a_m]) - \beta_{S(k+1, P)}([a_{k+1}, a_m])$,
        where $S(m+1, P) \equiv \emptyset$, $[a_{m+1}, a_m] \equiv \emptyset$ and $\beta_{S(m+1, P)}([a_{m+1}, a_m])=0$.

		To complete the verification, we show that $(\beta_S)_{S \subseteq N}$ satisfy the constrained random-dictatorship condition w.r.t.~$a_{\underline{k}}$ and $a_{\overline{k}}$.
		Let $\overline{\mathbb{D}}_{\prec}(\underline{k}, \overline{k}) = \left\{P_i \in \mathbb{D}_{\prec}(\underline{k}, \overline{k}): r_1(P_i) \in M\right\}$
		denote the subdomain of preferences which have peaks in $M$.
		First, since $|M| \geq 3$ and $\overline{\mathbb{D}}_{\prec}(\underline{k}, \overline{k})$ has no restriction on the ranking of alternatives in $M$,
		according to the random dictatorship characterization theorem of \citet{G1977},
		we easily infer that
		there exists a ``conditional dictatorial coefficient'' $\varepsilon_i \geq 0$ for each $i \in N$ with $\sum_{i \in N} \varepsilon_i = 1$ such that
		$\psi(P) = \sum_{i \in N}\varepsilon_i \, \bm{e}_{r_1(P_i)}$ for all $P \in \left[\,\overline{\mathbb{D}}_{\prec}(\underline{k}, \overline{k})\right]^{n}$. Next, fixing an arbitrary coalition $S \subseteq N$,
we show $\beta_{S}([a_{\overline{k}}, a_m]) = \sum_{j \in S}\varepsilon_j$ and $\beta_{S}([a_1, a_{\underline{k}}]) = \sum_{j \in N\backslash S}\varepsilon_j$.
We construct a profile $P  \in \left[\,\overline{\mathbb{D}}_{\prec}(\underline{k}, \overline{k})\right]^n$
where every voter of $S$ has a preference with the peak $a_{\overline{k}}$ and every voter out of $S$ has a preference with the peak $a_{\underline{k}}$.
Thus, $S = S(\overline{k}, P)$ and $\psi(P) = \sum_{j \in S} \varepsilon_j\, \bm{e}_{a_{\overline{k}}}+ \sum_{j \in N \backslash S} \varepsilon_j\, \bm{e}_{a_{\underline{k}}}$.
Furthermore, as a PFBR, we derive
$\beta_S([a_{\overline{k}}, a_m]) = \beta_{S(\overline{k},P)}([a_{\overline{k}}, a_m])
= \sum\nolimits_{k=\overline{k}}^{m}\big[\beta_{S(k,P)}([a_{k}, a_m])-\beta_{S(k+1,P)}([a_{k+1}, a_m])\big]
= \sum\nolimits_{k=\overline{k}}^{m}\psi_{a_k}(P)
= \psi_{a_{\overline{k}}}(P)=\sum\nolimits_{j \in S}\varepsilon_j$, as required.
Next, given arbitrary $a_k \in [a_{\underline{k}+1},a_{\overline{k}-1}]$,
since $S(k, P) =  S = S(k+1, P)$, we have
$\beta_{S}(a_k)=\beta_{S}([a_k, a_m]) - \beta_{S}([a_{k+1}, a_m])
= \beta_{S(k, P)}([a_k, a_m]) - \beta_{S(k+1, P)}([a_{k+1}, a_m])
= \psi_{a_k}(P)  = 0$.
Thus, $\beta_{S}([a_1, a_{\underline{k}}]) = 1 - \beta_{S}([a_{\overline{k}}, a_{m}]) - \beta_{S}([a_{\underline{k}+1}, a_{\overline{k}-1}])
= 1 - \beta_{S}([a_{\overline{k}}, a_{m}]) = \sum_{j \in N\backslash S}\varepsilon_j$, as required.
This proves the lemma, and completes the verification of \textbf{Step 1}.
\end{proof}

\medskip
\noindent
\textbf{Step 2} considers the regular domain $\mathbb{D}$ of Theorem 2, and
establishes four lemmas (Lemmas \ref{lem:uncompromising} - \ref{lem:refineuncompromise}), which together partially characterize all unanimous and strategy-proof RSCFs on $\mathbb{D}$.

Since $\mathbb{D}$ is a no-restoration domain,
by Corollary 1 of \citet{KRSYZ2021b}, we know that every unanimous and strategy-proof RSCF defined on $\mathbb{D}$ satisfies the tops-only property.
Therefore, all unanimous and strategy-proof RSCFs investigated in Lemmas \ref{lem:uncompromising} - \ref{lem:refineuncompromise} satisfy the tops-only property as well.
Henceforth, for notational convenience,
with a little notational abuse, we write $(a_s, a_t)$ as a two-voter preference profile where the first voter presents a preference with peak $a_s$ while the second reports a preference with peak $a_t$.
We also write $(a_s, P_{-i})$ as an $n$-voter preference profile where voter $i$ presents a preference with peak $a_s$ and $P_{-i} = (P_1, \dots, P_{i-1}, P_{i+1}, \dots, P_n)$.

\begin{lemma}[The uncompromising property]\label{lem:uncompromising}
	Let $\varphi: \mathbb{D}^{n} \rightarrow \Delta(A)$ be a unanimous and strategy-proof RSCF.
	Given a vertex-path $(x_1, \dots, x_t)$ in $G_{\approx}^A(\mathbb{D})$, $i \in N$ and $P_{-i} \in \mathbb{D}^{n-1}$,
	we have $\varphi_{a_s}(x_1, P_{-i}) = \varphi_{a_s}(x_t, P_{-i})$ for all $a_s \notin \{x_{k}\}_{k=1}^{t}$, and hence
	$\sum_{k=1}^{t}\varphi_{x_k}(x_1, P_{-i}) = \sum_{k=1}^{t}\varphi_{x_k}(x_t, P_{-i})$.
\end{lemma}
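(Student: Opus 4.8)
The plan is to reduce to a single edge of $G_{\approx}^A(\mathbb{D})$ and then chain along the vertex-path. It suffices to prove that for each edge $x_k \approx x_{k+1}$ on the path, $\varphi_{a_s}(x_k, P_{-i}) = \varphi_{a_s}(x_{k+1}, P_{-i})$ for every $a_s \notin \{x_k, x_{k+1}\}$. Granting this, fix $a_s \notin \{x_k\}_{k=1}^{t}$; then $a_s \notin \{x_k, x_{k+1}\}$ for each $k = 1, \dots, t-1$, so $\varphi_{a_s}(x_1, P_{-i}) = \varphi_{a_s}(x_2, P_{-i}) = \cdots = \varphi_{a_s}(x_t, P_{-i})$, which is the first assertion. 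The second assertion follows by summing this equality over all $a_s \notin \{x_k\}_{k=1}^{t}$ and using that $\varphi(x_1, P_{-i})$ and $\varphi(x_t, P_{-i})$ are probability distributions, so $\sum_{k=1}^{t}\varphi_{x_k}(x_1, P_{-i}) = 1 - \sum_{a_s \notin \{x_k\}}\varphi_{a_s}(x_1, P_{-i}) = 1 - \sum_{a_s \notin \{x_k\}}\varphi_{a_s}(x_t, P_{-i}) = \sum_{k=1}^{t}\varphi_{x_k}(x_t, P_{-i})$. (If $t = 1$ the statement is vacuous.)

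For the single-edge step, since $x_k \approx x_{k+1}$, the definition of strong connectedness supplies $P_i, P_i' \in \mathbb{D}$ with $r_1(P_i) = r_2(P_i') = x_k$, $r_1(P_i') = r_2(P_i) = x_{k+1}$, and $r_\ell(P_i) = r_\ell(P_i')$ for all $\ell \geq 3$; in particular $P_i \sim P_i'$. Write $\lambda = \varphi(P_i, P_{-i})$ and $\mu = \varphi(P_i', P_{-i})$. Strategy-proofness at $P_i$ (deviation to $P_i'$) gives $\sum_{j=1}^{\ell}\lambda(r_j(P_i)) \geq \sum_{j=1}^{\ell}\mu(r_j(P_i))$ for all $\ell$, and strategy-proofness at $P_i'$ (deviation to $P_i$) gives $\sum_{j=1}^{\ell}\mu(r_j(P_i')) \geq \sum_{j=1}^{\ell}\lambda(r_j(P_i'))$ for all $\ell$. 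Since for every $\ell \geq 2$ one has $\sum_{j=1}^{\ell}\lambda(r_j(P_i)) = \lambda(x_k) + \lambda(x_{k+1}) + \sum_{j=3}^{\ell}\lambda(r_j(P_i))$, with the same identity after replacing $P_i$ by $P_i'$ and/or $\lambda$ by $\mu$ (using $r_\ell(P_i) = r_\ell(P_i')$ for $\ell \geq 3$), the two families of inequalities become exact reverses of one another for $\ell \geq 2$. Hence $\lambda(x_k) + \lambda(x_{k+1}) + \sum_{j=3}^{\ell}\lambda(r_j(P_i)) = \mu(x_k) + \mu(x_{k+1}) + \sum_{j=3}^{\ell}\mu(r_j(P_i))$ for all $\ell \geq 2$, and subtracting the identity for $\ell - 1$ from that for $\ell$ (with $\ell \geq 3$) gives $\lambda(r_\ell(P_i)) = \mu(r_\ell(P_i))$; that is, $\varphi_{a_s}(P_i, P_{-i}) = \varphi_{a_s}(P_i', P_{-i})$ for every $a_s$ outside $\{x_k, x_{k+1}\}$. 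Finally, the tops-only property gives $\varphi(P_i, P_{-i}) = \varphi(x_k, P_{-i})$ and $\varphi(P_i', P_{-i}) = \varphi(x_{k+1}, P_{-i})$, which completes the single-edge step.

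I expect no real obstacle: the argument is routine. The only delicate point is the bookkeeping of the cumulative-sum (stochastic-dominance) inequalities when exactly the top two alternatives are transposed and everything below them is held fixed — it is precisely this coincidence of the tails of $P_i$ and $P_i'$ that lets the two stochastic-dominance conditions combine into equalities rather than mere inequalities. Minimal richness and the definition of $\approx$ guarantee that the preferences $P_i, P_i'$ belong to $\mathbb{D}$, and since vertex-paths contain no repeated vertices the two endpoints of each edge are always distinct, so the shorthand $\varphi(x_k, P_{-i})$ is unambiguous by tops-onlyness.
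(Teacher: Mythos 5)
Your proof is correct and follows essentially the same route as the paper's: reduce to a single edge $x_k \approx x_{k+1}$, use the two adjacent preferences supplied by strong connectedness together with tops-onlyness and the two-directional stochastic-dominance inequalities to get $\varphi_{a_s}(x_k,P_{-i})=\varphi_{a_s}(x_{k+1},P_{-i})$ for $a_s\notin\{x_k,x_{k+1}\}$, and then chain along the path (the paper phrases the chaining as an explicit induction). Your write-up is merely more explicit about why the cumulative-sum inequalities collapse to equalities below the transposed pair.
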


\begin{proof}
	We start with $\varphi(x_1, P_{-i})$ and $\varphi(x_2, P_{-i})$.
	Since $x_1 \approx x_2$,
	we have $P_{i} ,P_{i}' \in \mathbb{D}$ such that $r_1(P_i) =r_2(P_i') = x_1$, $r_1(P_i') = r_2(P_i) = x_2$ and $r_l(P_i) = r_l(P_i)$ for all $l = 3, \dots, m$.
	Then, the tops-only property and strategy-proofness imply
	$\varphi_{a_s}(x_1, P_{-i}) =\varphi_{a_s}(P_{i}, P_{-i}) =  \varphi_{a_s}(P_{i}', P_{-i}) = \varphi_{a_s}(x_2, P_{-i})$ for all $a_s \notin \{x_{1}, x_2\}$.
	
	We next introduce an induction hypothesis: given $2<k\leq t$, for all $2\leq k'< k$,
	$\varphi_{a_s}(x_1, P_{-i}) = \varphi_{a_s}(x_{k'}, P_{-i})$ for all $a_s \notin \{x_{l}\}_{l=1}^{k'}$.
	We show $\varphi_{a_s}(x_1, P_{-i}) = \varphi_{a_s}(x_{k}, P_{-i})$ for all $a_s \notin \{x_{l}\}_{l=1}^{k}$.
	Since $x_{k} \approx x_{k-1}$,
	we have $P_{i} ,P_{i}' \in \mathbb{D}$ such that $r_1(P_i) =r_2(P_i') = x_k$, $r_1(P_i') = r_2(P_i) = x_{k-1}$ and $r_l(P_i) = r_l(P_i)$ for all $l = 3, \dots, m$.
	Then, the tops-only property and strategy-proofness imply
	$\varphi_{a_s}(x_k, P_{-i}) =\varphi_{a_s}(P_{i}, P_{-i}) =  \varphi_{a_s}(P_{i}', P_{-i}) = \varphi_{a_s}(x_{k-1}, P_{-i})$ for all $a_s \notin \{x_{k-1}, x_k\}$.
	Moreover, since $\varphi_{a_s}(x_1, P_{-i}) = \varphi_{a_s}(x_{k-1}, P_{-i})$ for all $a_s \notin \{x_{l}\}_{l=1}^{k-1}$ by the induction hypothesis,
	it is true that $\varphi_{a_s}(x_1, P_{-i}) = \varphi_{a_s}(x_{k}, P_{-i})$ for all $a_s \notin \{x_{l}\}_{l=1}^{k}$.
	This completes the verification of the induction hypothesis.
	Therefore, $\varphi_{a_s}(x_1, P_{-i}) = \varphi_{a_s}(x_t, P_{-i})$ for all $a_s \notin \{x_{k}\}_{k=1}^{t}$.
	Then, we have
	$\sum_{k=1}^{t}\varphi_{x_k}(x_1, P_{-i}) = 1- \sum_{a_s \notin  \{x_{k}\}_{k=1}^{t}}\varphi_{a_s}(x_1, P_{-i})
	= 1- \sum_{a_s \notin  \{x_{k}\}_{k=1}^{t}}\varphi_{a_s}(x_t, P_{-i}) = \sum_{k=1}^{t}\varphi_{x_k}(x_t, P_{-i})$.
\end{proof}

\medskip
By statement (i) of Theorem \ref{thm:rulecharacterization}, we know that $\mathbb{D}$ is a $(\underline{k}, \overline{k})$-hybrid domain.

\begin{lemma}\label{lem:bounded}
Let $\varphi: \mathbb{D}^n \rightarrow \Delta(A)$ be a unanimous and strategy-proof RSCF.
For all $P \in \mathbb{D}^n$, we have $[r_1(P_i) \in M\; \textrm{for all}\; i \in N] \Rightarrow \big[\sum_{a_k \in M}\varphi_{a_k}(P) = 1\big]$.
\medskip
\end{lemma}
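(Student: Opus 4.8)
The plan is to prove the equivalent statement that $\varphi_{a_s}(P) = 0$ for every $a_s \notin M$, i.e.\ for every $a_s \in \{a_1, \dots, a_{\underline{k}-1}\} \cup \{a_{\overline{k}+1}, \dots, a_m\}$; summing over $a_k \in M$ then gives $\sum_{a_k \in M}\varphi_{a_k}(P) = 1$. By Theorem \ref{thm:domaincharacterization} the domain $\mathbb{D}$ is a $(\underline{k},\overline{k})$-hybrid\textsuperscript{$\ast$} domain, and by the tops-only conclusion recorded at the start of Step 2, $\varphi$ is unanimous, tops-only and strategy-proof, so Lemma \ref{lem:uncompromising} is available; by tops-onlyness we freely identify a preference with its peak below.

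The first step is a key observation: for every voter $i$, every $P_{-i} \in \mathbb{D}^{n-1}$ and every $a_k \in M$,
\[
\varphi_{a_s}(a_{\underline{k}}, P_{-i}) = \varphi_{a_s}(a_k, P_{-i}) \quad \text{for all } a_s \notin M.
\]
To see this, note that by condition (ii) of Definition \ref{def:hybrid*} together with its footnote (equivalently, by statement (ii) of Lemma \ref{lem:combination}), the subgraph $G_{\approx}^M(\mathbb{D})$ is connected, so there is a vertex-path $\{x_1, \dots, x_t\}$ of $G_{\approx}^A(\mathbb{D})$ with $x_1 = a_{\underline{k}}$, $x_t = a_k$ and $\{x_1, \dots, x_t\} \subseteq M$. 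Since $a_s \notin M$ we have $a_s \notin \{x_1, \dots, x_t\}$, so the uncompromising property (Lemma \ref{lem:uncompromising}) yields exactly the displayed equality. Note that this handles ``leakage'' to $L \setminus \{a_{\underline{k}}\}$ and to $R \setminus \{a_{\overline{k}}\}$ simultaneously, since both amount to $a_s \notin M$.

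The second step is a voter-by-voter deformation. Fix $P \in \overline{\mathbb{D}}^n$ and set $a_{k_i} = r_1(P_i) \in M$. For $j = 0,1,\dots,n$ let $P^j \in \overline{\mathbb{D}}^n$ be any profile whose peaks are $(a_{k_1}, \dots, a_{k_j}, a_{\underline{k}}, \dots, a_{\underline{k}})$, i.e.\ the first $j$ voters sit at their $P$-peaks and the rest at $a_{\underline{k}}$; such profiles exist by minimal richness, and by tops-onlyness $\varphi(P^j)$ is well defined with $\varphi(P^n) = \varphi(P)$. Since all voters have peak $a_{\underline{k}}$ at $P^0$, unanimity gives $\varphi(P^0) = \bm{e}_{a_{\underline{k}}}$, hence $\varphi_{a_s}(P^0) = 0$ for all $a_s \notin M$. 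Passing from $P^{j-1}$ to $P^j$ changes only voter $j$'s peak, from $a_{\underline{k}}$ to $a_{k_j} \in M$, so the key observation (with $i = j$) gives $\varphi_{a_s}(P^j) = \varphi_{a_s}(P^{j-1})$ for all $a_s \notin M$. Inducting on $j$, $\varphi_{a_s}(P) = \varphi_{a_s}(P^n) = \varphi_{a_s}(P^0) = 0$ for all $a_s \notin M$, which is the desired conclusion.

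The only delicate point I anticipate is the first step: one must ensure the path from $a_{\underline{k}}$ to $a_k$ can be chosen to stay inside $M$ and not wander through $L \setminus \{a_{\underline{k}}\}$ or $R$, since otherwise Lemma \ref{lem:uncompromising} would no longer conclude equality for all $a_s \notin M$. This is precisely what the structural content of the hybrid\textsuperscript{$\ast$} definition (connectedness and absence of leaves of $G_{\approx}^M(\mathbb{D})$, i.e.\ Lemma \ref{lem:combination}) delivers. Everything else is a routine telescoping of the uncompromising property along the deformation $P^0, P^1, \dots, P^n$.
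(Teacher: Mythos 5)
Your proof is correct and follows essentially the same strategy as the paper's: start from a profile where all peaks coincide at a point of $M$ (where unanimity pins down the outcome), then deform voter by voter to $P$ along vertex-paths inside $G_{\approx}^{M}(\mathbb{D})$, using connectedness of that subgraph and the invariance of probabilities off the traversed path. The only cosmetic difference is that you invoke Lemma \ref{lem:uncompromising} to conclude $\varphi_{a_s}=0$ for each $a_s\notin M$ directly, whereas the paper re-derives the one-swap invariance of $\sum_{a_k\in M}\varphi_{a_k}$ from strategy-proofness in its Claim 1 --- an equivalent bookkeeping choice.
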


\begin{proof}
We first establish a claim.\medskip

\noindent
\textsc{Claim 1}:
Given $i \in N$, $P_{-i}\in \mathbb{D}^{n-1}$ with $r_1(P_1), \dots, r_1(P_{i-1}), r_1(P_{i+1}), \dots, r_1(P_n) \in M$, and $a_s, a_t \in M$ with
$a_s \approx a_t$,
if $\sum_{a_k \in M}\varphi_{a_k}(a_s, P_{-i}) = 1$, then $\sum_{a_k \in M}\varphi_{a_k}(a_t, P_{-i}) = 1$.\medskip

Since $a_s, a_t \in M$ and $a_s \approx a_t$,
we have $P_i, P_i' \in \mathbb{D}$ such that
$r_1(P_i) = r_2(P_i') = a_s$, $r_1(P_i') = r_2(P_i) = a_t$ and $r_k(P_i) = r_k(P_i')$ for all $k = 3, \dots, m$.
Clearly, strategy-proofness implies
$\varphi_{a_s}(P_i, P_{-i})+\varphi_{a_t}(P_i, P_{-i}) = \varphi_{a_s}(P_i', P_{-i})+\varphi_{a_t}(P_i', P_{-i})$ and
$\varphi_{a_k}(P_i, P_{-i}) = \varphi_{a_k}(P_i', P_{-i})$ for all $a_k \in A\backslash \{a_s, a_t\}$.
Then, by the tops-only property, we have
\begin{align*}
\sum\nolimits_{a_k \in M}\varphi_{a_k}(a_t, P_{-i})
&=\sum\nolimits_{a_k \in M}\varphi_{a_k}(P_i', P_{-i})\\
&= \sum\nolimits_{a_k \in M\backslash \{a_s, a_t\}}\varphi_{a_k}(P_i', P_{-i})+\Big[\varphi_{a_s}(P_i', P_{-i})+\varphi_{a_t}(P_i', P_{-i})\Big]\\
&= \sum\nolimits_{a_k \in M\backslash \{a_s, a_t\}}\varphi_{a_k}(P_i, P_{-i})+\Big[\varphi_{a_s}(P_i, P_{-i})+\varphi_{a_t}(P_i, P_{-i})\Big]\\
&= \sum\nolimits_{a_k \in M}\varphi_{a_k}(P_i, P_{-i}) \\
& = \sum\nolimits_{a_k \in M}\varphi_{a_k}(a_s, P_{-i}) \\
&= 1.
\end{align*}
This completes the verification of the claim.
\medskip

Now, fixing an arbitrary profile $P \in \mathbb{D}^n$ such that $r_1(P_i) \in M$ for all $i \in N$, we show $\sum_{a_k \in M}\varphi_{a_k}(P) = 1$.
Pick a profile $P' \in \mathbb{D}^n$ such that $r_1(P_1') =\dots = r_1(P_n') \equiv a_l \in M$.
Clearly, unanimity implies $\sum_{a_k \in M}\varphi_{a_k}(P') = 1$.
According to $P'$ and $P$, for notational convenience,
we assume $r_1(P_i) \neq a_l$ for $i = 1, \dots, \eta$, $r_1(P_j) = a_l$ for $j = \eta+1, \dots, n$, and
let $S = \{1, \dots, \eta\}$.
Since $\mathbb{D}$ is a $(\underline{k}, \overline{k})$-hybrid domain,
the path-inclusion condition implies that $G_{\approx}^{M}(\mathbb{D})$ is a connected graph.
Let $(x_1, \dots, x_v)$ be a vertex-path in $G_{\approx}^{M}(\mathbb{D})$ connecting $r_1(P_1)$ and $a_l$.
By repeatedly applying Claim 1 on the vertex-path $(x_1, \dots, x_v)$ and the tops-only property,
we have
\begin{align*}
\sum\nolimits_{a_k \in M}\varphi_{a_k}(P_1,P_{S\backslash \{1\}}, P_{N\backslash S})
& =\sum\nolimits_{a_k \in M}\varphi_{a_k}(x_1,P_{S\backslash \{1\}}, P_{N\backslash S}') \\
& = \sum\nolimits_{a_k \in M}\varphi_{a_k}(x_2,P_{S\backslash \{1\}}, P_{N\backslash S}') \\
&~\, \vdots\\
&=
\sum\nolimits_{a_k \in M}\varphi_{a_k}(x_v,P_{S\backslash \{1\}}, P_{N\backslash S}')\\
&= \sum\nolimits_{a_k \in M}\varphi_{a_k}(P_1',P_{S\backslash \{1\}}, P_{N\backslash S}').
\end{align*}
Continuing in this manner for voters $2, \dots, \eta$ step by step,
we eventually have
\begin{align*}
\sum\nolimits_{a_k \in M}\varphi_{a_k}(P_S, P_{N\backslash S})
&= \sum\nolimits_{a_k \in M}\varphi_{a_k}(P_1',P_{S\backslash \{1\}}, P_{N\backslash S}')\\
&~\, \vdots\\
& =\sum\nolimits_{a_k \in M}\varphi_{a_k}(P_1',\dots, P_l', P_{S\backslash \{1, \dots, l\}}, P_{N\backslash S}')\\
&=\sum\nolimits_{a_k \in M}\varphi_{a_k}(P_1',\dots, P_l', P_{l+1}', P_{S\backslash \{1, \dots, l,\, l+1\}}, P_{N\backslash S}')\\
&~\, \vdots\\
&= \sum\nolimits_{a_k \in M}\varphi_{a_k}(P_1',\dots, P_{\eta}', P_{N\backslash S}') \\
&=\sum\nolimits_{a_k \in M}\varphi_{a_k}(P')\\
&= 1.
\end{align*}
This proves the lemma.
\end{proof}

\begin{lemma}\label{lem:rd}
	Every unanimous and strategy-proof RSCF $\varphi: \mathbb{D}^{n} \rightarrow \Delta(A)$ behaves like a random dictatorship on
	$M$, i.e.,
	there exists a conditional dictatorial coefficient $\varepsilon_i \geq 0$ for each $i \in N$ with $\sum_{i \in N}\varepsilon_i = 1$ such that
for all $P \in \mathbb{D}^n$, we have $[r_1(P_i) \in M\; \textrm{for all}\; i \in N] \Rightarrow \big[\varphi(P) =\sum\nolimits_{i \in N}\varepsilon_i\, \bm{e}_{r_1(P_i)}\big]$.
\end{lemma}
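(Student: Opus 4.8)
The plan is to reduce Lemma~\ref{lem:rd} to a random-dictatorship statement over the middle interval $M$. Since $\overline{\mathbb{D}}\subseteq\mathbb{D}$, the restriction of $\varphi$ to $\overline{\mathbb{D}}^{n}$ is again unanimous and strategy-proof, and (as recorded at the start of Step~2) tops-only; by Lemma~\ref{lem:bounded} its range lies in $\Delta(M)$. First I would use tops-onlyness together with minimal richness of $\mathbb{D}$ — which, because $M\subseteq A$, supplies for each $a_k\in M$ a preference in $\overline{\mathbb{D}}$ with peak $a_k$ — to define the induced voting scheme $g\colon M^{n}\to\Delta(M)$ by $g(x)=\varphi(P)$ for any $P\in\overline{\mathbb{D}}^{n}$ with $r_1(P_i)=x_i$ for all $i$.

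Next I would verify that $g$ is unanimous (immediate) and strategy-proof on the domain $\mathcal{D}_M=\{\,P_i|_M : P_i\in\overline{\mathbb{D}}\,\}$ of linear orders on $M$: for $P_i,P_i'\in\overline{\mathbb{D}}$ with peaks $x_i,x_i'$, strategy-proofness of $\varphi$ gives that $\varphi(P_i,P_{-i})$ stochastically dominates $\varphi(P_i',P_{-i})$ according to $P_i$ over $A$, and since both lotteries are supported in $M$ this is exactly stochastic dominance of $g(x_i,x_{-i})$ over $g(x_i',x_{-i})$ according to $P_i|_M$. I would then record that $\mathcal{D}_M$ is minimally rich over $M$, that $|M|=\overline{k}-\underline{k}+1\geq 3$, and that its strong-connectedness graph contains $G_{\approx}^{M}(\mathbb{D})$, which by statement~(ii) of Lemma~\ref{lem:combination} is connected and has no leaf (condition~(iii) of Definition~\ref{def:hybrid*} is what rules out leaves); since adding edges preserves both properties, the top-graph of $\mathcal{D}_M$ is connected and leaf-free. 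Applying the random-dictatorship characterization for such domains — \citet{CSZ2014}; note that when $|M|=3$ a connected leaf-free top-graph is forced to be complete, so $\mathcal{D}_M$ is then the full domain on $M$ and one may instead invoke \citet{G1977} — produces weights $\varepsilon_i\geq 0$ with $\sum_{i\in N}\varepsilon_i=1$ and $g(x)=\sum_{i\in N}\varepsilon_i\,\bm{e}_{x_i}$; transferring back through tops-onlyness gives $\varphi(P)=\sum_{i\in N}\varepsilon_i\,\bm{e}_{r_1(P_i)}$ for all $P\in\overline{\mathbb{D}}^{n}$, as desired.

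The hard part is the random-dictatorship step on the possibly strict subdomain $\mathcal{D}_M$, rather than on the full domain over $M$ where Gibbard's theorem applies directly (as it does in Step~1 for $\mathbb{D}_{\prec}(\underline{k},\overline{k})$). Should one prefer a self-contained argument to citing \citet{CSZ2014}, two points are delicate. The first is showing that at every peak-profile only peaks receive positive probability: here I would use Lemma~\ref{lem:uncompromising} to slide a single voter's peak along vertex-paths of $G_{\approx}^{M}(\mathbb{D})$ that avoid a given non-peak alternative $a_k$, exploiting that $a_k$ is not a leaf so that such detours exist, and compare with a profile at which $a_k$ is unanimous. The second is showing that the weight each voter carries is independent of the profile, which follows by propagating the equalities delivered by Lemma~\ref{lem:uncompromising} along vertex-paths of the connected graph $G_{\approx}^{M}(\mathbb{D})$. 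Care is required because ``no leaf'' is strictly weaker than $2$-connectedness, so when $a_k$ (or a voter's current peak) happens to be a cut vertex, a detour around it must be assembled from several vertex-paths together with condition~(ii) of Definition~\ref{def:hybrid*}, rather than found as a single path.
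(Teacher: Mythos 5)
Your reduction to a voting scheme $g\colon M^{n}\to\Delta(M)$ via tops-onlyness and Lemma \ref{lem:bounded} is sound and matches the paper's setup, but the proof has a genuine gap at its core: the random-dictatorship step on the restricted domain $\mathcal{D}_M$ is delegated to a citation that does not deliver. \citet{CSZ2014} do not prove that every domain whose strong-connectedness graph is connected and leaf-free is a random dictatorship domain; their positive results require linkedness-type conditions (each alternative strongly connected to at least two previously listed ones) plus a further richness requirement, and ``connected with no leaf'' is strictly weaker --- a long cycle satisfies it but is not linked. The paper itself flags exactly this obstruction: it notes that Proposition 4 of \citet{CSZ2014} is not applicable because their Definition 18 cannot be confirmed on $\overline{\mathbb{D}}$. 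This is why the paper proves the two-voter case from scratch (Claims 1--4 of its proof), exploiting that $a_{\underline{k}}$ and $a_{\overline{k}}$ each lie on a \emph{circle} of $G_{\approx}^{M}(\mathbb{D})$ (a consequence of the no-leaf condition together with condition (ii) of Definition \ref{def:hybrid*}): on a circle, the two vertex-paths joining $x$ and $y$ meet only in $\{x,y\}$, so Claim 1(i) forces $\varphi_{x}(x,y)+\varphi_{y}(x,y)=1$, which is the step that kills probability on non-peak alternatives; Claim 3 then matches the coefficients obtained from the two circles, and Claim 4 propagates the conclusion to arbitrary pairs in $M$ via concatenated vertex-paths.

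Your self-contained fallback sketch points at the right tools (the uncompromising property, vertex-paths avoiding a given alternative), but it is vague precisely where the work lies: you never explain how to force non-peak alternatives in $M$ down to zero probability without the circle argument, and --- more importantly --- you give no argument for passing from two voters to $n$ voters. The paper does this by an induction that merges pairs of voters and adapts the Ramification Theorem of \citet{CSZ2014} (only its Propositions 5 and 6, which survive on the restricted domain; the $n=3$ base requires a separate computation on the circle $\mathcal{C}_1$ to show $\varepsilon_1+\varepsilon_2+\varepsilon_3=1$). Your aside that for $|M|=3$ a complete top-graph forces $\mathcal{D}_M$ to be the full domain over $M$ is correct, but it covers only that special case. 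As written, your proposal establishes the reduction but not the lemma.
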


\begin{proof}
	We verify the lemma in two steps.
	In the first step, we restrict attention to the case $n = 2$, i.e., $N = \{1, 2\}$, and show by Claims 1 - 4 below that every two-voter unanimous and strategy-proof RSCF on $\mathbb{D}$ behaves like a random dictatorship on $M$.
	In the second step, we extend to the case $n>2$ by adopting the Ramification Theorem of \citet{CSZ2014}.

	Fix a unanimous and strategy-proof RSCF $\varphi: \mathbb{D}^2 \rightarrow \Delta(A)$.
	It is clear that $\varphi$ satisfies the tops-only property.\medskip

	\noindent
	\textsc{Claim 1}: The following two statements hold:
	\begin{itemize}
		\item[\rm (i)] Given a vertex-path $(z_1, \dots, z_l)$ in $G_{\approx}^A(\mathbb{D})$, we have $\sum_{k=1}^{l}\varphi_{z_k}(z_1, z_l) = 1$.
		\item[\rm (ii)] Given a circle $(z_1, \dots, z_l, z_1)$ in $G_{\approx}^A(\mathbb{D})$, we have $\varphi_{z_s}(z_s, z_t) +\varphi_{z_t}(z_s, z_t) = 1$ for all $s\neq t$.\footnote{A vertex-path $(z_1, \dots, z_l)$, where $l \geq 3$, and an edge $(z_l, z_1)$ in $G_{\approx}^A(\mathbb{D})$ formulate a circle. For notational convenience, we simply write this circle as $(z_1, \dots, z_l, z_1)$.}
	\end{itemize}
	
	The first statement follows immediately from unanimity and the uncompromising property.
	Next, consider the circle $(z_1, \dots, z_l, z_1)$.
    Fixing $z_s$ and $z_t$, assume w.l.o.g.~that $s< t$.
	There are two vertex-paths connecting $z_s$ and $z_{t}$: the clockwise vertex-path $\mathcal{P}= (z_s, z_{s+1}, \dots, z_t)$ and
	the counter clockwise vertex-path $\mathcal{P}' = (z_{s}, z_{s-1}, \dots, z_1, z_l, z_{l-1},\dots, z_t)$.
	It follows immediately from statement (i) that $\sum_{z \in \mathcal{P}}\varphi_z(z_s, z_t) = 1$ and $\sum_{z \in \mathcal{P}'}\varphi_z(z_s, z_t) = 1$.
	Last, since the intersection of $\mathcal{P}$ and $\mathcal{P}'$ consists of $z_s$ and $z_{t}$,
	it is true that $\varphi_{z_s}(z_s, z_t) +\varphi_{z_t}(z_s, z_t) = 1$.
	This completes the verification of the claim.
	\medskip

Recall that $\mathbb{D}$ is a $(\underline{k}, \overline{k})$-hybrid domain.
By the path-inclusion condition and the no-leaf condition,
it is clear to infer that there exist two vertex-paths in $G_{\approx}^M(\mathbb{D})$ connecting
such that they start from $a_{\underline{k}}$ and end at $a_{\overline{k}}$, and
disagree on the the second vertices.
This further implies that $a_{\underline{k}}$ must be included in a circle of $G_{\approx}^M(\mathbb{D})$.
Symmetrically, $a_{\overline{k}}$ is also included in a circle of $G_{\approx}^M(\mathbb{D})$.
Let $\mathcal{C}_1 \equiv (x_1, \dots, x_p, x_1)$ be a circle of $G_{\approx}^M(\mathbb{D})$ that includes $a_{\underline{k}}$, and
$\mathcal{C}_2 \equiv (y_1, \dots, y_q, y_1)$ be a circle of $G_{\approx}^M(\mathbb{D})$ that includes $a_{\overline{k}}$.

\medskip
	
	\noindent
	\textsc{Claim 2}: RSCF
	$\varphi$ behaves like a random dictatorship on the circle ${\mathcal{C}_1}$, i.e.,
	there exists $0\leq \varepsilon \leq 1$ such that $\varphi(x_k, x_{k'}) = \varepsilon \bm{e}_{x_k}+(1-\varepsilon)\bm{e}_{x_{k'}}$ for all $1 \leq k, k' \leq p$.
	\medskip
	
	Claim 1(ii) first implies $\varphi_{x_1}(x_1,x_2)+\varphi_{x_2}(x_1,x_2) = 1$.
	Let $\varepsilon = \varphi_{x_1}(x_1, x_2)$ and $1-\varepsilon =\varphi_{x_2}(x_1, x_2)$.
	Fix another profile $(x_k, x_{k'})$.
	If $x_k = x_{k'}$, unanimity implies $\varphi(x_k, x_{k'}) = \varepsilon \bm{e}_{x_k}+(1-\varepsilon) \bm{e}_{x_{k'}}$.
	We next assume $x_k \neq x_{k'}$.
	There are four cases:
	(i) $x_1\neq x_k$ and $x_2 = x_{k'}$,
	(ii) $x_1= x_k$ and $x_2 \neq x_{k'}$,
	(iii) $x_1\neq x_k$, $x_2 \neq x_{k'}$ and $(x_k, x_{k'}) \neq (x_2, x_1)$, and
	(iv) $(x_k, x_{k'}) = (x_2, x_1)$.
	
	Since cases (i) and (ii) are symmetric, we focus on the verification of case (i).
	We first have $\varphi_{x_k}(x_k, x_2)+\varphi_{x_2}(x_k, x_2) = 1$ by Claim 1(ii).
We next show $\varphi_{x_2}(x_k, x_2) = 1-\varepsilon$.
	Note that there exists a vertex-path in $\mathcal{C}_1$ that connects $x_1$ and $x_k$, and excludes $x_2$.
	Then, the uncompromising property implies $\varphi_{x_2}(x_k, x_2) = \varphi_{x_2}(x_1, x_2) = 1-\varepsilon$, as required.
	
	In case (iii), we first know either $x_k \notin \{x_1, x_2\}$ or $x_{k'} \notin \{x_1, x_2\}$.
	Assume w.l.o.g.~that $x_k \notin \{x_1, x_2\}$. Then, by the verification of cases (i),
	from $(x_1, x_2)$ to $(x_k, x_2)$,
	we have $\varphi(x_k, x_2) = \varepsilon \bm{e}_{x_k}+(1-\varepsilon)\bm{e}_{x_2}$.
	Furthermore, by case (ii), from $(x_k, x_2)$ to $(x_k, x_{k'})$, we eventually have $\varphi(x_k, x_{k'}) = \varepsilon \bm{e}_{x_k}+(1-\varepsilon)\bm{e}_{x_{k'}}$.
	
	Last, in case (iv), since the circle $\mathcal{C}_1$ contains at least three alternatives,
	we first consider the profile $(x_3, x_2)$ and have $\varphi(x_3, x_2) = \varepsilon \bm{e}_{x_3}+(1-\varepsilon)\bm{e}_{x_2}$ by the verification of case (i).
	Next, according to the verification of case (iii), from $(x_3, x_2)$ to $(x_2, x_1)$,
	we induce $\varphi(x_2, x_1) = \varepsilon \bm{e}_{x_2}+(1-\varepsilon)\bm{e}_{x_1}$.
	This completes the verification of the claim.
	\medskip
	
	Symmetrically,
	$\varphi$ also behaves like a random dictatorship on the circle $\mathcal{C}_2$, i.e.,
	there exists $0 \leq \varepsilon'\leq 1$ such that $\varphi(y_k, y_{k'}) = \varepsilon' \bm{e}_{y_k}+(1-\varepsilon')\bm{e}_{y_{k'}}$
	for all $1 \leq k, k' \leq q$.
	\medskip
	
	\noindent
	\textsc{Claim 3}:
	We have (i) $\varepsilon = \varepsilon'$,
	(ii) $\varphi(a_{\underline{k}}, a_{\overline{k}}) = \varepsilon\, \bm{e}_{a_{\underline{k}}}+ (1-\varepsilon) \bm{e}_{a_{\overline{k}}}$,
	and (iii) $\varphi(a_{\overline{k}},a_{\underline{k}}) = \varepsilon\, \bm{e}_{a_{\overline{k}}}+ (1-\varepsilon) \bm{e}_{a_{\underline{k}}}$.
	
	\medskip
    If $a_{\underline{k}}$ and $a_{\overline{k}}$ are contained in the same circle $\mathcal{C}_1$ or $\mathcal{C}_2$,
    the claim holds evidently.
    Next, we assume that $a_{\underline{k}}$ is not included in $\mathcal{C}_2$, and $a_{\overline{k}}$ is not included in $\mathcal{C}_1$.
    Since $G_{\approx}^M(\mathbb{D})$ is a connected subgraph, loosely speaking, there must exist a vertex-path of $G_{\approx}^M(\mathbb{D})$ that connects the two circles $\mathcal{C}_1$ and $\mathcal{C}_2$.
	Precisely, we can construct a vertex-path $\mathcal{P} \equiv (z_1, z_2, \dots, z_{l-1},z_l)$ in $G_{\approx}^M(\mathbb{D})$ such that
	(i) $l \geq 3$, (ii) $z_1$ and $z_2$ are included in $\mathcal{C}_1$, and $a_{\underline{k}} \in \{z_1, z_2\}$, and
	(iii) $z_{l-1}$ and $z_l$ are included in $\mathcal{C}_2$, and $a_{\overline{k}} \in \{z_{l-1}, z_l\}$.
	First, Claim 2 and the uncompromising property imply $\varepsilon= \varphi_{z_1}(z_1, z_2) =\varphi_{z_1}(z_1, z_l)$ and
	$1-\varepsilon = \varphi_{z_1}(z_2, z_1) = \varphi_{z_1}(z_l, z_1)$.
	Symmetrically,
	$1-\varepsilon'= \varphi_{z_{l}}(z_{l-1}, z_l) = \varphi_{z_{l}}(z_1, z_l)$ and
	$\varepsilon'= \varphi_{z_l}(z_l, z_{l-1}) = \varphi_{z_l}(z_l, z_1)$.
	Thus, we have $\varepsilon + 1-\varepsilon'= \varphi_{z_1}(z_1, z_l) +\varphi_{z_{l}}(z_1, z_l)\leq 1$ which implies $\varepsilon\leq \varepsilon'$, and
	$1-\varepsilon + \varepsilon'= \varphi_{z_1}(z_l, z_1) +\varphi_{z_l}(z_l, z_1)\leq 1$ which implies $\varepsilon \geq \varepsilon'$.
	Therefore, $\varepsilon = \varepsilon'$. This completes the verification of item (i).
	
	Since items (ii) and (iii) are symmetric, we focus on showing item (ii).
	First, by the verification of item (i), we have $\varphi(z_1, z_l) = \varepsilon \,\bm{e}_{z_1} + (1-\varepsilon)\bm{e}_{z_l}$.
	Second, according to the vertex-path $\mathcal{P}$, the uncompromising property implies
	$\varphi_{z_l}(z_2, z_l) = \varphi_{z_l}(z_1, z_l)= 1-\varepsilon$ and $\varphi_{z_k}(z_2, z_{l}) = \varphi_{z_k}(z_1, z_{l}) = 0$ for all $2<k< l$.
	Moreover, since $\sum_{k =2}^l\varphi_{z_k}(z_2, z_l) = 1$ by Claim 1(i),
	we have $\varphi_{z_2}(z_2, z_l) = 1-\varphi_{z_l}(z_2, z_l) = \varepsilon$,
    and hence $\varphi(z_2, z_l) = \varepsilon \,\bm{e}_{z_2} + (1-\varepsilon)\bm{e}_{z_l}$.
	Symmetrically, we also have $\varphi(z_1, z_{l-1}) = \varepsilon \,\bm{e}_{z_1} + (1-\varepsilon)\bm{e}_{z_{l-1}}$.
    Recall that $a_{\underline{k}} \in \{z_1, z_2\}$ and $a_{\overline{k}} \in \{z_{l-1}, z_l\}$.
	We hence conclude that when $a_{\underline{k}} = z_1$ or $a_{\overline{k}} = z_{l}$,
	$\varphi(a_{\underline{k}}, a_{\overline{k}}) = \varepsilon \,\bm{e}_{a_{\underline{k}}} + (1-\varepsilon)\bm{e}_{a_{\overline{k}}}$.
	Last, we show that when $a_{\underline{k}} = z_2$ and $a_{\overline{k}} = z_{l-1}$,
	$\varphi(a_{\underline{k}}, a_{\overline{k}}) = \varepsilon \,\bm{e}_{a_{\underline{k}}} + (1-\varepsilon)\bm{e}_{a_{\overline{k}}}$.
	According to the vertex-path $\mathcal{P}$, the uncompromising property implies
    $\varphi_{a_{\underline{k}}}(a_{\underline{k}}, a_{\overline{k}}) =\varphi_{z_2}(z_2, z_{l-1}) = \varphi_{z_2}(z_2, z_l) = \varepsilon$ and
    $\varphi_{a_{\overline{k}}}(a_{\underline{k}}, a_{\overline{k}}) = \varphi_{z_{l-1}}(z_2, z_{l-1}) =\varphi_{z_{l-1}}(z_1, z_{l-1}) = 1-\varepsilon$, as required.
	This completes the verification of item (ii), and hence proves the claim.

	\medskip
	\noindent
	\textsc{Claim 4}:
	Given distinct $a_s, a_t \in M$,
	$\varphi(a_s, a_t) = \varepsilon \,\bm{e}_{a_s}+(1-\varepsilon)\bm{e}_{a_t}$.
	\medskip
	
	First, consider the situation that there exists a vertex-path $(x_1, \dots, x_l)$ in $G_{\approx}^M(\mathbb{D})$ that connects $a_{\underline{k}}$ and $a_{\overline{k}}$, and includes $a_s$ and $a_t$.
	By Claim 3, we first have $\varphi(x_1, x_l) = \varepsilon\, \bm{e}_{x_1}+ (1-\varepsilon)\bm{e}_{x_l}$ and
	$\varphi(x_l, x_1) = \varepsilon\, \bm{e}_{x_l}+(1-\varepsilon) \bm{e}_{x_1}$.
    Then, according to the vertex-path $(x_1, \dots, x_l)$, by repeatedly applying Claim 1(i) and the uncompromising property,
	we have $\varphi(x_k, x_{k'}) = \varepsilon\, \bm{e}_{x_k}+(1-\varepsilon)\bm{e}_{x_{k'}}$ for all distinct $1 \leq k, k' \leq l$.
	Hence, $\varphi(a_s, a_t) = \varepsilon\, \bm{e}_{a_s}+ (1-\varepsilon)\bm{e}_{a_t}$.
	
	Next, consider the situation that there exists no vertex-path of $G_{\approx}^M(\mathbb{D})$ that connects $a_{\underline{k}}$ and $a_{\overline{k}}$, and includes both $a_s $ and $a_t$.\footnote{For instance, see Figure \ref{fig:hybrid*}.}
	According to the path-inclusion condition, it must be the case that $a_s \notin \{a_{\underline{k}}, a_{\overline{k}}\}$ and
	$a_t \notin \{a_{\underline{k}}, a_{\overline{k}}\}$, and
moreover, there exists a vertex-path $(b_1, \dots, b_v)$ in $G_{\approx}^M(\mathbb{D})$ that connects $a_{\underline{k}}$ and $a_{\overline{k}}$, and includes $a_s$, and there exists
a vertex-path $(c_1, \dots, c_u)$ in $G_{\approx}^M(\mathbb{D})$ that connects $a_{\underline{k}}$ and $a_{\overline{k}}$, and includes $a_t$.
	Clearly, $a_s \notin \{c_{k}\}_{k=1}^{u}$ and $a_t \notin \{b_{k}\}_{k=1}^{v}$.
	Let $a_s = b_{p}$ for some $1< p < v$ and $a_t = c_{q}$ for some $1< q < u$.
	According to the vertex-paths $(b_{1}, \dots, b_{p})$ and $(c_1, \dots, c_{q})$,
	since $b_1 = c_1 = a_{\underline{k}}$, $b_p \notin \{c_1, \dots, c_{q}\}$ and $c_{q}\notin \{b_{1}, \dots, b_{p}\}$,
	we can identify $1\leq \eta < p$ and $1 \leq \nu < q$ such that $b_{\eta} = c_{\nu}$ and
	$\{b_{\eta+1}, \dots, b_p\}\cap \{c_{\nu+1}, \dots, c_{q}\} = \emptyset$.
	Then, we have the concatenated vertex-path $\mathcal{P} = (a_s= b_p,  \dots, b_{\eta}= c_{\nu},  \dots, c_{q} = a_t)$ in $G_{\approx}^M(\mathbb{D})$
	that connects $a_s$ and $a_t$.
	By the verification in the first situation, we have $\varphi_{b_p}(b_p, b_{\eta}) = \varepsilon$ and $\varphi_{c_{q}}(c_{\nu}, c_{q}) = 1-\varepsilon$.
	Furthermore, according to $\mathcal{P}$, the uncompromising property implies
	$\varphi_{a_s}(a_s, a_t) = \varphi_{b_p}(b_p, c_{q}) =\varphi_{b_p}(b_p, c_{\nu})=\varphi_{b_p}(b_p, b_{\eta})= \varepsilon$ and
	$\varphi_{a_t}(a_s, a_t) = \varphi_{c_{q}}(b_p, c_{q}) = \varphi_{c_{q}}(b_{\eta}, c_{q}) = \varphi_{c_{q}}(c_{\nu}, c_{q}) = 1-\varepsilon$.
	Therefore, $\varphi(a_s, a_t) = \varepsilon\, \bm{e}_{a_s}+ (1-\varepsilon)\bm{e}_{a_t}$.
	This completes the verification of the claim.
	\medskip
	
	In conclusion, every two-voter unanimous and strategy-proof RSCF on $\mathbb{D}$ behaves like a random dictatorship on $M$.
	For the general case $n > 2$, we adopt an induction argument.

\medskip
	\noindent
	\textsc{Induction Hypothesis}:
	Given $n \geq 3$, for all $2\leq n' < n$, every unanimous and strategy-proof $\psi: \mathbb{D}^{n'} \rightarrow \Delta(A)$
	behaves like a random dictatorship on $M$.
	\medskip
	
	Given a unanimous and strategy-proof RSCF $\varphi: \mathbb{D}^{n} \rightarrow \Delta(A)$, $n > 2$,
	we show that it behaves like a random dictatorship on $M$.
    Since Lemma \ref{lem:bounded} implies that no alternative out of $M$ receives a positive probability at a preference profile where all voters' preference peaks belong to $M$, we can adopt the proof of the Ramification Theorem of \citet{CSZ2014} to verify the induction hypothesis.
    More specifically, if $n \geq 4$, the verification follows exactly from Propositions 5 and 6 of \citet{CSZ2014}.
	When $n = 3$, i.e., $N = \{1, 2, 3\}$,
	analogous to Propositions 4 and 6 of \citet{CSZ2014},
	we split the verification into the following two parts:
	\begin{enumerate}
		\item There exists $\varepsilon_1, \varepsilon_2, \varepsilon_3 \geq 0$ with $\varepsilon_1+\varepsilon_2+\varepsilon_3 = 1$
		such that for all $(P_1,P_2,P_3) \in \mathbb{D}^{3}$ with $r_1(P_1), r_1(P_2), r_1(P_3) \in M$, we have
		$\left[P_{i} = P_{j}\; \textrm{for some distinct}\; i, j \in N\right] \Rightarrow \left[\varphi(P_1,P_2,P_3) =\varepsilon_1\, e_{r_1(P_{1})}+ \varepsilon_2\, e_{r_1(P_{2})}+\varepsilon_3\, e_{r_1(P_{3})}\right]$.
		
		\item For all $(P_1,P_2,P_3) \in \mathbb{D}^{3}$ with $r_1(P_1), r_1(P_2), r_1(P_3) \in M$, we have
		$\varphi(P_1,P_2,P_3) =\varepsilon_1\, e_{r_1(P_{1})}+ \varepsilon_2\, e_{r_1(P_{2})}+\varepsilon_3\, e_{r_1(P_{3})}$.
	\end{enumerate}
If the first part holds, we can directly apply Proposition 6 of \citet{CSZ2014} to establish the second.
Therefore, in the rest of the proof, we focus on showing the first part.\footnote{Proposition 4 of \citet{CSZ2014} is not applicable for verifying the first part
    since they impose an additional condition (see their Definition 18) that cannot be confirmed in the subdomain of $\mathbb{D}$ where all preference peaks belong to $M$.}
	
	According to $\varphi$, we first induce three two-voter RSCFs by merging two voters respectively:
	for all $P_{1}, P_{2}, P_{3} \in \mathbb{D}$,
	let $\psi^{1}(P_{1}, P_{2}) = \varphi(P_{1}, P_{2}, P_{2})$, $\psi^{2}(P_{1}, P_{2}) = \varphi(P_{1}, P_{2}, P_{1})$ and
	$\psi^{3}(P_{1}, P_{3}) = \varphi(P_{1}, P_{1}, P_{3})$.
	It is easy to verify that all $\psi^1, \psi^2$ and $\psi^3$ are unanimous and strategy-proof on $\mathbb{D}$.
	Therefore, the induction hypothesis implies that there exist $0\leq \varepsilon_1,\varepsilon_2, \varepsilon_3 \leq 1$ such that
	for all $(P_1,P_2,P_3) \in \mathbb{D}^{3}$ with $r_1(P_1), r_1(P_2), r_1(P_3) \in M$,
	$\psi^{1}(P_1, P_2) =\varepsilon_1\, \bm{e}_{r_1(P_{1})}+ (1-\varepsilon_1) \bm{e}_{r_1(P_{2})}$,
	$\psi^{2}(P_1, P_2) =(1-\varepsilon_2)\bm{e}_{r_1(P_{1})}+ \varepsilon_2\, \bm{e}_{r_1(P_{2})}$ and
	$\psi^{3}(P_1, P_3) =(1-\varepsilon_3)\bm{e}_{r_1(P_{1})}+ \varepsilon_3\, \bm{e}_{r_1(P_{3})}$.
	Note that to show the first part, it suffices to prove $\varepsilon_1+\varepsilon_2+\varepsilon_3=1$.
	
	Recall the circle $\mathcal{C}_1 = (x_1, \dots, x_p, x_1)$ identified before.
	First, according to the vertex-paths $(x_2, x_3)$, $(x_1, x_2)$ and $(x_1, x_p, \dots, x_4, x_3)$ in $\mathcal{C}_1$,
    the uncompromising property implies respectively that
    (i) $\varphi_{x_1}(x_1, x_2, x_3) = \varphi_{x_1}(x_1, x_2, x_2) = \psi_{x_1}^{1}(x_1, x_2) = \varepsilon_1$ and
    $\varphi_{a_s}(x_1, x_2, x_3) = \varphi_{a_s}(x_1, x_2, x_2) = \psi_{a_s}^{1}(x_1, x_2) = 0$ for all $a_s \notin \{x_1, x_2, x_3\}$,
    (ii) $\varphi_{x_3}(x_1, x_2, x_3) = \varphi_{x_3}(x_2, x_2, x_3) = \psi_{x_3}^{3}(x_2, x_3) = \varepsilon_3$, and
	(iii) $\varphi_{x_2}(x_1, x_2, x_3) = \varphi_{x_2}(x_3, x_2, x_3) = \psi_{x_2}^{2}(x_3, x_2) = \varepsilon_2$.
	Therefore, we have $\varepsilon_1+\varepsilon_2+\varepsilon_3 = \varphi_{x_1}(x_1, x_2, x_3)+ \varphi_{x_2}(x_1, x_2, x_3)+\varphi_{x_3}(x_1, x_2, x_3)+\sum_{a_s \notin \{x_1, x_2,x_3\}}\varphi_{a_s}(x_1, x_2, x_3)
= \sum_{a_s \in A}\varphi_{a_s}(x_1, x_2, x_3) = 1$, as required.
	This completes the verification of the induction hypothesis, and hence proves Lemma \ref{lem:rd}.
\end{proof}

\vspace{-1em}
\begin{lemma}\label{lem:refineuncompromise}
	Let $\varphi: \mathbb{D}^n \rightarrow \Delta(A)$ be a unanimous and strategy-proof RSCF.
	Given distinct $a_s, a_t \in M$ and $P_{-i} \in \mathbb{D}^{n-1}$, we have $\varphi_{a_k}(a_s, P_{-i}) = \varphi_{a_k}(a_{t}, P_{-i})$ for all $a_k \notin \{a_s, a_t\}$.
\end{lemma}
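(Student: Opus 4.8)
The plan is to transport voter $i$'s peak from $a_s$ to $a_t$ by repeatedly using the uncompromising property of Lemma \ref{lem:uncompromising}, choosing the transporting vertex-path so as to miss the alternative $a_k$ whose probability we must control. Since $\varphi$ is tops-only (established at the start of \textbf{Step 2}), I write $(a,P_{-i})$ for a profile in which voter $i$'s peak is $a$, and the basic move is: if $\pi$ is a vertex-path of $G_\approx^A(\mathbb D)$ with endpoints $a_s$ and $a_t$ and $a_k\notin\pi$, then Lemma \ref{lem:uncompromising} yields $\varphi_{a_k}(a_s,P_{-i})=\varphi_{a_k}(a_t,P_{-i})$. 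Fix $a_s,a_t\in M$ distinct and $a_k\notin\{a_s,a_t\}$; I split into cases on the location of $a_k$ relative to the thresholds.

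If $a_k\notin M$, then since $a_s,a_t\in M$ and $G_\approx^M(\mathbb D)$ is connected (Lemma \ref{lem:combination}(ii)), any vertex-path joining $a_s$ and $a_t$ inside $G_\approx^M(\mathbb D)$ misses $a_k$, and Lemma \ref{lem:uncompromising} applies. If $a_k\in\{a_{\underline k},a_{\overline k}\}$, I first claim $a_{\underline k}$ is not a cut vertex of $G_\approx^M(\mathbb D)$ (symmetrically for $a_{\overline k}$): by condition (ii) of Definition \ref{def:hybrid*}, every internal alternative of $M$ lies on some vertex-path of $G_\approx^M(\mathbb D)$ connecting $a_{\underline k}$ and $a_{\overline k}$; such a path is simple, so once it leaves $a_{\underline k}$ it never returns, whence all of its other vertices---in particular $a_{\overline k}$ and that internal alternative---lie in a single component of $G_\approx^M(\mathbb D)-a_{\underline k}$; as $a_{\overline k}$ is common to all these paths, every vertex of $M\setminus\{a_{\underline k}\}$ lies in the component of $a_{\overline k}$, so $G_\approx^M(\mathbb D)-a_{\underline k}$ is connected. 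Hence there is a vertex-path joining $a_s$ and $a_t$ inside $G_\approx^M(\mathbb D)$ that avoids $a_k$, and again Lemma \ref{lem:uncompromising} finishes this case.

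The main case is $a_k\in M\setminus\{a_{\underline k},a_{\overline k}\}$, i.e.\ $a_k$ lies strictly between the thresholds; here $a_k$ may separate $a_s$ from $a_t$ in $G_\approx^M(\mathbb D)$, so no avoiding path need be available, and I reduce instead to a random-dictatorship calculation. For each voter $j\ne i$ with peak in $L\setminus\{a_{\underline k}\}$, replace $P_j$ by a preference with peak $a_{\underline k}$, moving the peak one adjacency at a time along the line $\langle a_1,a_{\underline k}\rangle$ (a vertex-path of $G_\approx^A(\mathbb D)$, by Lemma \ref{lem:coincidence}); likewise move each peak in $R\setminus\{a_{\overline k}\}$ to $a_{\overline k}$ along $\langle a_{\overline k},a_m\rangle$. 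By Lemma \ref{lem:uncompromising} each such single-adjacency replacement leaves $\varphi_{a_k}$ unchanged, because those lines meet $M$ only at the thresholds while $a_k$ is not a threshold; performing all of them with voter $i$'s peak held at $a_s$, and then again with it held at $a_t$, produces a profile $P'_{-i}\in\overline{\mathbb D}^{\,n-1}$ with $\varphi_{a_k}(a_s,P_{-i})=\varphi_{a_k}(a_s,P'_{-i})$ and $\varphi_{a_k}(a_t,P_{-i})=\varphi_{a_k}(a_t,P'_{-i})$. As $a_s,a_t\in M$, both $(a_s,P'_{-i})$ and $(a_t,P'_{-i})$ lie in $\overline{\mathbb D}^{\,n}$, so Lemma \ref{lem:rd} gives $\varphi_{a_k}(a_s,P'_{-i})=\sum_{j\ne i:\,r_1(P'_j)=a_k}\varepsilon_j=\varphi_{a_k}(a_t,P'_{-i})$ (using $a_k\notin\{a_s,a_t\}$), and chaining the equalities yields the identity. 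The step I expect to be the real obstacle is precisely this last case: one must recognize that the naive ``choose an avoiding path'' argument breaks for an internal cut vertex $a_k$, and that the correct remedy---flattening the remaining voters onto the thresholds, which is innocuous for an internal $a_k$ but not for a threshold one, and then invoking the random-dictatorship structure on $\overline{\mathbb D}$ from Lemma \ref{lem:rd}---is where the full hybrid\textsuperscript{$\ast$} hypotheses of Definition \ref{def:hybrid*} come into play, rather than mere connectedness.
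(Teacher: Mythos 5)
Your proposal is correct and follows essentially the same route as the paper's proof: the uncompromising property of Lemma \ref{lem:uncompromising} along vertex-paths in $G_{\approx}^M(\mathbb{D})$ that avoid $a_k$, a separate argument for the thresholds $a_{\underline{k}},a_{\overline{k}}$ based on the fact (from Definition \ref{def:hybrid*}) that every internal alternative of $M$ lies on a vertex-path joining the two thresholds, and, for an internal $a_k\in[a_{\underline{k}+1},a_{\overline{k}-1}]$, flattening the remaining voters' peaks onto $a_{\underline{k}}$ and $a_{\overline{k}}$ along the unique lines of Lemma \ref{lem:coincidence} and then invoking the random-dictatorship behaviour on $\overline{\mathbb{D}}$ from Lemma \ref{lem:rd}. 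The only differences are organizational (you case directly on the location of $a_k$ rather than first fixing a connecting path, and phrase the threshold case as a cut-vertex argument rather than a concatenation of two avoiding paths), and your diagnosis of where the avoiding-path idea fails is exactly the paper's.
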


\begin{proof}
	Clearly, $\varphi$ satisfies the tops-only property.
	Moreover, Lemma \ref{lem:rd} shows that $\varphi$ behaves like a random dictatorship on $M$.\medskip
	
	\noindent
	\textsc{Claim 1}: We have that (i) if $a_{\underline{k}} \notin \{a_s, a_t\}$,
then $\varphi_{a_{\underline{k}}}(a_s, P_{-i}) = \varphi_{a_{\underline{k}}}(a_{t}, P_{-i})$, and (ii)
if $a_{\overline{k}} \notin \{a_s, a_t\}$, then $\varphi_{a_{\overline{k}}}(a_s, P_{-i}) = \varphi_{a_{\overline{k}}}(a_{t}, P_{-i})$.
\medskip
	
	By symmetry, we focus on showing item (i).
	Note that if there exists a vertex-path in $G_{\approx}^M(\mathbb{D})$ that connects $a_s$ and $a_t$ and excludes $a_{\underline{k}}$, then
	the uncompromising property implies $\varphi_{a_{\underline{k}}}(a_s, P_{-i}) = \varphi_{a_{\underline{k}}}(a_{t}, P_{-i})$.
	Therefore, to complete the verification, we construct such a vertex-path.
	If $a_s \neq a_{\overline{k}}$, then $a_s \notin \{a_{\underline{k}}, a_{\overline{k}}\}$.
Then, by Definition \ref{def:hybrid*}, we have a vertex-path $\mathcal{P}$ in $G_{\approx}^M(\mathbb{D})$ that connects $a_{\underline{k}}$ and $a_{\overline{k}}$, and includes $a_s$. Thus, the vertex-path $\langle a_s, a_{\overline{k}}\rangle^{\mathcal{P}}$ connects $a_{s}$ and $a_{\overline{k}}$, and excludes $a_{\underline{k}}$.
    If $a_s = a_{\overline{k}}$, the null vertex-path $(a_s)$ connects $a_{s}$ and $a_{\overline{k}}$, and excludes $a_{\underline{k}}$.
	Overall, we have a vertex-path in $G_{\approx}^M(\mathbb{D})$ that connects $a_{s}$ and $a_{\overline{k}}$, and excludes $a_{\underline{k}}$.
	Similarly, we have a vertex-path in $G_{\approx}^M(\mathbb{D})$ that connects $a_{t}$ and $a_{\overline{k}}$, and excludes $a_{\underline{k}}$.
	According to these two vertex-paths, we can construct a vertex-path in $G_{\approx}^M(\mathbb{D})$ that connects $a_s$ and $a_t$, and excludes $a_{\underline{k}}$, as required.
	This completes the verification of the claim.

	\medskip
	
	Since $a_s, a_t \in M$ and $G_{\approx}^M(\mathbb{D})$ is a connected subgraph,
    there exists a vertex-path $(x_1, \dots, x_p)$ in $G_{\approx}^M(\mathbb{D})$ connecting $a_s$ and $a_t$.
    According to the vertex-path $(x_1, \dots, x_p)$, the uncompromising property first implies $\varphi_{a_k}(a_s, P_{-i}) = \varphi_{a_k}(a_t, P_{-i})$ for all $a_k \notin \{x_{k}\}_{k=1}^{p}$.
    Therefore, to complete the proof, it suffices to show $\varphi_{x_k}(a_s, P_{-i}) = \varphi_{x_k}(a_t, P_{-i})$ for all $k = 2, \dots, p-1$.
    If $x_k \in \{a_{\underline{k}},  a_{\overline{k}} \}$, it follows immediately from Claim 1 that $\varphi_{x_k}(a_s, P_{-i}) = \varphi_{x_k}(a_t, P_{-i})$.
    Hence, we let $\Theta =\{x_2, \dots, x_{p-1}\}\backslash \{a_{\underline{k}}, a_{\overline{k}}\}$ and show $\varphi_{z}(a_s, P_{-i}) = \varphi_{z}(a_t, P_{-i})$ for all $z \in \Theta$.
	
	For notational convenience, let $i = n$.
	We partition $\{1, \dots, n-1\}$ into three parts: $\underline{S} = \{1, \dots, j\}$,
	$\overline{S} = \{j+1, \dots, l\}$ and $\hat{S} = \{l+1, \dots, n-1\}$, and assume w.l.o.g that
	$r_1(P_{1}), \dots, r_1(P_j) \in [a_1, a_{\underline{k}-1}]$,
	$r_1(P_{j+1}), \dots, r_1(P_{l}) \in [a_{\overline{k}+1},  a_m]$ and
	$r_1(P_{l+1}), \dots, r_1(P_{n-1}) \in [a_{\underline{k}}, a_{\overline{k}}]$.
	Note that if $l=0$, all voters' preference peaks according to the profiles $(a_s, P_{-i})$ and $(a_t, P_{-i})$ are located in $M$, and then
    Lemma \ref{lem:rd} implies $\varphi_{z}(a_s, P_{-n}) = \varphi_{z}(a_t, P_{-n})$ for all $z \in \Theta$.
	Next, assume $l>0$.
	We construct the following preference profiles:
\begin{align*}
P^{\eta} =& \Big(P_{1}, \dots, P_{\eta}, \frac{a_{\underline{k}}}{~\{\eta+1, \dots, j\}~}, \frac{~a_{\overline{k}}~}{\overline{S}}, P_{\hat{S}}, a_{s}\Big)\; \textrm{for all}\; \eta = 0, 1, \dots, j,\;\textrm{and} \\
P^{\nu} =& \Big(P_{\underline{S}},  P_{j+1}, \dots, P_{\nu}, \frac{a_{\overline{k}}}{~\{\nu+1, \dots, l\}~}, P_{\hat{S}}, a_{s}\Big)\; \textrm{for all}\; \nu = j+1, \dots, l.
\end{align*}
    Note that $P^{0} = \big(\frac{~a_{\underline{k}}~}{\underline{S}}, \frac{~a_{\overline{k}}~}{\overline{S}}, P_{\hat{S}}, a_{s}\big)$ and
    $P^{l} = (a_s, P_{-n})$. In other words, from $P^{0}$ to $P^{l}$, we first change step by step voter $\eta$'s preference that has the peak $a_{\underline{k}}$, back to $P_{\eta}$, and then change step by step voter $\nu$'s preference that has the peak $a_{\overline{k}}$, back to $P_{\nu}$.

	Given arbitrary $0 \leq \eta < j$, consider $P^{\eta}$ and $P^{\eta+1}$.
	Note that voter $\eta+1$ has the preference peak $a_{\underline{k}}$ at the profile $P^{\eta}$,
	and has the preference peak $r_1(P_{\eta+1}) = a_k \prec a_{\underline{k}}$ at the profile $P^{\eta+1}$.
	By Lemma \ref{lem:coincidence}, $(a_k, a_{k+1}, \dots, a_{\underline{k}})$ is the unique vertex-path in $G_{\approx}^A(\mathbb{D})$ that connects $a_k$ and $a_{\underline{k}}$, and excludes all alternatives of $\Theta$.
	Then, the uncompromising property implies
	$\varphi_{z}(P^{\eta}) = \varphi_{z}(P^{\eta+1})$ for all $z \in \Theta$.
	Therefore, we have $\varphi_{z}(P^{0}) = \dots = \varphi_{z}(P^{j})$ for all $z \in \Theta$.
    Next, given arbitrary $j\leq \nu< l$, consider $P^{\nu}$ and $P^{\nu+1}$.
	Note that voter $\nu+1$ has the preference peak $a_{\overline{k}}$ at $P^{\nu}$,
	and has the preference peak $r_1(P_{\nu+1}) = a_k \succ a_{\overline{k}}$ at $P^{\nu+1}$.
	By Lemma \ref{lem:coincidence}, $(a_{\underline{k}}, \dots, a_{k-1}, a_k)$ is the unique vertex-path in $G_{\approx}^A(\mathbb{D})$ that connects $a_{\overline{k}}$ and $a_k$, and excludes all alternatives of $\Theta$.
	Then, the uncompromising property implies
	$\varphi_{z}(P^{\nu}) = \varphi_{z}(P^{\nu+1})$ for all $z \in \Theta$.
	Therefore, we have $\varphi_{z}(P^{j}) = \dots = \varphi_{z}(P^{l})$ for all $z \in \Theta$.
	In conclusion, $\varphi_{z}\big(\frac{~a_{\underline{k}}~}{\underline{S}}, \frac{~a_{\overline{k}}~}{\overline{S}}, P_{\hat{S}}, a_{s}\big)
	= \varphi_{z}(P^{0}) =\dots = \varphi_{z}(P^{l}) = \varphi_{z}(a_s, P_{-n})$ for all $z \in \Theta$.
    Symmetrically, we also derive
	$\varphi_{z}\big(\frac{~a_{\underline{k}}~}{\underline{S}}, \frac{~a_{\overline{k}}~}{\overline{S}}, P_{\hat{S}}, a_t\big) = \varphi_{z}(a_t, P_{-n})$
	for all $z \in \Theta$.
	Last, since Lemma \ref{lem:rd} implies
	$\varphi_{z}\big(\frac{~a_{\underline{k}}~}{\underline{S}}, \frac{~a_{\overline{k}}~}{\overline{S}}, P_{\hat{S}}, a_s\big) =\varphi_{z}\big(\frac{~a_{\underline{k}}~}{\underline{S}}, \frac{~a_{\overline{k}}~}{\overline{S}}, P_{\hat{S}}, a_t\big)$ for all $z \in \Theta$,
we have $\varphi_{z}(a_s, P_{-n})=\varphi_{z}(a_t, P_{-n})$ for all $z \in \Theta$, as required.
This proves the lemma and completes the verification of \textbf{Step 2}.
\end{proof}

\noindent
\textbf{Step 3} consists of one lemma, and completes the verification of the necessity part.
We fix a unanimous and strategy-proof RSCF $\varphi: \mathbb{D}^n \rightarrow \Delta(A)$, where $\mathbb{D}$ is the regular domain of Theorem \ref{thm:rulecharacterization}.
Clearly, $\mathbb{D}$ is a minimally rich domain and a $(\underline{k}, \overline{k})$-hybrid domain, and $\varphi$ satisfies the tops-only property.
Accordingly, we can construct a well defined RSCF $\psi$ on the hybrid domain $\mathbb{D}_{\prec}(\underline{k}, \overline{k})$:
    given $(P_1, \dots, P_n) \in \big[\mathbb{D}_{\prec}(\underline{k}, \overline{k})\big]^n$,
    \begin{align*}
    \psi(P_1, \dots, P_n) = \varphi(\bar{P}_1, \dots, \bar{P}_n) \;\textrm{for all}\; (\bar{P}_1, \dots, \bar{P}_n) \in \mathbb{D}^n\; \textrm{such that}
     ~\,r_1(\bar{P}_1) = r_1(P_1), \dots, r_1(\bar{P}_n) = r_1(P_n).
    \end{align*}
    Evidently, by construction, $\psi$ inherits unanimity and the tops-only property of $\varphi$.
    According to Lemma \ref{lem:PFBR} in \textbf{Step 1}, we know that if $\psi$ is locally strategy-proof, then $\psi$ is a $(\underline{k}, \overline{k})$-PFBR.
    Moreover importantly, note that if $\psi$ is shown to be a $(\underline{k}, \overline{k})$-PFBR,
    the construction of $\psi$ immediately implies that $\varphi$ is a $(\underline{k}, \overline{k})$-PFBR as well.
    Therefore, to complete the verification of \textbf{Step 3}, we show in the following lemma that $\psi$ is locally strategy-proof.

\begin{lemma}
RSCF $\psi$ is locally strategy-proof.
\end{lemma}

\begin{proof}
Fix arbitrary $i \in N$, $P_i, P_i' \in \mathbb{D}_{\prec}(\underline{k}, \overline{k})$ with $P_{i} \sim P_{i}'$, and $P_{-i} \in \left[\mathbb{D}_{\prec}(\underline{k}, \overline{k})\right]^{n-1}$.
We that $\psi(P_i, P_{-i})$ stochastically dominates $\psi(P_i', P_{-i})$ according to $P_{i}$.
Let $r_1(P_{i}) = a_s$ and $r_1(P_{i}') = a_t$.
If $a_s = a_t$, the tops-only property of $\psi$ implies $\psi(P_i, P_{-i}) = \psi(P_i, P_{-i})$.
Next, assume $a_s \neq a_t$.
Then, $P_{i} \sim P_{i}'$ implies $r_1(P_{i}) = r_{2}(P_{i}') = a_s$, $r_1(P_{i}') = r_2(P_{i}) = a_t$ and $r_{k}(P_{i}) = r_{k}(P_{i}')$ for all $k = 3, \dots, m$.
Then,
it suffices to show $\psi_{a_s}(P_i, P_{-i}) \geq \psi_{a_s}(P_i', P_{-i})$ and
$\psi_{a_k}(P_i, P_{-i}) = \psi_{a_k}(P_i', P_{-i})$ for all $a_k \notin \{a_s, a_t\}$.
Since both $\mathbb{D}_{\prec}(\underline{k}, \overline{k})$ and $\mathbb{D}$ are minimally rich,
we have $\bar{P}_j \in \mathbb{D}$ for each $j \neq i$ such that $r_1(\bar{P}_j) = r_1(P_j)$.
Since $P_i, P_i' \in \mathbb{D}_{\prec}(\underline{k}, \overline{k})$,
$r_1(P_{i}) = a_s$, $r_1(P_{i}') = a_t$ and $P_{i} \sim P_{i}'$, we know that $a_s$ and $a_t$ are strongly connected (according to $\mathbb{D}_{\prec}(\underline{k}, \overline{k})$).
Then, the definition of $\mathbb{D}_{\prec}(\underline{k}, \overline{k})$ implies that one of the three cases must occur:
(i) $a_s, a_t \in L$ and $|s-t| = 1$, (ii) $a_s, a_t \in R$ and $|s-t| = 1$, or (iii) $a_s, a_t \in M$.
The first two cases are symmetric, and hence we focus on the verification of the first case.
In the first case, since $\mathbb{D}$ is a $(\underline{k}, \overline{k})$-hybrid domain,
$|s-t| = 1$ implies that $a_s$ and $a_t$ are also strongly connected (according to $\mathbb{D}$).
Hence, we have $\bar{P}_i, \bar{P}_{i}' \in \mathbb{D}$ such that $r_1(\bar{P}_i) =r_2(\bar{P}_{i}')= a_s$, $r_1(\bar{P}_{i}')=r_2(\bar{P}_i) =a_t$ and $r_k(\bar{P}_i) =r_k(\bar{P}_i') $ for all $k = 3, \dots, m$.
Then, by the construction of $\psi$ and strategy-proofness of $\varphi$, we have
$\psi_{a_s}(P_i, P_{-i}) = \varphi_{a_s}(\bar{P}_i, \bar{P}_{-i})  \geq \varphi_{a_s}(\bar{P}_i', \bar{P}_{-i}) = \psi_{a_s}(P_i', P_{-i})$, and
$\psi_{a_k}(P_i, P_{-i}) = \varphi_{a_k}(\bar{P}_{i}, \bar{P}_{-i}) = \varphi_{a_k}(\bar{P}_{i}', \bar{P}_{-i})= \psi_{a_k}(P_i', P_{-i})$ for all $a_k \notin \{a_s, a_t\}$, as required.
Last, let $a_s, a_t \in M$. Fix $\bar{P}_i ,\bar{P}_i' \in \mathbb{D}$ with $r_1(\bar{P}_i) = a_s$ and $r_1(\bar{P}_{i}')=a_t$ by minimal richness.
Then, by the construction of $\psi$ and strategy-proofness of $\varphi$,
we have $\psi_{a_s}(P_i, P_{-i}) =\varphi_{a_s}(\bar{P}_{i}, \bar{P}_{-i})\geq \varphi_{a_s}(\bar{P}_{i}', \bar{P}_{-i}) = \psi_{a_s}(P_i, P_{-i})$, as required.
Moreover, by the construction of $\psi$ and Lemma \ref{lem:refineuncompromise} established on $\varphi$,
we have $\psi_{a_k}(P_i, P_{-i}) =\varphi_{a_k}(\bar{P}_i, \bar{P}_{-i})=\varphi_{a_k}(\bar{P}_i', \bar{P}_{-i}) = \psi_{a_k}(P_i', P_{-i})$ for all $a_k \notin \{a_s, a_t\}$, as required.
Therefore, $\psi$ is locally strategy-proof.
This completes the verification of the lemma and hence proves the necessity part.
\end{proof}

\section{Two Clarifications}\label{app:clarification}

\noindent
\textbf{Clarification 1}:
For all $1 \leq \underline{k}< \overline{k}\leq m$,
the $(\underline{k}, \overline{k})$-hybrid domain $\mathbb{D}_{\prec}(\underline{k}, \overline{k})$ is a no-restoration domain.\medskip

\begin{proof}
Note that Propositions 4.1 and 4.2 of \citet{S2013} imply that both the complete domain and the single-peaked domain are no-restoration domains.
Since the hybrid domain $\mathbb{D}_{\prec}(\underline{k}, \overline{k})$
expands to the complete domain when $\overline{k}-\underline{k} = m-1$ and reduces to the single-peaked domain when $\overline{k}-\underline{k} = 1$,
the fact holds evidently in these two cases.
Henceforth, we assume $1<\overline{k}-\underline{k}<m-1$.
We introduce some new notation.
Given a preference $P_i$, let $a_sP_i!a_t$ denote that $a_sP_ia_t$ and there exists no $a_r \in A\backslash \{a_s, a_t\}$ such that $a_sP_ia_r$ and $a_rP_ia_t$.
		
For any pair of distinct preference $P_i, P_i' \in \mathbb{D}_{\prec} (\underline{k}, \overline{k})$,
if we can identify some pair of alternatives $a_p, a_q \in A$ such that
$a_pP_i!a_q$ and $a_qP_i'a_p$, construct a preference $P_i''$ by locally switching $a_p$ and $a_q$ in $P_i$ (i.e., $P_i \sim P_i''$ and $a_qP_i!a_p$), and verify $P_i'' \in \mathbb{D}_{\prec} (\underline{k}, \overline{k})$,
then we have a $(\underline{k}, \overline{k})$-hybrid preference $P_i''$ that is one-step closer to $P_i'$ than $P_i$.
Thus, given a pair of distinct preference $P_i, P_i' \in \mathbb{D}_{\prec} (\underline{k}, \overline{k})$,
by recursively applying the construction of new preferences, we eventually generate a path
that connects $P_i$ and $P_i'$, and more importantly has no $\{a_s, a_t\}$-restoration for \emph{all} $a_s, a_t \in A$.
This of course implies that the hybrid domain $\mathbb{D}_{\prec}(\underline{k}, \overline{k})$ is a no-restoration domain.
Therefore, in the rest of proof, we fix two distinct preferences
$P_i, P_i' \in \mathbb{D}_{\prec} (\underline{k}, \overline{k})$.
Then, we will identify some pair of alternatives $a_p, a_q \in A$ such that
$a_pP_i!a_q$ and $a_qP_i'a_p$, construct a preference $P_i''$ by locally switching $a_p$ and $a_q$ in $P_i$, and show that $P_i''$ is a $(\underline{k}, \overline{k})$-hybrid preference.
For notational convenience, let $r_1(P_i) = a_s$ and $r_1(P_i') = a_t$.
We first provide an observation that will be repeatedly applied.
\begin{observation}\label{obs}\rm
Given $a_p, a_q \in A$ such that $a_pP_i!a_q$ (it is possible that $a_p = a_s$),
let $P_i''$ be a preference by locally switching $a_p$ and $a_q$ in $P_i$ (i.e., $P_i \sim P_i''$ and $a_qP_i''!a_p$).
If one of the following three conditions is satisfied:
\begin{itemize}
\item[\rm (i)] $r_1(P_i) = r_1(P_i'')$, and $a_p \prec a_s \prec a_q$ or $a_q \prec a_s \prec a_p$,
\item[\rm (ii)] $r_1(P_i) = r_1(P_i'') \in M$, $\{a_p, a_q\} \nsubseteq L$ and $\{a_p, a_q\} \nsubseteq R$, and
\item[\rm (iii)] $r_1(P_i) \neq r_1(P_i'')$, and either $\{a_p, a_q\} \subseteq L$ and $|p-q| = 1$, or $\{a_p, a_q\} \subseteq R$ and $|p-q| = 1$, or $\{a_p, a_q\} \subseteq M$,
\end{itemize}
then it is true that $P_i''  \in \mathbb{D}_{\prec} (\underline{k}, \overline{k})$.
\hfill$\Box$
\end{observation}

According to $r_1(P_i) = a_s$ and $r_1(P_i') = a_t$, there are three situations: $a_s = a_t$, $a_s \prec a_t$ and $a_t \prec a_s$.
Note that the last two situations are symmetric, and we hence focus on the verification of the first two situations.
		
		We first assume $a_s = a_t$.
		We identify $1< k \leq m$ such that $r_l(P_i) = r_l(P_i')$ for all $l = 1, \dots, k-1$, and
		$r_k(P_i) \neq r_k(P_i')$. Let $r_k(P_i') = a_q$.
Then, it is clear that $a_q = r_{\nu}(P_i)$ for some $k<\nu\leq m$.
		Meanwhile, let $r_{\nu-1}(P_i) = a_p$.
        It is clear that $a_qP_i'a_p$.
		We then generate a preference $P_i''$ by locally switching $a_p$ and $a_q$ in $P_i$.
		Thus, we know $P_i \sim P_i''$, $a_pP_i!a_q$, $a_qP_i''!a_p$ and $a_qP_i'a_p$.
        Note that $r_1(P_i) = r_1(P_i'') = r_1(P_i')$.
		We next show $P_i'' \in \mathbb{D}_{\prec}(\underline{k}, \overline{k})$.
		Suppose not, i.e., $P_i'' \notin \mathbb{D}_{\prec}(\underline{k}, \overline{k})$.
		On the one hand, since $P_i$ and $P_i''$ share the same peak and differ exactly on the relative rankings of $a_p$ and $a_q$,
		$P_i \in \mathbb{D}_{\prec}(\underline{k}, \overline{k})$ and $P_i'' \notin \mathbb{D}_{\prec}(\underline{k}, \overline{k})$ imply that
        $a_qP_i''a_p$ violates the definition of $(\underline{k}, \overline{k})$-hybridness.
		On the other hand, since $P_i''$ and $P_i'$ share the same peak and the same relative ranking of $a_p$ and $a_q$,
		$P_i' \in \mathbb{D}_{\prec} (\underline{k}, \overline{k})$ implies that $a_qP_i''a_p$ does not violate the definition of $(\underline{k}, \overline{k})$-hybridness. Contradiction.
		Therefore, $P_i'' \in \mathbb{D}_{\prec}(\underline{k}, \overline{k})$.
		
		Next, we assume $a_s \prec a_t$.
		According to $a_{\underline{k}}$ and $a_{\overline{k}}$, we consider the four possible cases:
		(1) $a_s \prec a_{\underline{k}}$,
		(2) $a_{\overline{k}}\preceq a_s$,
		(3) $a_{\underline{k}}\preceq a_s \prec a_{\overline{k}}\preceq a_t$ and
		(4) $a_{\underline{k}}\preceq a_s \prec a_t \prec a_{\overline{k}}$.

		In case (1), we notice $a_s\prec a_{s+1} \preceq a_{\underline{k}}$ and $a_s\prec a_{s+1} \preceq a_t$.
		Let $a_{s+1} = r_k(P_i)$ for some $1< k \leq m$ and $r_{k-1}(P_i) = a_p$.
		Thus, $a_pP_i!a_{s+1}$.
		Since $r_1(P_i) = a_s \prec a_{\underline{k}}$ and $P_i$ is $(\underline{k}, \overline{k})$-hybrid,
        $a_pP_ia_{s+1}$ implies $a_p\preceq a_s$.
		Thus, $a_p\preceq a_s\prec a_{s+1} \preceq a_{\underline{k}}$ and $a_p\preceq a_s\prec a_{s+1} \prec a_t$,
		which imply $a_{s+1}P_i'a_p$ by the definition of $(\underline{k}, \overline{k})$-hybridness.
		By locally switching $a_p$ and $a_{s+1}$ in $P_i$, we generate a preference $P_i''$.
		Thus, $P_i \sim P_i''$, $a_pP_i!a_{s+1}$, $a_{s+1}P_i''!a_p$ and $a_{s+1}P_i'a_p$.
		We last show $P_i'' \in \mathbb{D}_{\prec}(\underline{k}, \overline{k})$.
		If $r_1(P_i'') = r_1(P_i) = a_s$, according to $a_p \prec a_s \prec a_{s+1}$,
		Observation \ref{obs}(i) implies $P_i'' \in \mathbb{D}_{\prec}(\underline{k}, \overline{k})$.
		If $r_1(P_i'') \neq r_1(P_i)$, it is true that $r_1(P_i) = a_s = a_p$, $r_1(P_i'') = a_{s+1}$, $a_p, a_{s+1} \in L$ and $|p-(s+1)| = 1$.
Then, Observation \ref{obs}(iii) implies $P_i'' \in \mathbb{D}_{\prec}(\underline{k}, \overline{k})$.
		
		The verification of case (2) is similar to that of case (1), and we hence omit it.
		
		In case (3), let $a_{\overline{k}} = r_k(P_i)$ for some $1< k \leq m$ and $r_{k-1}(P_i) = a_p$.
		Thus, $a_pP_i!a_{\overline{k}}$.
		Since $a_{\underline{k}} \preceq a_s \prec a_{\overline{k}}$ and $P_i$ is $(\underline{k}, \overline{k})$-hybrid,
		$a_pP_ia_{\overline{k}}$ implies $a_p \prec a_{\overline{k}}$.
		Thus, we know either $a_p\prec a_{\underline{k}} \prec a_{\overline{k}} \preceq a_t$ which implies $a_{\overline{k}}P_i'a_{\underline{k}}$ and
		$a_{\underline{k}}P_i'a_p$ by the definition of $(\underline{k}, \overline{k})$-hybridness,
		or $a_{\underline{k}} \preceq a_p \prec a_{\overline{k}} \preceq a_t$ which implies $a_{\overline{k}}P_i'a_p$ by the definition of $(\underline{k}, \overline{k})$-hybridness.
        Overall, $a_{\overline{k}}P_i'a_p$.
		By locally switching $a_p$ and $a_{\overline{k}}$ in $P_i$, we generate a preference $P_i''$.
		Thus, $P_i \sim P_i''$, $a_pP_i!a_{\overline{k}}$, $a_{\overline{k}}P_i''!a_p$ and $a_{\overline{k}}P_i'a_p$.
		We last show $P_i'' \in \mathbb{D}_{\prec}(\underline{k}, \overline{k})$.
		If $r_1(P_i'') = r_1(P_i) = a_s$, since  $a_{\overline{k}} \notin L$ and $a_p\notin  R$, Observation \ref{obs}(ii) implies $P_i'' \in \mathbb{D}_{\prec}(\underline{k}, \overline{k})$.
		If $r_1(P_i'') \neq r_1(P_i)$, it is true that $r_1(P_i)=a_s=a_p$, $r_1(P_i'') = a_{\overline{k}}$, and $\{a_p, a_{\overline{k}}\} \subseteq M$.
Then, Observation \ref{obs}(iii) implies $P_i'' \in \mathbb{D}_{\prec}(\underline{k}, \overline{k})$.
		
		In case (4), recall $r_1(P_i') = a_t$.
Let $a_t = r_k(P_i)$ for some $1< k \leq m$ and $r_{k-1}(P_i) = a_p$.
		By locally switching $a_p$ and $a_t$ in $P_i$, we generate a preference $P_i''$.
		Thus, $P_i \sim P_i''$, $a_pP_i!a_t$, $a_tP_i''!a_p$ and $a_tP_i'a_p$.
		We last show $P_i'' \in \mathbb{D}_{\prec}(\underline{k}, \overline{k})$.
		If $r_1(P_i'') = r_1(P_i) = a_s$, since $a_t \notin L\cup R$,
		Observation \ref{obs}(ii) implies $P_i'' \in \mathbb{D}_{\prec}(\underline{k}, \overline{k})$.
		If $r_1(P_i'') \neq r_1(P_i)$, it is true that $r_1(P_i) = a_s =a_p$, $r_1(P_i'') = a_t$ and $\{a_p, a_t\} \subseteq M$. Then, Observation \ref{obs}(iii) implies $P_i'' \in \mathbb{D}_{\prec}(\underline{k}, \overline{k})$.

Overall, we have constructed a $(\underline{k}, \overline{k})$-hybrid preference $P_i''$ by locally switching some alternatives $a_p$ and $a_q$ that are consecutively ranked in $P_i$ (i.e., $a_pP_i!a_q$) and oppositely ranked in $P_i'$ (i.e., $a_qP_i'a_p$).
This completes the verification.
\end{proof}

\medskip
\noindent
\textbf{Clarification 2}:
Let $\Omega$ be a set of linear orders over $A$.
Assume that $|\Omega| \geq 2$, and no pair of linear orders in $\Omega$ is a pair of complete reversed preferences.
The multiple single-peaked domain $\mathbb{D}_{\Omega}$ is a hybrid domain.\medskip
	
\begin{proof}
We can assume w.l.o.g.~that the natural order $\prec$ is contained in $\Omega$ by relabelling alternatives as necessary.
We label all linear orders as $\Omega = \{\prec_1, \dots, \prec_q\}$, where $q \geq 2$.
Given a linear order $\prec_v\,\in \Omega$, let $\min^{\prec_v}(A) \equiv a_s$ if $a_s \prec_v a_k$ for all $k \neq s$ and
$\max^{\prec_v}(A) \equiv a_t$ if $a_k \prec_v a_t$ for all $k \neq t$.
Then, we can assume w.l.o.g.~that for each $\prec_v \,\in \Omega$,
the index of $\min^{\prec_v}(A)$ is smaller than the index of $\max^{\prec_v}(A)$, i.e.,
$[\min^{\prec_v}(A) = a_s\; \textrm{and}\; \max^{\prec_v}(A) = a_t] \Rightarrow [s<t]$.\footnote{Suppose that we have a linear order $\prec'\, \in \Omega$ such that $\min^{\prec'}(A) = a_t$ , $\max^{\prec'}(A) = a_s$ and $t> s$. Then, we can induce a linear order $\prec''$ such that $\prec''$ and $\prec'$ are completely reversed, i.e., $[a_k \prec'' a_{k'}] \Leftrightarrow [a_{k'} \prec' a_k]$.
Then, we have $\min^{\prec''}(A) = a_s$ , $\max^{\prec''}(A) = a_t$ and $t> s$.
Meanwhile, since $\mathbb{D}_{\prec'} = \mathbb{D}_{\prec''}$, we can replace $\prec'$ in $\Omega$ by $\prec''$, and replace $\mathbb{D}_{\prec'}$ in $\mathbb{D}_{\Omega}$ by $\mathbb{D}_{\prec''}$.}
Now, according to $\Omega$, we find that one of the following two cases must occur:
\begin{itemize}
\item[\rm (1)] There exists $1\leq \tilde{k}< m-1$ such that $a_k \prec_v a_{k+1}$
for all $1 \leq k' < \tilde{k}$ and all $\prec_v \in \Omega$, and
$a_{s} \prec_{\eta} a_{\tilde{k}+1}$ for some $s>\tilde{k}+1$ and some $\prec_{\eta} \,\in \Omega$, or

\item[\rm (2)] there exists no $1\leq \tilde{k}< m-1$ such that $a_k \prec_v a_{k+1}$
for all $1 \leq k < \tilde{k}$ and $\prec_v \in \Omega$.
\end{itemize}
In the first case, let $\underline{k} = \tilde{k}$, while in the second case, let $\underline{k} = 1$.
Symmetrically, we can also find that one of the following two cases must occur:
\begin{itemize}
\item[\rm (a)] There exists $2< \hat{k}\leq m$ such that $a_{k-1} \prec_v a_k$
for all $\hat{k}< k \leq m$ and all $\prec_v \in \Omega$, and
$a_{\hat{k}-1} \prec_{\eta} a_{s}$ for some $s<\hat{k}-1$ and some $\prec_{\eta}\, \in \Omega$, or

\item[\rm (b)] there exists no $2< \hat{k}\leq m$ such that $a_{k-1} \prec_v a_k$
for all $\hat{k}< k \leq m$ and all $\prec_v \in \Omega$.
\end{itemize}
In case (a), let $\overline{k} = \hat{k}$, while in case (b), let $\overline{k} = m$.
Since $|\Omega| \geq 2$ and no two linear orders of $\Omega$ are completely reversed,
it must be the case that $\overline{k} -\underline{k}>1$.

According to the natural order $\prec$ and thresholds $a_{\underline{k}}$ and $a_{\overline{k}}$,
we refer to the $(\underline{k}, \overline{k})$-hybrid domain $\mathbb{D}_{\prec}(\underline{k}, \overline{k})$.
We show $\mathbb{D}_{\Omega}  \subseteq \mathbb{D}_{\prec}(\underline{k}, \overline{k})$.
It suffices to show $\mathbb{D}_{\prec_v} \subseteq \mathbb{D}_{\prec}(\underline{k}, \overline{k})$ for all $\prec_v \,\in \Omega$.
Fix an arbitrary $\prec_v \,\in \Omega$ and arbitrary $P_i \in \mathbb{D}_{\prec_v}$.
We show that $P_i$ is a $(\underline{k}, \overline{k})$-hybrid preference.
First, given $a_r, a_s \in [a_1, a_{\underline{k}}] \equiv L$, let $a_r \prec a_s \prec r_1(P_i)$ or
$r_1(P_i) \prec a_s \prec a_r$ hold.
Thus, case (1) must occur, and we hence have $a_r \prec_v a_s \prec_v r_1(P_i)$ or
$r_1(P_i) \prec_v a_s \prec_v a_r$, which consequently by single-peakedness w.r.t.~$\prec_v$ implies $a_sP_ia_r$.
Symmetrically, if $a_r, a_s \in [a_{\underline{k}},a_m] \equiv R$,
we have $[a_r \prec a_s \prec r_1(P_i)$ or $r_1(P_i) \prec a_s \prec a_r]$$\Rightarrow [a_sP_ia_r]$ as well.
This verifies condition (i) of Definition \ref{defn:hybriddomain}.
Next, given $r_1(P_i) \in L$ and $a_l \in [a_{\underline{k}+1}, a_{\overline{k}}] \equiv M\backslash \{a_{\underline{k}}\}$, we show $a_{\underline{k}}P_ia_l$.
If $r_1(P_i) = a_{\underline{k}}$, it is evident that $a_{\underline{k}}P_ia_l$.
Next, we consider the case that $r_1(P_i) \in L\backslash \{a_{\underline{k}}\}$.
Thus, case (1) must occur, and we hence have $r_1(P_i) \prec_v a_{\underline{k}}$ and $a_{\underline{k}} \prec_v a_l$, which consequently by single-peakedness w.r.t.~$\prec_v$ imply $a_{\underline{k}}P_ia_l$.
Symmetrically, if $r_1(P_i) \in R$, we have $[a_r \in M\backslash \{a_{\overline{k}}\}] \Rightarrow [a_{\overline{k}}P_ia_r]$.
This verifies condition (ii) of Definition \ref{defn:hybriddomain}.
Therefore, $P_i$ is a $(\underline{k}, \overline{k})$-hybrid preference, as required.

The multiple single-peaked domain $\mathbb{D}_{\Omega}$ satisfies the path-inclusion condition
since the single-peaked domain $\mathbb{D}_{\prec}$ is contained in $\mathbb{D}_{\Omega}$, and
induces the vertex-path $(a_1, \dots, a_k,a_{k+1}, \dots, a_m)$ in $G_{\approx}^A(\mathbb{D}_{\Omega})$ that includes all alternatives of $A$.
Last, we show the no-leaf condition.
For notational convenience, given $\prec_v\, \in \Omega$, let $a_s \prec_v!~ a_t$ denote that $a_s \prec_v a_t$, and there exists no $a_k$ such that $a_s \prec_v a_k$ and $a_k \prec a_t$.
Clearly, $a_k \prec!~ a_{k+1}$ for all $k = 1, \dots, m-1$.
If case (1) above occurs,
then according to the natural order $\prec \,\in \Omega$ and the alternative $a_{\underline{k}+1}$,
we must have another linear order $\prec_v \,\in \Omega$ and an alternative $a_s \in A$ with $s>\underline{k}+1$ such that $a_{\underline{k}} \prec_v!~ a_s$.
According to the single-peaked domains $\mathbb{D}_{\prec}$ and $\mathbb{D}_{\prec_v}$ included in $\mathbb{D}_{\Omega}$,
we know that $a_{\underline{k}}$ is strongly connected to both $a_{\underline{k}+1}$ and $a_k$.
Since we have shown $\mathbb{D}_{\Omega} \subseteq \mathbb{D}_{\prec}(\underline{k}, \overline{k})$,
$a_{\underline{k}} \approx a_{\underline{k}+1}$ and $a_{\underline{k}} \approx a_k$ imply $a_{\underline{k}+1}, a_k \in M$.
Therefore, $a_{\underline{k}}$ is not a leaf in $G_{\approx}^M(\mathbb{D}_{\Omega})$.
If case (2) above occurs, the inclusion of the natural order $\prec$ in $\Omega$ implies that there exists a linear order $\prec_v \,\in \Omega$ such that $\min^{\prec_v}(A) \neq a_1$.
Moreover, since $\max^{\prec_v}(A) \neq a_1$, we have $a_s, a_t \in A$ such that $a_s \prec_v!~ a_1$ and
$a_1 \prec_v!~ a_t$.
Then, according to the single-peaked domain $\mathbb{D}_{\prec_v}$ included in $\mathbb{D}_{\Omega}$,
we know that that $a_1$ is strongly connected to both $a_s$ and $a_t$.
Thus, it must be the case that $a_s, a_t \in M$.
Therefore, $a_1$ is not a leaf in $G_{\approx}^M(\mathbb{D}_{\Omega})$.
Symmetrically, according to cases (a) and (b) above, we can infer that $a_{\overline{k}}$ is not a leaf in $G_{\approx}^M(\mathbb{D}_{\Omega})$.
Last, note that the path-inclusion condition implies that for each $\underline{k}< k < \overline{k}$,
$a_k$ is never a leaf in $G_{\approx}^M(\mathbb{D}_{\Omega})$. Therefore, the no-leaf condition is satisfied.
In conclusion, $\mathbb{D}_{\Omega}$ is a $(\underline{k}, \overline{k})$-hybrid domain.
\end{proof}
	
}	

\end{document}